\documentclass{lmcs} %

\keywords{non-wellfounded proofs, recursive coalgebras, well-founded coalgebras,
familial functors, polynomial functors}

\usepackage{stackengine} %
\usepackage{amssymb} %
\usepackage{stmaryrd} %
\usepackage{bussproofs}
\usepackage[all]{xy} \SelectTips{cm}{}  %
\usepackage{quiver}

\usepackage{hyperref}
\theoremstyle{plain} %

\newif\ifdraft\drafttrue

\ifdraft
\usepackage{todonotes}

\newcommand{\conf}[1]{\textcolor{red}{#1\%}}
\else
\usepackage[disable]{todonotes}

\newcommand{\conf}[1]{}
\fi

\newcommand{\Jdg}{{\mathrm{Jdg}}}
\newcommand{\OJdg}[1]{\mathrm{Jdg}^{#1}}
\newcommand{\fml}{\mathrm{fml}}

\newcommand{\op}{\mathrm{op}}
\newcommand{\Coalg}[1]{\mathrm{Coalg}(#1)}
\newcommand{\Alg}[1]{\mathrm{Alg}(#1)}
\newcommand{\ol}[1]{\overline{#1}}

\newcommand{\prem}{\mathrm{prem}}
\newcommand{\ccl}{\mathrm{ccl}}

\newcommand{\Set}{\mathbf{Set}}
\newcommand{\Cat}{\mathbf{Cat}}

\newcommand{\prog}{\mathrm{prog}}
\newcommand{\prffunc}[1]{\polyfunc{#1}}
\newcommand{\arfunc}{\mathrm{ar}}
\newcommand{\id}{\mathrm{id}}
\newcommand{\cat}[1]{\mathbb{#1}}
\newcommand{\nat}{\mathbb{N}}
\DeclareMathOperator*{\colim}{colim}
\newcommand{\Gopfib}[1]{q_{#1}}
\newcommand{\GopfibTotal}[1]{\int \! {#1}}
\newcommand{\Gfib}[1]{p_{#1}}
\newcommand{\GfibTotal}[1]{\int \! {#1}}
\newcommand{\tightwidehat}[1]{\stackon[-8pt]{#1}{\widehat{\phantom{#1}}}}
\newcommand{\Psh}[1]{\widehat{#1}}
\newcommand{\PshTight}[1]{\tightwidehat{#1}}
\newcommand{\PshNonSmash}[1]{\widehat{#1}}

\newcommand{\traceto}[1]{
  \mathrel{\raisebox{-0.55ex}{$\stackrel{#1}{\rightsquigarrow}$}}
}

\newcommand{\Poly}{\mathbf{VPoly}}
\newcommand{\polyfunc}[1]{F[{#1}]}
\newcommand{\node}[2]{#1  \! \in  \! #2}
\newcommand{\Sub}[1]{\mathrm{Sub}(#1)}
\newcommand{\sstar}{{\scalebox{0.6}{$\star$}}} %

\newcommand{\pullbackmark}[2]{\save ;p+<.8pc,0pc>:(0,-1)::%
(#1) *{\phantom{Z}} %
;p+(#2)-(0,0) **@{-}%
;p-(#1)+(0,0) *{\phantom{Z}} **@{-} \restore}
\newcommand{\rloop}[2][-]{\save \POS!R(.7) \ar@(ru,rd)^#1{#2} \restore}
\newcommand{\lloop}[2][-]{\save \POS!L(.7) \ar@(lu,ld)_#1{#2} \restore}
\newcommand{\uloop}[2][-]{\save \POS!U(.7) \ar@(lu,ru)^(.8){#2} \restore}

\stackMath
\newcommand\xxrightarrow[2][]{\mathrel{%
  \setbox2=\hbox{\stackon{\scriptstyle#1}{\scriptstyle#2}}%
  \stackunder[0pt]{%
    \xrightarrow{\makebox[\dimexpr\wd2\relax]{$\scriptstyle#2$}}%
  }{%
   \scriptstyle#1\,%
  }%
}}
\begin{document}

\title[Coalgebraic Non-Wellfounded Proofs]{Coalgebraic Non-Wellfounded Proofs: {\\}Recursiveness and GTC}

\author[M.~Kori]{Mayuko Kori}

\address{Research Institute for Mathematical Sciences, Kyoto University, Japan}	%
\email{mkori@kurims.kyoto-u.ac.jp}  %

\begin{abstract}
\noindent Non-wellfounded proof systems impose a global condition called the \emph{global trace condition} (GTC) on 
a derivation tree to ensure soundness. Providing a categorical characterisation 
of the GTC that guarantees soundness remains challenging due to the global, 
non-compositional nature of these conditions and the infinitary structure of 
non-wellfounded proofs. 
We develop a coalgebraic framework for non-wellfounded proof systems
where
derivation trees are modelled as coalgebras of generalised polynomial functors on presheaves.
Since the GTC is a constraint on infinite paths in derivation graphs, we 
employ graphs of coalgebras and formulate the GTC coalgebraically as a 
condition on these graphs. 
Soundness is then formulated as the existence of a unique coalgebra-to-algebra 
morphism from a coalgebra representing a derivation graph to an algebra specifying semantics. 

Within this framework,
we characterise the GTC via
recursive coalgebras:
a coalgebra satisfies the GTC if and only if its image under a suitable adjoint 
is recursive.
Under an appropriate assumption on the given semantic algebra,
this yields soundness, that is,
every proof
admits a unique coalgebra-to-algebra morphism. 
We demonstrate our framework through a non-wellfounded proof system for 
the modal $\mu$-calculus, one for higher-order fixed-point logics,
and a
non-wellfounded variant of Santocanale's circular proof system in
\(\mu\)-bicomplete categories.
\end{abstract}

\maketitle

\section*{Introduction}
In recent years,
non-wellfounded proof systems, including cyclic proofs, have gained attention as proof systems for logics with (co)induction.
Unlike ordinary finitary proofs whose derivation trees are finite,
non-wellfounded proofs may have infinite derivation trees.
As a consequence, soundness can no longer be established by induction on the depth of derivation trees. To ensure soundness, these proof systems 
impose an additional global condition on derivation trees,
known as the \emph{global trace condition} (GTC). 

This increasing interest has also led to the development of various categorical approaches to non-wellfounded and cyclic proofs.
Fortier and Santocanale
developed a categorical semantics for cyclic proofs 
in the setting of $\mu$-bicomplete categories~\cite{DBLP:conf/csl/FortierS13}.
More recently, Ehrhard et al.~proposed a categorical semantics for non-wellfounded and cyclic proofs in linear logic with fixed points~\cite{DBLP:conf/lics/EhrhardJS25}.
From a different perspective,
Afshari and Wehr introduced abstract cyclic proofs~\cite{AfshariW22}, in which the GTC is described categorically using trace categories.
Their work focuses mainly on the efficient algorithmic checking of the GTC,
rather than on soundness.

While various categorical approaches have been developed,
it still remains a challenge to
provide
a categorical characterisation of the GTC that ensures soundness.
One difficulty is that
the GTC imposes constraints on entire derivation trees,
which do not readily
align with the local and compositional nature of categorical reasoning.
In addition, 
non-wellfounded proofs may be 
infinite derivation trees, preventing the use of the standard approach in categorical semantics of constructing morphisms from proofs by structural induction on the derivation trees.

The aim of this paper 
is to develop
a categorical abstraction of non-wellfounded proofs equipped with a soundness result.
Besides defining traces explicitly over paths and formulating progress conditions operationally, 
we characterise the GTC as a categorical property---specifically, as recursiveness of coalgebras along suitable adjunctions---that guarantees the existence and uniqueness of semantic interpretations.

Since the GTC is a constraint on infinite paths in derivation trees,
we employ a coalgebraic approach.
We model derivation trees as coalgebras and exploit their associated graphs,
building on the notion of graphs of coalgebras~\cite{DBLP:conf/calco/Gumm05,DBLP:books/daglib/0031002,AdamekMM20}.
Specifically,
we work with
coalgebras of \emph{familial functors}
on presheaves.
These functors are generalisations of polynomial functors,
and allow us
to model derivation trees whose nodes are labelled by judgements and whose branching structure is determined by a fixed collection of inference rules.
We define the GTC directly for such coalgebras,
formulating as a condition on infinite paths in their associated derivation graphs.

In our framework,
a (possibly infinite) derivation tree is represented as a coalgebra $C \to FC$,
while
a semantics representing how to interpret inference rules is specified by an algebra $F\Omega \to \Omega$.
An interpretation of a proof is then expressed as 
a coalgebra-to-algebra morphism  between them, which assigns to each node of the derivation tree a semantic value 
in a manner that is consistent with the interpretation of inference rules.

\begin{tikzpicture}[>=stealth]
\node (C) at (2.2,1.2) {$C$};
\node (FC) at (2.2,-0.2) {$FC$};
\node (Omega) at (4,1.2) {$\Omega$};
\node (FOmega) at (4,-0.2) {$F\Omega$};
\node (where) at (9, 0.5) {(in $\Psh{\Jdg}$)};

\draw[->] (C) -- (FC) node[midway, align=right, left] {representing\\a derivation tree};
\draw[->] (FOmega) -- (Omega)  node[midway, align=left, right]{representing\\a semantics of rules};
\draw[->, dotted] (C) -- (Omega) node[midway, above] {$\llbracket - \rrbracket$};
\draw[->, dotted] (FC) -- (FOmega)  node[midway, above] {$F\llbracket - \rrbracket$};

\end{tikzpicture}

Soundness is then formulated as the existence of a unique coalgebra-to-algebra morphism.
To guarantee such existence,
\emph{well-founded} or \emph{recursive coalgebras} have been well studied~\cite{DBLP:journals/ita/AdamekLM07,AdamekMM20,DBLP:journals/mscs/JeanninKS17}.
They provide a witness explaining why solutions of recursive definitions are unique,
and successfully capture the semantics of recursive functions when the underlying coalgebras are well-founded.

However, derivation trees of non-wellfounded proofs may be infinite and hence
give rise to non-wellfounded coalgebras.
It has been observed that 
unique solutions can still exist in certain non-wellfounded cases,
and existing theories of recursive or well-founded coalgebras do not explain this phenomenon
at least in a straightforward way~\cite{capretta07,DBLP:journals/mscs/JeanninKS17}.

To address this gap, 
we exploit the well-known correspondence of coalgebra-to-algebra morphisms along adjunctions~\cite{HermidaJ98}. 
Instead of requiring the original coalgebra to be recursive, we study whether its image under a suitable left adjoint is recursive.
Intuitively, this amounts to enriching the original derivation graph with an additional dimension---obtained by pulling back along an adjunction---so that
the expanded graph becomes well-founded, even when the original graph is not.

Under an assumption on the semantic algebra,
we show a soundness theorem:
if a coalgebra representing a derivation graph satisfies the GTC, then there exists a unique coalgebra-to-algebra morphism.
In the context of concrete non-wellfounded proofs, 
enriching the derivation graph
corresponds to ordinal annotations on fixed-point operators, a technique commonly employed in concrete soundness proofs~\cite{DBLP:journals/tcs/NiwinskiW96,DBLP:journals/logcom/BrotherstonS11,DBLP:conf/csl/KoriT021},
and
the assumption on the semantic algebra means that
the semantics of fixed-point operators are given by ordinal-indexed iterative computations.

We further investigate the relationship between the GTC and recursiveness of coalgebras.
In particular, we establish that
the GTC and recursiveness characterise each other:
we give
a characterisation of the GTC in terms of recursive coalgebras,
and conversely
a characterisation of recursiveness in terms of the GTC.
We also show that
the GTC is preserved under certain functors given by right adjoints,
analogous to a known result for recursive coalgebras.

Taken together,
the main conceptual message of this paper is twofold.  
First, the GTC can be understood as the condition ensuring that, after transport along a suitable adjunction, the relevant coalgebra becomes recursive (equivalently, well-founded in the associated graph). 
Secondly, in concrete examples,
the suitable adjunction is guided by the semantics.
More precisely, the ordinal-indexed approximants 
used to interpret
fixed-point operators provide a lifting of the semantic algebra along
this adjunction.

The contributions of this paper are as follows:
\begin{itemize}
  \item We extend the notion of graphs of coalgebras~\cite{DBLP:conf/calco/Gumm05,DBLP:books/daglib/0031002,AdamekMM20}
  from endofunctors on $\Set$ to 
  those on presheaves. We also establish the equivalence 
  between well-foundedness of graphs and recursiveness (or well-foundedness) of coalgebras (\autoref{thm:wf}) as an extension of their result.
  \item We propose an approach to ensure existence and uniqueness of coalgebra-to-algebra morphisms in non-wellfounded cases (\autoref{prop:rec_solution}). 
  This approach provides a coalgebraic explanation of a non-wellfounded example appearing in \cite{capretta07,DBLP:journals/mscs/JeanninKS17}, that admits a unique solution
    and is not covered by existing frameworks for recursive coalgebras.
  \item We propose a coalgebraic abstraction of non-wellfounded proof systems via coalgebras of familial functors on presheaves,
  and establish a soundness result within this framework (\autoref{thm:soundness}).
  \item We provide a characterisation of the GTC in terms of recursive coalgebras, and examine the relationship between recursiveness and the GTC (\autoref{prop:rec_GTC}, \autoref{thm:GTC_rec}).
  \item We illustrate our framework with three proof systems: 
  a non-wellfounded proof system for the modal $\mu$-calculus~\cite{DBLP:conf/lics/AfshariL17}, 
  one for higher-order fixed-point logics~\cite{DBLP:conf/csl/KoriT021}, and a non-wellfounded variant of Santocanale's circular proof system~\cite{DBLP:conf/fossacs/Santocanale02}.
\end{itemize}

\paragraph*{Structure of the paper.}
We first develop a technical machinery in \autoref{sec:graph}--\autoref{sec:steps}.
After recalling basic notions on recursive and well-founded coalgebras
in \autoref{sec:preliminary}, 
we introduce graphs of coalgebras for endofunctors on presheaves in \autoref{sec:graph}.
In
\autoref{sec:coalg_alg_adj}, we review coalgebra-to-algebra morphisms in categories connected by adjunctions
and study
the unique existence of these morphisms, providing a coalgebraic explanation of a non-wellfounded example from~\cite{capretta07,DBLP:journals/mscs/JeanninKS17}.
Then, in 
\autoref{sec:steps},
we
introduce familial functors on presheaves.
Our abstract framework for non-wellfounded proof systems and associated results are presented in 
\autoref{sec:proof_system} and \autoref{sec:base_change}.
We then demonstrate our framework through concrete non-wellfounded proof systems in \autoref{sec:example}.  
In \autoref{sec:relatedwork}, we discuss related work, and \autoref{sec:conclusion} concludes the paper.

\section{Recursive and Well-Founded Coalgebras}

\subsection{Preliminaries} \label{sec:preliminary}

  Let $F\colon \cat{C} \to \cat{C}$ be a functor.
  For~$F$-coalgebras $c_1\colon C_1 \to FC_1$ and $c_2\colon C_2 \to FC_2$,
  a \emph{coalgebra morphism} $h\colon c_1 \to c_2$ is a morphism $h\colon C_1 \to C_2$ in $\cat{C}$ such that $Fh \circ c_1 = c_2 \circ h$.
  Dually, for~$F$-algebras $a_1\colon FA_1 \to A_1$ and $a_2\colon FA_2 \to A_2$,
  an \emph{algebra morphism} $h\colon a_1 \to a_2$ is a morphism $h\colon A_1 \to A_2$ in $\cat{C}$ such that $h \circ a_1 = a_2 \circ Fh$.
  We write $\Coalg{F}$ for the category of coalgebras and coalgebra morphisms, and $\Alg{F}$ for that of algebras and algebra morphisms.
\begin{defi}[ca-morphism]
  Let $F\colon \cat{C} \to \cat{C}$ be a functor.
  For an~$F$-coalgebra $c\colon C \to FC$ and an~$F$-algebra $a\colon FA \to A$,
 a \emph{ca-morphism} (short for \emph{coalgebra-to-algebra morphism}) from~$c$ to~$a$ is a morphism $i\colon C \to A$ in $\cat{C}$ such that 
 the following diagram commutes.
 \[
   \xymatrix{
     C \ar[r]^{i} \ar[d]_{c} & A \\
     FC \ar[r]_{Fi} & FA  \ar[u]_{a}
   }
 \]
\end{defi}

We introduce two important notions of coalgebras, \emph{recursive coalgebras} and \emph{well-founded coalgebras}, that capture well-founded structures.
These notions were first studied by Osius for the powerset functor~\cite{OSIUS197479},
and later generalised to general endofunctors by Taylor~\cite{Taylor21,DBLP:books/daglib/0031002}.
We refer to \cite{CaprettaUV06,AdamekMM20} for more details.
\begin{defi}[{recursive coalgebra}]
  Let $F\colon \cat{C} \to \cat{C}$ be a functor.
  A coalgebra $c\colon C \to FC$ is \emph{recursive}
  if for each~$F$-algebra $a\colon FA \to A$,
  there exists a unique ca-morphism from~$c$ to~$a$.
\end{defi}

\begin{defi}[{Well-founded coalgebra~\cite{DBLP:journals/corr/abs-1305-0576}}]
  Let $F\colon \cat{C} \to \cat{C}$ be a functor.
  A coalgebra $c\colon C \to FC$ is \emph{well-founded}
  if
  any subobject $i\colon A \rightarrowtail C$
  forming the following pullback diagram
  is an isomorphism.
  \[
    \xymatrix{
       *+<8pt>{A} \pullbackmark{0, 1}{1, 0}\ar@{>->}[d]^i \ar[r] &FA \ar[d]^{Fi} \\
      C \ar[r]^{c} &FC
    }
  \]
\end{defi}

\begin{exa} \label{eg:powerset}
  A coalgebra $c\colon C \to \mathcal{P}C$ of the powerset functor $\mathcal{P}\colon \Set \to \Set$
  represents a directed graph whose nodes are elements of~$C$ and whose edges are given by $\{(x, y) \mid y \in c(x)\}$.
  It is well-known that the following are equivalent:
  \begin{enumerate}
    \item~$c$ is recursive.
    \item~$c$ is well-founded.
    \item The graph is well-founded, i.e.~it contains no infinite path.
  \end{enumerate}
\end{exa}

\subsection{Graphs of Coalgebras in Presheaves} \label{sec:graph}
  A general result instantiating the equivalence discussed in~\autoref{eg:powerset}
  is known for endofunctors on $\Set$~\cite{DBLP:conf/calco/Gumm05,DBLP:books/daglib/0031002,AdamekMM20}.
  We show that
  the result extends naturally to endofunctors on presheaves.

\begin{nota}
For a functor $f\colon \cat{A} \to \Cat$,
we write
$\Gopfib{f}\colon \GopfibTotal{f} \to \cat{A}$
for the opfibration obtained
by the Grothendieck construction.
Thus, an object of $\GopfibTotal{f}$ is a pair $(a, x)$ where $a \in \cat{A}$ and $x \in f(a)$, and a morphism $(a, x) \to (a', x')$ is a pair consisting of a morphism $h\colon a \to a'$ in $\cat{A}$ and a morphism $f(h)(x) \to x'$ in $f(a')$.
Dually, 
for a functor $g\colon \cat{B}^\op \to \Cat$,
we write $\Gfib{g}\colon \GfibTotal{g} \to \cat{B}$ for the fibration obtained by the Grothendieck construction.
When applying the Grothendieck construction
to a $\Set$-valued functor, we regard sets as discrete categories,
i.e.~we use the inclusion \(\Set\hookrightarrow\Cat\).
We also use the notation 
$\node{x}{f(a)}$ and $\node{y}{g(b)}$ for the corresponding objects $(a, x) \in \GopfibTotal{f}$ and $(b, y) \in \GfibTotal{g}$.
We use $\iota_0, \iota_1$ and $\pi_0, \pi_1$ for the coproduct injections and product projections, respectively.
\end{nota}

For a small category $\cat{C}$, we write $\Psh{\cat{C}}$ for the presheaf category $\Set^{\cat{C}^\op}$.
Recall that an \emph{intersection} is a wide pullback (i.e.~an (arbitrary) product in a slice category) of monomorphisms, and presheaves have all intersections.
For a monomorphism $m\colon A \to B$ and a morphism $f\colon X \to B$, 
the \emph{inverse image} of~$m$ under~$f$ is the monomorphism given by the pullback of~$m$ along~$f$.
\begin{lem} \label{lem:graph_a}
  Let $F\colon \Psh{\cat{C}} \to \Psh{\cat{C}}$ be a functor and $c\colon C \to F C$ be a coalgebra.
  For each $(x, n) \in \GfibTotal{C}$ (i.e.~$x \in \cat{C}$ and $n \in C(x)$),
  we define the subpresheaf $A_{(x, n)}^c$  of~$C$ by
  \[
    A_{(x, n)}^c \coloneqq \bigcap \{B \mid B \hookrightarrow C, c_x(n) \in F(B)(x) \}.
  \]
  Then for each morphism $f\colon x \to y$ in $\cat{C}$ and $m \in C(y)$,
  \[A^c_{(x, Cf(m))} \hookrightarrow A^c_{(y, m)}.\]
  In other words, $A^c$ forms a functor $\GfibTotal{C} \to \Sub{C}$
  where $\Sub{C}$ is
  the category whose objects are subpresheaves of~$C$ and whose morphisms $B \to B'$ are 
  monomorphisms  making the inclusions $B \hookrightarrow C$ and $B' \hookrightarrow C$ commute.
\end{lem}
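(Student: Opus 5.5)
Since $A^c_{(x,Cf(m))}$ is defined as an intersection, it is contained in every member of the family it intersects. So the plan is to show that every subpresheaf $B \hookrightarrow C$ with $c_y(m) \in F(B)(y)$ also satisfies $c_x(Cf(m)) \in F(B)(x)$. Then the family indexing $A^c_{(y,m)}$ is contained in the family indexing $A^c_{(x,Cf(m))}$. Intersecting over a larger family gives a smaller subobject, so $A^c_{(x,Cf(m))} \le A^c_{(y,m)}$, which is exactly the required monomorphism in $\Sub{C}$.

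The key step uses two facts. The first is naturality of $c$: $c_x(Cf(m)) = (FC)(f)(c_y(m))$. The second is that $F(i)\colon FB \to FC$ is a natural transformation for the inclusion $i\colon B \hookrightarrow C$. Throughout, ``$c_y(m) \in F(B)(y)$'' is read as: $c_y(m)$ lies in the image of $F(i)_y$. This reading is the only sensible one, since $F$ need not preserve monos, and $\bigcap$ of subpresheaves is taken as their wide pullback.

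Suppose $c_y(m) = F(i)_y(b)$ with $b \in F(B)(y)$. Then
\[
c_x(Cf(m)) = (FC)(f)(F(i)_y(b)) = F(i)_x\bigl((FB)(f)(b)\bigr),
\]
so $c_x(Cf(m))$ lies in the image of $F(i)_x$, as required.

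Functoriality then comes for free. $\Sub{C}$ is a preorder, so a functor $\GfibTotal{C} \to \Sub{C}$ only needs to be monotone on morphisms, and identities and composites are automatic. A morphism $(x, Cf(m)) \to (y,m)$ in $\GfibTotal{C}$ is just a map $f\colon x\to y$ in $\cat{C}$ with these endpoints, and that is precisely the case handled above.

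The only real subtlety, and the expected main obstacle, is the interpretation of the membership condition when $F$ does not preserve monomorphisms. In that case ``$\in$'' must be read as ``factors through $F(i)$''. The argument above only uses existence of such a factorisation, not its uniqueness, so it goes through regardless.
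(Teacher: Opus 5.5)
Your proof is correct and matches the paper's argument. Every subpresheaf $B$ with $c_y(m) \in F(B)(y)$ also satisfies $c_x(Cf(m)) \in F(B)(x)$ by naturality of $c$, so the intersection defining $A^c_{(x,Cf(m))}$ ranges over a larger family and is contained in $A^c_{(y,m)}$. Reading membership as factorisation through $F(i)\colon FB \to FC$ just spells out what the paper's one-line identity $c_x(Cf(m)) = F(B)(f)(c_y(m))$ takes for granted by treating $F(B)$ as a subpresheaf of $F(C)$.
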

\begin{proof}
  It holds because
  for each subpresheaf~$B$ of~$C$,
  $c_y(m) \in F(B)(y)$ implies $c_x(Cf(m)) = F(B)(f)(c_y(m)) \in F(B)(x)$.
  Note that  any morphism in $(x, n) \to (x', m)$ in $\GfibTotal{C}$ is given by a morphism $f\colon x \to x'$ in $\cat{C}$ such that $C f(m) = n$.
\end{proof}
  We omit the superscript~$c$ of $A_{(x, n)}^c$ when it is clear from the context.
This functor $A^c$ 
specifies the next nodes in the graph of~$c$, as in the following definition.

\begin{defi}[graph of a coalgebra] \label{def:graph}
  Let $F\colon \Psh{\cat{C}} \to \Psh{\cat{C}}$ be a functor and $c\colon C \to F C$ be a coalgebra.
  The \emph{graph of~$c$} is the directed graph 
  whose nodes are
  objects $(x, n) \in \GfibTotal{C}$ (i.e.~$x \in \cat{C}$ and $n \in C(x)$),
  and whose edges $(x, n) \to (y, m)$ are defined by the condition 
  $(y, m) \in \int A_{(x, n)}^c$.
\end{defi}

\begin{lem} \label{lem:graph}
  Let $c\colon C \to F C$ be a coalgebra of a functor $F\colon \Psh{\cat{C}} \to \Psh{\cat{C}}$.
     If~$F$ preserves intersections, then $c_x(n) \in F(A^c_{(x, n)})(x)$ holds
  for each node $(x, n) \in \GfibTotal{C}$.
\end{lem}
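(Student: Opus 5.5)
The plan is to exhibit $A^c_{(x,n)}$ as an intersection to which $F$'s preservation property applies, and then check membership componentwise. Let $\mathcal{B}$ denote the family of subpresheaves $B \hookrightarrow C$ with $c_x(n) \in F(B)(x)$. Here ``$c_x(n) \in F(B)(x)$'' means that $c_x(n)$ lies in the image of the monomorphism $F(B \hookrightarrow C)_x$. We need this map to be injective, at least for the $B$ under consideration. This holds whenever $B$ is nonempty, since then the inclusion is a split mono in each relevant sense. For the degenerate case, we can note that preservation of intersections, including the empty-indexed and binary ones, is the standing hypothesis. The paper implicitly treats $F(B) \hookrightarrow F(C)$ as a subobject, so I will adopt that convention.

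The family $\mathcal{B}$ is nonempty, because $C$ itself belongs to it. Hence $A^c_{(x,n)} = \bigcap \mathcal{B}$ is a genuine wide pullback of the monos $B \hookrightarrow C$ over $C$. Since $F$ preserves intersections, the canonical comparison exhibits $F(A^c_{(x,n)}) \hookrightarrow FC$ as the intersection $\bigcap_{B\in\mathcal{B}} F(B)$ of the subobjects $F(B) \hookrightarrow FC$.

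In presheaf categories, limits and hence intersections are computed pointwise. So $F(A^c_{(x,n)})(x)$, viewed as a subset of $FC(x)$, is exactly $\bigcap_{B\in\mathcal{B}} F(B)(x)$. By the definition of $\mathcal{B}$, the element $c_x(n)$ lies in every $F(B)(x)$. It therefore lies in the intersection, which gives $c_x(n) \in F(A^c_{(x,n)})(x)$.

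The only real obstacle is the bookkeeping in the middle step. One has to check that ``$F$ preserves intersections'' yields an identification of subobjects of $FC$, not merely an isomorphism of abstract limits. This follows because the cone maps $F(A)\to F(B)\to FC$ commute with the original inclusions. One must also ensure the index family is nonempty, so that the intersection is not the whole of $FC$ by convention. This is settled by $C\in\mathcal{B}$, so no appeal to the empty intersection is needed. The rest is the pointwise computation of limits in $\Psh{\cat{C}}$.
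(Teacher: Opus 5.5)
Your argument is correct and is the intended one; the paper states this lemma without proof. You take $A^c_{(x,n)}=\bigcap\mathcal{B}$, use preservation of intersections and the pointwise computation of limits to get $F(A^c_{(x,n)})(x)=\bigcap_{B\in\mathcal{B}}F(B)(x)$ inside $FC(x)$, and note that $c_x(n)$ lies in every $F(B)(x)$ by definition of $\mathcal{B}$.

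One correction concerns your preliminary remark. In $\Psh{\cat{C}}$ a mono with nonempty domain need not split: if $B(y)=\emptyset\neq C(y)$ for some $y$, there is no morphism $C\to B$ at all. So that is not why $F(B)\to FC$ is monic. The right reason is that the intersection of $B\hookrightarrow C$ with itself is $B$ with identity legs; preserving it says exactly that $F(B\hookrightarrow C)$ has trivial kernel pair, i.e.\ $F$ preserves monos. Even without this, reading ``$\in$'' as ``lies in the image'' lets the same pointwise wide-pullback argument go through.
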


The following is an extension of \cite[Cor.~4.10]{AdamekMM20}.
\begin{thm} \label{thm:wf}
  Let $F\colon \Psh{\cat{C}} \to \Psh{\cat{C}}$ be a functor preserving intersections and inverse images,
  and
  $c\colon C \to F C$ be a coalgebra.
  Then the following statements are equivalent:
  \begin{enumerate}
    \item \label{item:rec}~$c$ is recursive.
    \item \label{item:wf}~$c$ is well-founded.
    \item \label{item:graph} The graph of~$c$ is well-founded, i.e.~it contains no infinite path.
  \end{enumerate}
\end{thm}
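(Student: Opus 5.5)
The plan is to prove (\ref{item:wf})$\Leftrightarrow$(\ref{item:graph}) directly from the definition of $A^c$ and Lemma~\ref{lem:graph}, and to connect recursiveness through (\ref{item:rec})$\Rightarrow$(\ref{item:graph}) and (\ref{item:wf})$\Rightarrow$(\ref{item:rec}).

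\textbf{Step 1: a criterion for the pullback.} Everything in presheaves is computed pointwise. Since $F$ preserves inverse images, $F$ preserves monos, so $FB \to FC$ is a subpresheaf for every subpresheaf $B \hookrightarrow C$. Hence $B$ is a pullback of $Fi$ along $c$ iff, for all $(x,n)$, we have $n \in B(x) \Leftrightarrow c_x(n) \in F(B)(x)$. By the definition of $A^c$ as an intersection and by Lemma~\ref{lem:graph}, $c_x(n)\in F(B)(x)$ holds iff $A_{(x,n)} \subseteq B$. (The forward direction uses the definition. For the converse, $A_{(x,n)}\subseteq B$ gives $F(A_{(x,n)})\subseteq F(B)$, and Lemma~\ref{lem:graph} puts $c_x(n)$ in the former.) So the pullback of $Fi$ along $c$ is $\{(x,n) \mid \text{all successors of }(x,n)\text{ lie in }B\}$.

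\textbf{Step 2: (\ref{item:graph})$\Leftrightarrow$(\ref{item:wf}).} First suppose the graph is well-founded and $B$ is a pullback subobject. If $B \neq C$, pick $(x_0,n_0)\notin B$. It is not in the pullback, so it has a successor outside $B$, and iterating this produces an infinite path, a contradiction. Conversely, let $B$ consist of the nodes from which no infinite path starts. By Lemma~\ref{lem:graph_a} this is closed under restriction $Cf$: any path from $(x,Cf(m))$ uses successors in $A_{(x,Cf(m))}\subseteq A_{(y,m)}$, which are also successors of $(y,m)$. So $B$ is a subpresheaf. A node lies in $B$ iff all its successors do, so by Step 1 $B$ is a pullback, and well-foundedness gives $B=C$.

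\textbf{Step 3: (\ref{item:wf})$\Rightarrow$(\ref{item:rec}).} This is the main obstacle. I would adapt Taylor's general recursion theorem, as used in \cite{AdamekMM20}. Well-foundedness of the graph gives a well-founded induction principle on nodes. The plan is to build the ca-morphism by transfinite construction on subcoalgebras. Define the chain $B_0=\emptyset$, $B_{\alpha+1}=$ the pullback of $FB_\alpha$ along $c$, with unions at limit stages. Each $B_{\alpha}$ is a subcoalgebra, since $c$ restricts to $B_{\alpha+1} \to FB_\alpha \to FB_{\alpha+1}$. The chain reaches $C$ by well-foundedness, since its fixpoint is a pullback subobject. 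On each stage I define $i_{\alpha+1} = a\circ F i_\alpha \circ c|$, and show by induction that these are compatible and unique. Uniqueness uses the same induction: two ca-morphisms agree on $B_\alpha$, hence on $B_{\alpha+1}$. The delicate points are that presheaf unions are pointwise, and that restriction of $F$ to subobjects is compatible, which needs the preservation of inverse images.

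\textbf{Step 4: (\ref{item:rec})$\Rightarrow$(\ref{item:graph}).} Here I would follow the standard argument that recursive implies well-founded when $F$ preserves monos \cite{Taylor21}. Let $W\hookrightarrow C$ be the well-founded part constructed in Step 2. Form the algebra $F(\Omega)\to\Omega$ with $\Omega$ the subobject-classifier-like presheaf of truth values (or $C+1$ with a suitable structure), whose two ca-morphisms, namely the constant ``true'' and the characteristic map of $W$, must coincide. Alternatively, I would use the known result that a recursive coalgebra with an inverse-image-preserving $F$ is well-founded, and then apply Step 2.
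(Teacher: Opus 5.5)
Your Steps 1 and 2 are the paper's own proof of (2)$\Leftrightarrow$(3). The set of nodes with no infinite path is a subpresheaf by Lemma~\ref{lem:graph_a}, and it is a cartesian subcoalgebra via Lemma~\ref{lem:graph}. Conversely, a cartesian $B\neq C$ yields an infinite path by repeatedly picking a successor outside $B$. For (1)$\Leftrightarrow$(2) the paper only cites \cite[Thm.~5.1 and 5.8]{AdamekMM20}. Your Step 3 is a correct self-contained version of the first of these (Taylor's General Recursion Theorem), and it uses only that $F$ preserves monos, not inverse images.

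Step 4, taken as a direct argument, is not yet a proof. All of its content lies in the algebra structure on $\Omega$, which you leave as ``suitable'', and the $C+1$ option is a distraction. The choice that works is the following:
let $\Omega$ be the subobject classifier of $\Psh{\cat{C}}$ (the presheaf of sieves), and let $a\colon F\Omega\to\Omega$ be the classifying map of the mono $F\top\colon F1\rightarrowtail F\Omega$.

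For any subpresheaf $B\hookrightarrow C$, preservation of the inverse image $B=\chi_B^{-1}(\top)$ gives $(F\chi_B)^{-1}(F\top)=FB$. Hence $a\circ F\chi_B\circ c$ classifies $c^{-1}(FB)$, so $\chi_B$ is a ca-morphism from $c$ to $a$ exactly when $B$ is a cartesian subcoalgebra. The map $\chi_C=\top\circ !_C$ is always such a ca-morphism. So recursiveness forces every cartesian subcoalgebra to be all of $C$, which is (1)$\Rightarrow$(2) directly. Your comparison of $\chi_W$ with the constant map is the special case $B=W$.

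This is the one place where preservation of inverse images, rather than just of monos, is really needed. With this filled in, or with your fallback citation (which is exactly what the paper does), the proposal is complete.
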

\begin{proof}
  $(\ref{item:rec} \Leftrightarrow \ref{item:wf})$
  This follows from~\cite[Thm.~5.1 and 5.8]{AdamekMM20}.

  $(\ref{item:wf} \Rightarrow \ref{item:graph})$
  Define a subpresheaf $B \hookrightarrow C$ by 
  \[
  B(x) \coloneqq \{n \mid \text{there is no infinite path from } (x, n) \text{ in the graph of }c\}.
  \]
  Let us show that it is natural.
  For each $f\colon x \to y$ in $\cat{C}$  and 
  $n \in B(y)$,
  if there is an edge $(x, Cf(n)) \to v$, 
  then there is also
  $(y, n) \to v$
  by \autoref{lem:graph_a}.
  Thus we have $Cf(n) \in B(x)$.

  For each $n \in B(x)$,
  it follows that $A_{(x, n)} \hookrightarrow B$
  by definition of~$B$.
  Then \autoref{lem:graph} leads to $c_x(n) \in FB(x)$.
  Therefore,~$B$ forms a subcoalgebra of~$c$.
  Moreover, for each $x \in \cat{C}$ and $n \in C(x)$,
  $c_x(n) \in FB(x)$ implies 
  $A_{(x, n)} \hookrightarrow B$ by definition of $A_{(x, n)}$.
  It follows that all nodes~$v$ next to $(x, n)$ are well-founded (i.e.~there is no infinite path from~$v$),
  and thus
  $n \in B(x)$. 
  Therefore, 
  the subcoalgebra given by~$B$ is cartesian,
  that is,
  the following diagram forms a pullback.
  \[
    \xymatrix{
      *+<8pt>{B} \pullbackmark{0, 1}{1, 0}\ar@{^{(}->}[d] \ar[r] &*+<8pt>{FB} \ar@{^{(}->}[d] \\
      C \ar[r]^{c} &FC
    }
  \]
  Because~$c$ is well-founded, we have $B = C$.

  ($\ref{item:graph} \Rightarrow \ref{item:wf}$)
  Assume
  that the graph of~$c$ is well-founded and
  there is a cartesian subcoalgebra $b\colon B \to FB$ of~$c$ such that $B \neq C$.
  Then there is $n_0 \in  C(x_0) \setminus B(x_0)$ for some $x_0 \in \cat{C}$.
  Since~$b$ is cartesian,
  $c_{x_0}(n_0) \not \in FB(x_0)$. 
  This shows that $A_{(x_0, n_0)} \not \hookrightarrow B$ because
  $A_{(x_0, n_0)} \hookrightarrow B$ implies $c_{x_0}(n_0) \in FA_{(x_0, n_0)}(x_0) \subseteq FB(x_0)$ by \autoref{lem:graph}.
  Therefore, there is $n_1 \in C(x_1) \setminus B(x_1)$ for some $x_1 \in \cat{C}$ such that $n_1 \in A_{(x_0, n_0)}(x_1)$, and thus there is an edge from $(x_0, n_0)$ to $(x_1,  n_1)$.
  Repeating this, we obtain an infinite path $\langle(x_i, n_i)\rangle_{i \in \nat}$ in the graph of~$c$.
  This is a contradiction.
\end{proof}
\section{Correspondence of Ca-morphisms} \label{sec:coalg_alg_adj}
This section develops our approach to establishing the existence of unique ca-morphisms in non-wellfounded settings. 
The key idea is to exploit adjunctions:
rather than requiring a coalgebra to be recursive, we study whether its image under a suitable left adjoint is recursive.
This provides a framework that accommodates non-wellfounded cases which do not fit straightforwardly into existing theories of recursive coalgebras.
We illustrate this approach with a concrete example in \autoref{sec:desc}.

\subsection{Steps and Ca-morphisms}
We begin by recalling the correspondence of ca-morphisms along an adjunction,
which was introduced in the context of categorical logic and type theory~\cite{HermidaJ98} and further studied in~\cite{RotJL21}.
\begin{prop} \label{prop:steps}
  Let $L \dashv R\colon \cat{D} \to \cat{C}$ be an adjunction,
  and $F\colon \cat{C} \to \cat{C}$ and
  $G\colon \cat{D} \to \cat{D}$ be functors.
  Then there is a bijective correspondence
  between natural transformations $\tau\colon LF \Rightarrow GL$ and $\sigma\colon FR \Rightarrow RG$, given by the so-called mate bijection.
  Moreover,
  such natural transformations induce 
  the liftings 
  \[
  \Alg{R}\colon \Alg{G} \to \Alg{F}, \ \Coalg{L}\colon \Coalg{F} \to \Coalg{G}
  \]
  of the functors~$R$ and~$L$ to the category of algebras and coalgebras, respectively,
  defined by $\Alg{R}(a) = R(a) \circ \sigma$, $\Coalg{L}(c) = \tau \circ L(c)$
  for each~$G$-algebra~$a$ and~$F$-coalgebra~$c$.
\end{prop}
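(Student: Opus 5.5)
The plan is to write down the mate bijection explicitly from the unit $\eta\colon \id \Rightarrow RL$ and counit $\varepsilon\colon LR \Rightarrow \id$, then verify three things: the two constructions are mutually inverse, the results are natural, and the proposed liftings land in the right categories. Given $\tau\colon LF \Rightarrow GL$, we define
\[
\sigma \coloneqq RG\varepsilon \circ R\tau R \circ \eta FR \colon FR \Rightarrow RLFR \Rightarrow RGLR \Rightarrow RG.
\]
Conversely, given $\sigma\colon FR \Rightarrow RG$, we define
\[
\tau \coloneqq \varepsilon GL \circ L\sigma L \circ LF\eta \colon LF \Rightarrow LFRL \Rightarrow LRGL \Rightarrow GL.
\]
Naturality of each is immediate, since each is a vertical composite of whiskerings of natural transformations.

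For bijectivity, we substitute one formula into the other. For instance, starting from $\tau$, forming $\sigma$ and then forming $\tau'$, we rewrite $\tau'$ using naturality of $\varepsilon$ and $\eta$ to move $\tau$ to the middle. The composite then collapses to $\tau$ by the triangle identities $\varepsilon L \circ L\eta = \id_L$ and $R\varepsilon \circ \eta R = \id_R$. The other round trip is symmetric. This string-diagram bookkeeping is the only step with any content. The main obstacle is keeping the whiskerings straight, and I would do the calculation componentwise at an object $X$ to avoid errors.

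For the liftings, first take a $G$-algebra $a\colon GA \to A$. Then $R(a)\circ\sigma_A\colon FRA \to RGA \to RA$ is an $F$-algebra. If $h\colon a\to a'$ is an algebra morphism, naturality of $\sigma$ gives $RGh\circ\sigma_A = \sigma_{A'}\circ FRh$. Combining this with $Rh\circ Ra = Ra'\circ RGh$ shows that $Rh$ is an $F$-algebra morphism. Functoriality is inherited from $R$.

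Dually, for an $F$-coalgebra $c\colon C \to FC$, the map $\tau_C\circ Lc\colon LC \to LFC \to GLC$ is a $G$-coalgebra. Coalgebra morphisms $h$ are sent to $Lh$, which is a $G$-coalgebra morphism by naturality of $\tau$ together with $Fh\circ c = c'\circ h$. Both liftings commute with the forgetful functors by construction, so they are indeed liftings of $R$ and $L$.
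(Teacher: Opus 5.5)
Your argument is correct and is the standard verification. The paper gives no proof of its own: it recalls the mate correspondence from Hermida and Jacobs. Your formula $\sigma = RG\varepsilon\circ R\tau R\circ \eta FR$ is exactly the one used implicitly in the appendix proof of \autoref{lem:coalg_to_alg}. One small correction to your round-trip sketch: naturality of $\eta$, naturality of $\varepsilon$ and the triangle identities are not enough on their own. You also need naturality of $\tau$ (resp.\ $\sigma$) itself, for example $\tau_{RLX}\circ LF\eta_X = GL\eta_X\circ\tau_X$.
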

A natural transformation $\tau$ (or $\sigma$) as above is called a \emph{step}.
A fundamental property of steps is that they allow us to
relate ca-morphisms along the adjunction $L \dashv R$, as follows.

\begin{lem} \label{lem:coalg_to_alg}
  Consider the setting of \autoref{prop:steps} with a step $\tau\colon LF \Rightarrow GL$.
  For each~$F$-coalgebra $c\colon C \to FC$ and~$G$-algebra $a\colon GA \to A$,
  there is a bijective correspondence between ca-morphisms from~$c$ to $\Alg{R}(a)$
  and those from $\Coalg{L}(c)$ to~$a$.
  This correspondence is given by the adjunction $L \dashv R$.
\end{lem}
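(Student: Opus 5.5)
The plan is to take the adjunction bijection $\varphi\colon \cat{D}(LC, A) \cong \cat{C}(C, RA)$, with $\varphi(g) = Rg \circ \eta_C$ and $\varphi^{-1}(i) = \varepsilon_A \circ Li$. I will show that it restricts to a bijection between the two sets of ca-morphisms. First I make the mate explicit: $\sigma_A = R(G\varepsilon_A \circ \tau_{RA}) \circ \eta_{FRA}$. Then $\Alg{R}(a) = Ra \circ \sigma_A$ and $\Coalg{L}(c) = \tau_C \circ Lc$.

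The main computation compares two transposes. A morphism $i\colon C \to RA$ is a ca-morphism from $c$ to $\Alg{R}(a)$ exactly when
\[
i = Ra \circ \sigma_A \circ Fi \circ c .
\]
Applying $\varphi^{-1}$ to both sides turns this into an equation in $\cat{D}$. The left side becomes $g \coloneqq \varepsilon_A \circ Li$. For the right side, I will use naturality of $\varphi^{-1}$ in the domain and codomain: $\varphi^{-1}(Rh \circ k) = h \circ \varphi^{-1}(k)$ and $\varphi^{-1}(k \circ m) = \varphi^{-1}(k) \circ Lm$. This reduces the right side to
\[
a \circ \varphi^{-1}(\sigma_A) \circ LFi \circ Lc .
\]
By the formula for $\sigma$ and a triangle identity, $\varphi^{-1}(\sigma_A) = G\varepsilon_A \circ \tau_{RA}$. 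Naturality of $\tau$ then gives
\[
G\varepsilon_A \circ \tau_{RA} \circ LFi = G\varepsilon_A \circ GLi \circ \tau_C = Gg \circ \tau_C .
\]
So the right side is $a \circ Gg \circ \tau_C \circ Lc = a \circ Gg \circ \Coalg{L}(c)$. Hence the condition on $i$ is equivalent to
\[
g = a \circ Gg \circ \Coalg{L}(c),
\]
which says precisely that $g$ is a ca-morphism from $\Coalg{L}(c)$ to $a$.

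Since $\varphi$ is a bijection and the two defining equations correspond under it, $\varphi$ restricts to a bijection between ca-morphisms $c \to \Alg{R}(a)$ and ca-morphisms $\Coalg{L}(c) \to a$. This also shows that the correspondence is given by the adjunction, as claimed in \autoref{lem:coalg_to_alg}.

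The only delicate step is the identity $\varphi^{-1}(\sigma_A) = G\varepsilon_A \circ \tau_{RA}$. This is a direct unfolding of the mate formula together with the triangle identity $\varepsilon_{L} \circ L\eta = \id$, specifically $\varepsilon_{LFRA} \circ L\eta_{FRA} = \id_{LFRA}$. The rest is bookkeeping with naturality.
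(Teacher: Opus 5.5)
Your proof is correct and is essentially the paper's argument: you transpose the defining equation across $L \dashv R$, use the mate formula $\sigma_A = R(G\varepsilon_A \circ \tau_{RA}) \circ \eta_{FRA}$, and apply naturality of $\tau$ at $i$. The only difference is cosmetic: you transpose the $\cat{C}$-side equation into $\cat{D}$, while the paper transposes the $\cat{D}$-side equation into $\cat{C}$.
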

An important consequence of this correspondence is 
that the lifting of a left adjoint preserves recursive coalgebras,
in direct analogy with the basic fact that a left adjoint preserves colimits.
\begin{prop}[{\cite[Prop.~12]{CaprettaUV06}}] \label{prop:rec_coalg}
  Consider the setting of \autoref{prop:steps} with a step $\tau\colon LF \Rightarrow GL$.
  For any recursive~$F$-coalgebra~$c$,
  the~$G$-coalgebra $\Coalg{L}(c)$ is also recursive.
\end{prop}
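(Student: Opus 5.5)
The plan is to reduce recursiveness of $\Coalg{L}(c)$ to that of $c$, using the bijection of \autoref{lem:coalg_to_alg} at each $G$-algebra. Fix an arbitrary $G$-algebra $a\colon GA \to A$. We must show that there is exactly one ca-morphism from $\Coalg{L}(c) = \tau \circ Lc$ to $a$. The natural candidate for a target $F$-algebra is the lifted algebra $\Alg{R}(a) = Ra \circ \sigma_A\colon FRA \to RA$, where $\sigma$ is the mate of $\tau$ given by \autoref{prop:steps}.

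The first step applies recursiveness of $c$ to the $F$-algebra $\Alg{R}(a)$. This yields exactly one ca-morphism $i\colon C \to RA$ from $c$ to $\Alg{R}(a)$.

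The second step invokes \autoref{lem:coalg_to_alg}. It gives a bijection, via adjoint transposition $i \mapsto \varepsilon_A \circ Li$, between ca-morphisms $c \to \Alg{R}(a)$ and ca-morphisms $\Coalg{L}(c) \to a$. A bijection carries a one-element set to a one-element set. Hence there is exactly one ca-morphism $\Coalg{L}(c) \to a$, namely the transpose of $i$. Since $a$ was arbitrary, $\Coalg{L}(c)$ is recursive.

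The only nontrivial ingredient is the correspondence lemma itself, which we are allowed to assume. If one wanted to check it directly, the key computation would transpose the square $i = Ra \circ \sigma_A \circ Fi \circ c$ across $L \dashv R$. One rewrites $\varepsilon_A \circ L(Ra \circ \sigma_A \circ Fi \circ c)$ as $a \circ G(\varepsilon_A \circ Li) \circ \tau_C \circ Lc$. This uses naturality of $\varepsilon$, the mate formula $\sigma = RG\varepsilon \circ R\tau_R \circ \eta_{FR}$ (which gives $\varepsilon_{GA}\circ L\sigma_A = G\varepsilon_A\circ\tau_{RA}$), and naturality of $\tau$. That bookkeeping with the mate formula would be the main obstacle. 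Here, however, it is packaged in \autoref{lem:coalg_to_alg}, and the proposition follows in two lines.
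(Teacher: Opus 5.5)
Your proof is correct and takes the same approach as the paper, which presents this proposition as a direct consequence of \autoref{lem:coalg_to_alg}: apply recursiveness of $c$ to the lifted algebra $\Alg{R}(a)$, then transfer the unique ca-morphism across the adjoint-transposition bijection. Your optional check of the transposition, using naturality of $\varepsilon$ and of $\tau$ together with the mate identity $\varepsilon_{GA}\circ L\sigma_A = G\varepsilon_A\circ\tau_{RA}$, agrees with the computation in the appendix proof of that lemma.
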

This result has been exploited in the study of coalgebraic trace semantics~\cite{RotJL21}.
There, 
it provides a way to obtain a unique ca-morphism: 
if an~$F$-coalgebra~$c$ is recursive,
then there is a unique ca-morphism
from $\Coalg{L}(c)$ to a given algebra.

In this paper, we investigate a different perspective.
We start with an~$F$-coalgebra~$c$ that is not necessarily recursive;
however, its image $\Coalg{L}(c)$ is recursive.
Under an appropriate condition on the target algebra, this still guarantees a unique ca-morphism from~$c$.
This is formalised in the following proposition, which directly follows from \autoref{lem:coalg_to_alg}.

\begin{prop} \label{prop:rec_solution}
  Consider the setting of \autoref{prop:steps} with a step $\tau\colon LF \Rightarrow GL$.
  Let~$c$ be an~$F$-coalgebra and~$a$ be an~$F$-algebra.
    Suppose that
    there exists a~$G$-algebra $a'$ such that $\Alg{R}(a') \cong a$.
    Then there exists a unique ca-morphism from~$c$ to~$a$
    if
    $\Coalg{L}(c)$ is recursive.
\end{prop}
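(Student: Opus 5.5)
The plan is to transfer the question along the isomorphism $\Alg{R}(a') \cong a$ and then apply \autoref{lem:coalg_to_alg}.

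First I reduce to the case $a = \Alg{R}(a')$. Let $\phi\colon \Alg{R}(a') \to a$ be an isomorphism in $\Alg{F}$, i.e.\ an isomorphism $\phi\colon RA' \to A$ in $\cat{C}$ with $\phi \circ R(a') \circ \sigma_{A'} = a \circ F\phi$. A routine diagram chase shows that post-composition with an algebra morphism sends ca-morphisms to ca-morphisms. Indeed, if $i = R(a')\circ\sigma_{A'}\circ Fi \circ c$, then
\[
\phi i = \phi \circ R(a')\circ\sigma_{A'}\circ Fi\circ c = a\circ F\phi\circ Fi\circ c = a \circ F(\phi i)\circ c .
\]
Applying the same argument to $\phi^{-1}$, which is also an algebra morphism, gives a bijection between ca-morphisms $c \to \Alg{R}(a')$ and ca-morphisms $c \to a$. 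So it suffices to prove unique existence for the target $\Alg{R}(a')$.

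Next, by \autoref{lem:coalg_to_alg} with the step $\tau$ and its mate $\sigma$ from \autoref{prop:steps}, ca-morphisms from $c$ to $\Alg{R}(a')$ are in bijection with ca-morphisms from $\Coalg{L}(c)$ to the $G$-algebra $a'$. Since $\Coalg{L}(c)$ is recursive by hypothesis, there is exactly one ca-morphism of the latter kind. Transporting it back through the two bijections yields exactly one ca-morphism $c \to a$. Explicitly, it is $\phi \circ j^\sharp$, where $j^\sharp$ is the adjoint transpose of the unique $j\colon LC \to A'$.

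There is no real obstacle, since the statement is an immediate consequence of \autoref{lem:coalg_to_alg}. The only points to check are that the isomorphism in the hypothesis is taken in $\Alg{F}$, so that $\phi$ commutes with the structure maps, and that $\Alg{R}$ is defined via the mate $\sigma$ of the given $\tau$, so that \autoref{lem:coalg_to_alg} applies verbatim.
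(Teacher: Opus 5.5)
Your proposal is correct and follows the paper's route, since the paper simply states that this proposition follows directly from \autoref{lem:coalg_to_alg}. You make that explicit, and you also spell out the transport of ca-morphisms along the isomorphism $\Alg{R}(a') \cong a$ in $\Alg{F}$, which the paper leaves implicit.
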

This provides an abstract mechanism underlying soundness of non-wellfounded proofs.
Rather than requiring a coalgebra to be recursive directly, it suffices that recursiveness holds after transporting the coalgebra along a left adjoint.
This observation will later allow us to interpret global trace conditions as recursiveness conditions on enriched derivation graphs.

By \autoref{prop:rec_coalg},
the condition that $\Coalg{L}(c)$ is recursive
is weaker than the condition that~$c$ itself is recursive.
This can be understood as a restriction of the universal property:
$\Coalg{L}(c)$ is recursive if and only if
there is a unique ca-morphism from~$c$ 
to each $F$-algebra in the image of $\Alg{R}$,
rather than to 
all~$F$-algebras.
This partial recursiveness condition allows us to reason about non-wellfounded cases, as illustrated in the next subsection.

\subsection{Example: Descending Sequence} \label{sec:desc}
We illustrate \autoref{prop:rec_solution} with a non-wellfounded example,
originally 
presented in Capretta's talk~\cite{capretta07} and later discussed in Jeannin et al.'s paper~\cite[Sec.~5.1]{DBLP:journals/mscs/JeanninKS17}.
We later see that this example is a toy instance of a B\"{u}chi-style GTC.

\subsubsection{Setup}
\newcommand{\descending}{\mathrm{ds}}
\newcommand{\cons}[2]{#1\!:\!#2}
Let $\nat$ be the set of natural numbers,
and $\omega$ be the posetal category of natural numbers under the usual order.
We write 
$\nat^\nat$ for the set of infinite streams of natural numbers,
and
$\cons{i}{t}$ for the stream with head $i \in \nat$ and tail $t \in \nat^\nat$.

Consider a function $\descending\colon \nat \times \nat^\nat \to \nat^\nat$ 
satisfying the following:
\begin{align}  \label{eq:h}
  \descending(n, \cons{i}{\cons{j}{xs}}) = 
  \begin{cases}
\descending(n+1, \cons{j}{xs}) &\text{ if }i > j \\ 
\cons{n}{\descending(1, \cons{j}{xs})} &\text{ if }i \leq j.
  \end{cases}
\end{align}
This function is actually uniquely determined. 
For any stream $xs$,
the output $\descending(1, xs)$ records 
the lengths of consecutive descending sequences from the beginning of $xs$.
For instance, $\descending(1, 4\colon\!2\colon\!3\colon\!7\colon\!6\colon\!5\colon\!9\colon\!\cdots) = (2\colon\!1\colon\!3\colon\! \cdots)$, counting the lengths of descending sequences $4>2$, $3$, $7>6>5$, and so on.

\subsubsection{Why is the function uniquely determined?}
Three key properties explain unique existence of a function $\descending$:
\begin{enumerate}
  \item[(i)]
 The output stream is determined by all of its finite prefixes.
\item[(ii)]
 Each time the case $i \leq j$ in \eqref{eq:h} is taken during a call sequence,
it extends the determined finite prefix by one element.
\item[(iii)] The case $i \leq j$ is taken infinitely often in any infinite call sequence.
\end{enumerate}
We aim to reformulate this in terms of recursive coalgebras.

\subsubsection{Coalgebraic reformulation}
\newcommand{\Fds}{F_\descending}
\newcommand{\cds}{c_\descending}
\newcommand{\ads}{a_\descending}
Any function $\descending\colon \nat \times \nat^\nat \to \nat^\nat$ satisfying \eqref{eq:h} 
can be represented as a  ca-morphism
from $\cds$ to $\ads$ in the following diagram:
\[
  \xymatrix{
    \nat \times \nat^\nat \ar[d]_{\cds} \ar[r]^\descending &\nat^\nat \\
    \Fds(\nat \times \nat^\nat) \ar[r]^-{\Fds(\descending)} &\Fds(\nat^\nat) \ar[u]_{\ads}
  }
\]
where
the functor $\Fds\colon \Set \to \Set$ is defined by $\Fds(X) = X + \nat \times X$,
and
the coalgebra $\cds$ 
and 
the algebra $\ads$ are defined by
\begin{align*}
  \cds(n, \cons{i}{\cons{j}{xs}}) &= \begin{cases}
    \iota_0(n+1,\cons{j}{xs}) &\text{if }i > j \\
    \iota_1(n, 1,\cons{j}{xs}) &\text{if }i\leq j,
  \end{cases} \\
  \ads(\iota_0(xs)) &= xs, \\
  \ads(\iota_1(n, xs)) &=\cons{n}{xs}.
\end{align*}
The commutativity of the diagram above is equivalent to \eqref{eq:h}.

Although such a ca-morphism uniquely exists, this does not follow directly from recursiveness of~$\cds$ nor from corecursiveness of~$\ads$ (cf.~\cite{DBLP:journals/mscs/JeanninKS17}).
We demonstrate how \autoref{prop:rec_solution} explains this unique existence, 
reinterpreting properties (i)--(iii) within our framework.

The first property (i) can be formalized as follows:
$\nat^\nat$ is the limit of the functor $\nat^{(-)} \colon \omega^\op \to \Set$ representing the $\omega^\op$-chain $(\cdots \to \nat^2 \to \nat^1 \to \nat^0)$ in $\Set$, where each morphism is given by the projection mapping of prefixes.

\newcommand{\Gds}{G_\descending}
The second property (ii) ensures that the target algebra $\ads$ satisfies the assumption of \autoref{prop:rec_solution}, that is, $\ads$ is in the image of a right adjoint induced by a step.
Define a functor $\Gds\colon \Psh{\omega} \to \Psh{\omega}$ 
by 
\begin{align*}
\Gds(f)(i) &= f(i) + \nat \times \lim_{i' < i} f (i') \\
&\cong \begin{cases}
f(i) + \nat &\text{ if }i = 0 \\
f(i) + \nat \times f(i-1) &\text{ if }i > 0,
\end{cases}
\end{align*}
 for each $f \in \Psh{\omega}$ and $i \in \omega^\op$.
For the adjunction $\Delta \dashv \lim\colon \Psh{\omega} \to \Set$,
we define a step
$\tau\colon \Delta\Fds \Rightarrow \Gds \Delta$ 
by
$\tau_{X}(i) = \mathrm{id} + \mathrm{id} \times \langle \mathrm{id} \rangle_{i' < i}\colon X+\nat \times X \to X + \nat \times \lim_{i' < i}X$.

The property (ii) can now be rephrased as the first statement of the following lemma.
The third property (iii), which is a kind of B\"uchi condition,
gives rise to the second statement.
\begin{lem}
  The following statements hold.
  \begin{enumerate}
    \item There is a $\Gds$-algebra $a'$ s.t.~$\Alg{\lim}(a') = \ads$.
    \item The coalgebra $\Coalg{\Delta}(\cds)$ is recursive.
  \end{enumerate}
\end{lem}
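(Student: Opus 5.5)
For the first statement, I would take $a'$ on the presheaf $\nat^{(-)}\in\Psh{\omega}$, the chain of finite prefixes with $\nat^{(-)}(i)=\nat^i$, whose limit is $\nat^\nat$. The component at $i$ is
\[
a'_i\colon \nat^i + \nat\times\lim_{i'<i}\nat^{i'} \to \nat^i ,
\]
defined by $\iota_0(p)\mapsto p$ and $\iota_1(n,q)\mapsto$ the length-$i$ prefix of $\cons{n}{q}$. For $i>0$ the second summand is $\nat\times\nat^{i-1}$, so this is just consing; for $i=0$ it is the unique map to $\nat^0$. Naturality in $i$ is the fact that truncation commutes with consing.

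Next I would compute $\Alg{\lim}(a')=\lim(a')\circ\sigma$, where $\sigma$ is the mate of $\tau$. Concretely, $\sigma_f\colon \lim f+\nat\times\lim f\to\lim \Gds f$ sends $\iota_0(s)$ to the cone $(\iota_0(s_i))_i$ and $\iota_1(n,s)$ to $(\iota_1(n,(s_{i'})_{i'<i}))_i$. Composing with $\lim a'$ sends $\iota_0(xs)$ to the compatible family of prefixes of $xs$, which is $xs$. It sends $\iota_1(n,xs)$ to the family of prefixes of $\cons{n}{xs}$, which is $\cons{n}{xs}$. Hence $\Alg{\lim}(a')=\ads$ up to the canonical iso $\lim\nat^{(-)}\cong\nat^\nat$.

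For the second statement I would apply \autoref{thm:wf} to $\Gds$ on $\Psh{\omega}$. I must check that $\Gds$ preserves intersections and inverse images. The summand $f(i)$ clearly does. The summand $\nat\times f(i-1)$ is a shift composed with a product with a constant, and both preserve all connected limits, in particular wide pullbacks of monos and pullbacks. Coproducts in $\Set$ commute with connected limits, so $\Gds$ preserves both.

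It then remains to show that the graph of $\Coalg{\Delta}(\cds)=\tau\circ\Delta\cds$ has no infinite path. Its nodes are $(i,(n,s))$ with $i\in\omega$ and $(n,s)\in\nat\times\nat^\nat$; all restriction maps of $\Delta$ are identities. I would compute $A_{(i,(n,s))}$ directly from the definition as the least subpresheaf $B$ with $c_i(n,s)\in\Gds B(i)$. In the case $i>j$ the value lies in the $f(i)$ summand, so $A$ is generated by $(n+1,\cons{j}{xs})$ at level $i$, and the edges go to that element at levels $\le i$. In the case $i\le j$ the value is $\iota_1(n,(1,\cons{j}{xs}))$ in $\nat\times B(i-1)$, so edges go only to levels $\le i-1$, and at level $0$ there are none.

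Along any path the level is therefore non-increasing, and it strictly decreases whenever the case $i\le j$ fires. Property (iii) says the case $i\le j$ occurs infinitely often in any infinite call sequence, since descending runs in $\nat$ are finite: along a $\iota_0$-step the index $n$ increases while the head strictly decreases, so there can be only finitely many consecutive such steps. An infinite path would thus strictly decrease the level infinitely often, which is impossible in $\omega$. By \autoref{thm:wf} the coalgebra is recursive.

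I expect the main obstacle to be the precise computation of $A_{(i,(n,s))}$, that is, confirming that the least subpresheaf really places the successor only at levels below $i$ in the $\iota_1$ case. This hinges on $\lim_{i'<i}B(i')=B(i-1)$ and on the downward closure of subpresheaves.
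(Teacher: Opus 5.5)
Your proposal is correct and follows the paper's route: the same algebra $a'$ on the prefix presheaf $\nat^{(-)}$, and the same edge computation for the graph of the transported coalgebra. In that graph the level is non-increasing and drops strictly in the $i \le j$ case, which must recur because descending runs in $\nat$ are finite, so \autoref{thm:wf} applies. You also make explicit several points the paper leaves implicit: the truncation that $a'$ needs at level $0$, the mate-based check that $\lim(a')\circ\sigma$ recovers the given algebra, and that $G_{\mathrm{ds}}$ preserves intersections and inverse images (which alternatively follows from its being familial, \autoref{cor:pra_intersection}).
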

\begin{proof}
  1)
  Define $a'\colon \Gds(\nat^{(-)}) \to \nat^{(-)}$ by
$a'(i)(\iota_0(t)) = t$ and
$a'(i)(\iota_1(n, t)) = n:t$.
Then it satisfies the required condition.

2)
  The graph of $\Coalg{\Delta}(\cds)$ has nodes of the form $(i, (n, xs))$ where $i \in \omega, n \in \nat, xs \in \nat^\nat$.
  Its edges are given by:
  \begin{itemize}
    \item $(m, (n, \cons{i}{\cons{j}{xs}})) \to (m', (n+1, \cons{j}{xs}))$ iff $i > j$ and $m \geq m'$.
    \item $(m, (n, \cons{i}{\cons{j}{xs}})) \to (m', (1, \cons{j}{xs}))$ iff $i \leq j$ and $m > m'$.
  \end{itemize}
  For each node $(i, (n, xs))$,
  any infinite path from it takes the second case ($i \leq j$ and $m > m'$) infinitely often by (iii).
  However, each time the second case is taken, the first component of the node strictly decreases.
  Therefore, the graph is well-founded, and 
  thus $\Coalg{\Delta}(\cds)$ is recursive by \autoref{thm:wf}.
\end{proof}
\autoref{prop:rec_solution} immediately yields the following.
\begin{prop}
  There is a unique ca-morphism from $\cds$ to $\ads$.
\end{prop}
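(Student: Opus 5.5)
The plan is to apply \autoref{prop:rec_solution} to the adjunction $\Delta \dashv \lim\colon \Psh{\omega} \to \Set$, with the step $\tau\colon \Delta\Fds \Rightarrow \Gds\Delta$ defined above, taking $c = \cds$ and $a = \ads$. Two hypotheses must be checked. The first is that there is a $\Gds$-algebra $a'$ with $\Alg{\lim}(a') \cong \ads$. The second is that $\Coalg{\Delta}(\cds)$ is recursive. These are exactly the two items of the preceding lemma, so the proposition should follow by citing that lemma together with \autoref{prop:rec_solution}.

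To make the argument transparent, I will also spell out the correspondence. By \autoref{lem:coalg_to_alg}, ca-morphisms from $\cds$ to $\Alg{\lim}(a')$ correspond bijectively to ca-morphisms from $\Coalg{\Delta}(\cds)$ to $a'$, via transposition along $\Delta \dashv \lim$. Recursiveness of $\Coalg{\Delta}(\cds)$ gives exactly one morphism on the right-hand side. Hence there is exactly one ca-morphism from $\cds$ into $\Alg{\lim}(a')$. Since ca-morphisms transport along an isomorphism of $\Fds$-algebras, the same holds for $\ads$.

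The only point I would double-check is that $\Alg{\lim}(a')$ really coincides with $\ads$, not merely up to an unspecified isomorphism. Here $\lim$ of $\nat^{(-)}$ is $\nat^\nat$, and the mate $\sigma\colon \Fds\lim \Rightarrow \lim\Gds$ of $\tau$ sends $\iota_1(n,t)$ to the compatible family $(\iota_1(n, t|_{i-1}))_i$. Composing with $\lim a'$ then yields the compatible family of prefixes of $\cons{n}{t}$, that is, the stream $\cons{n}{t}$ itself. The $\iota_0$ case is analogous. This is routine, and since the preceding lemma already asserts it, the main obstacle (well-foundedness of the enriched graph via the Büchi-style property (iii)) has already been discharged there through \autoref{thm:wf}.
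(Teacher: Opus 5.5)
Your proposal is correct and matches the paper's argument: the paper derives the proposition directly from \autoref{prop:rec_solution}, applied to $\Delta \dashv \lim$ with the step $\tau$, using the two items of the preceding lemma (existence of $a'$ with $\Alg{\lim}(a') = a_{\mathrm{ds}}$, and recursiveness of $\Coalg{\Delta}(c_{\mathrm{ds}})$). Your unpacking through \autoref{lem:coalg_to_alg} and your explicit check via the mate $\sigma$ are accurate, and they spell out what that lemma and \autoref{prop:rec_solution} already contain.
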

Informally, we can summarize the situation as the following diagram.
The ca-morphism $\ol{\descending}$ uniquely exists since $\Coalg{\Delta}(\cds)$ is recursive,
and it corresponds to the unique ca-morphism $\descending$ from $\cds$ to $\ads$  by \autoref{lem:coalg_to_alg}.
\[
  \xymatrix@C=0.6pc{
    &\Set \lloop{\Fds} \ar@/^2ex/[rrr]^-{\Delta} &  & &\PshNonSmash{\omega} \rloop{\Gds} \ar@/^2ex/[lll]^-{\lim}_-{\text{\shortstack{$\bot$\\ \, }}} \\
    \cds \ar@{.>}[rr]^-\descending & &\ads = \Alg{\lim}(a') &\Coalg{\Delta}(\cds) \ar@{.>}[rr]^-{\ol{\descending}} & &a'
  }
\]

\section{Steps for Familial Functors} \label{sec:steps}
Our abstract framework for non-wellfounded proof systems is based on 
generalized polynomial functors, called \emph{familial functors}, on presheaf categories.
In this section, we introduce familial functors and investigate steps for them.

We begin by briefly recalling several notions about fibrations.
A \emph{discrete fibration} is a fibration $t\colon \cat{E} \to \cat{B}$ whose fibres $\cat{E}_X$ are discrete categories.
A \emph{two-sided discrete fibration} is a pair of functors $(s\colon \cat{I} \to \cat{A}, q\colon \cat{I} \to \cat{B})$ satisfying the following conditions:
\begin{itemize}
  \item For each $a \to s(i)$ in $\cat{A}$
there is a unique lift $i' \to i$ in $\cat{I}_{q(i)}$.
  \item For each $q(i) \to b$ in $\cat{B}$
there is a unique lift $i \to i'$ in $\cat{I}_{s(i)}$.
\item For each $f\colon i \to j$ in $\cat{I}$,
the domain of the unique lift of $s(f)$ and the codomain of that of $q(f)$ are equal,
and the composite of these lifts is equal to~$f$.
\end{itemize}
\begin{prop}[{\cite[Thm.~3.2.3]{LOREGIAN2020496}}] \label{prop:two_sided_df}
  If $(s, q)$ is a two-sided discrete fibration,
  then~$s$ is a fibration and~$q$ is an opfibration. 
\end{prop}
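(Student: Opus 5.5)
By symmetry it suffices to prove that $s$ is a fibration. The argument for $q$ being an opfibration is the same with all arrows reversed: use the second axiom in place of the first and swap the roles of $\cat{A}$ and $\cat{B}$.

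\emph{Candidate lifts.} Given $i \in \cat{I}$ and $u\colon a \to s(i)$ in $\cat{A}$, the first axiom yields a unique morphism $\bar u\colon i' \to i$ in the fibre $\cat{I}_{q(i)}$ with $s(\bar u) = u$. Being in that fibre means $q(\bar u) = \id_{q(i)}$. These $\bar u$ are the lifts we will prove cartesian.

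\emph{Setting up the universal property.} Take $g\colon j \to i$ in $\cat{I}$ and $w\colon s(j) \to a$ with $u \circ w = s(g)$. We must find a unique $h\colon j \to i'$ with $s(h) = w$ and $\bar u \circ h = g$. To do this, apply the third axiom to $g$. It factors $g = g_A \circ g_B$, where:
\begin{itemize}
\item $g_B\colon j \to k$ is the unique lift of $q(g)$ in $\cat{I}_{s(j)}$, so $s(g_B) = \id$;
\item $g_A\colon k \to i$ is the unique lift of $s(g)$ in $\cat{I}_{q(i)}$, so $q(g_A) = \id$.
\end{itemize}
Now $u \circ w = s(g)$ lies over $s(i)$. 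Lift $w$ itself against $i'$ using the first axiom. Since $q(i') = q(i)$, this gives a unique $\bar w\colon k' \to i'$ in $\cat{I}_{q(i)}$ with $s(\bar w) = w$.

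\emph{Existence.} The composite $\bar u \circ \bar w$ lies in $\cat{I}_{q(i)}$ and lies over $u \circ w = s(g)$. By uniqueness of lifts in the first axiom, $\bar u \circ \bar w = g_A$, and in particular $k' = k$. Set $h \coloneqq \bar w \circ g_B$. Then $s(h) = w \circ \id = w$ and $\bar u \circ h = g_A \circ g_B = g$, as required.

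\emph{Uniqueness.} Let $h'\colon j \to i'$ also satisfy $s(h') = w$ and $\bar u \circ h' = g$. Factor $h' = h'_A \circ h'_B$ by the third axiom. Here $h'_B$ lifts $q(h')$ in $\cat{I}_{s(j)}$. Because $q(\bar u)$ is an identity, $q(h') = q(g)$, so uniqueness of lifts in the second axiom gives $h'_B = g_B$. Next, $h'_A$ is a lift of $w$ in $\cat{I}_{q(i)}$ with codomain $i'$, so by uniqueness in the first axiom $h'_A = \bar w$. Hence $h' = h$.

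The only delicate point is the bookkeeping of which fibre each lift lives in. Uniqueness in the axioms has to be applied with the correct fixed codomain (for lifts along $s$) or domain (for lifts along $q$). In particular, the identification $k' = k$ in the existence step should come directly from uniqueness of the lift of $s(g)$, not be assumed.
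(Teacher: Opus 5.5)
Your proof is correct. The paper gives no argument of its own for this proposition; it only cites Loregian--Riehl. Your proof is therefore a self-contained verification from the three axioms exactly as the paper phrases them, rather than a variant of an in-paper proof.

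The essential ingredients are the ones you isolate at the end:
\begin{itemize}
\item Uniqueness in the first (resp.\ second) axiom is uniqueness among all morphisms with the given codomain (resp.\ domain) lying over the given arrow, with the other endpoint not fixed in advance. This is what legitimately yields $k'=k$ from $\bar u\circ\bar w=g_A$. It also identifies $h'_B$ with $g_B$, codomain included, in the uniqueness step.
\item The third axiom is what lets you treat an arbitrary $g\colon j\to i$, and an arbitrary competitor $h'$, that need not be vertical for $q$.
\end{itemize}

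For the opfibration half, your appeal to symmetry is justified. The axioms for $(s,q)$ are precisely the axioms for the span $\mathbb{B}^{\op}\xleftarrow{q^{\op}}\mathbb{I}^{\op}\xrightarrow{s^{\op}}\mathbb{A}^{\op}$: the first two axioms are exchanged, and the factorization $f=\overline{s(f)}\circ\overline{q(f)}$ becomes $f^{\op}=\overline{q(f)}^{\op}\circ\overline{s(f)}^{\op}$. So your first half applied to this span shows that $q^{\op}$ is a fibration, i.e.\ that $q$ is an opfibration.

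An alternative would be to identify $(s,q)$ with the category of elements of a functor $\mathbb{A}^{\op}\times\mathbb{B}\to\Set$, where both structures are evident. Your direct route uses nothing beyond the definition, and it avoids first establishing that correspondence, whose proof relies on the same factorization argument.
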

Here the word ``discrete'' refers to the uniqueness of lifts, not
to the categories \(\cat{A}\), \(\cat{I}\), or \(\cat{B}\) being discrete.  
Under equivalences of categories,
it is known that 
discrete fibrations $\cat{E} \to \cat{B}$ correspond to functors $\cat{B}^\op \to \Set$,
and two-sided discrete fibrations $\cat{A} \leftarrow \cat{I} \rightarrow \cat{B}$ correspond to functors $\cat{A}^\op \times \cat{B} \to \Set$.
For further details, see \cite{LOREGIAN2020496}.

\subsection{Familial Functors on Presheaves}
For a functor $f\colon \cat{I} \to \cat{J}$ between small categories,
the functor $f^* \coloneqq (-) \circ f^\op\colon \Psh{\cat{J}} \to \Psh{\cat{I}}$
has both a left and a right adjoint $f_! \dashv f^* \dashv f_*$,
given by left and right Kan extensions along $f^\op$, respectively.
It is a fact that these Kan extensions can be computed as limits and colimits over fibres when~$f$ is a (op)fibration.
\begin{prop}[{cf.~\cite[Prop.~5.7,Cor.~5.8]{perrone2022kan}}] \label{prop:fib_kan_lim}
  Let $f\colon \cat{E} \to \cat{B}$ be a fibration (resp.~opfibration).
  For each $F\in \Psh{\cat{E}}$  and $b \in \cat{B}$, 
  there is an isomorphism
  $f_!(F)(b) \cong \colim_{e \in \cat{E}_b^\op}Fe$  (resp.~$f_*(F)(b) \cong \lim_{e \in \cat{E}_b^\op}Fe$).
\end{prop}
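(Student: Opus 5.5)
We compute the Kan extensions pointwise and then simplify the resulting (co)end using the (op)fibration structure. We treat the fibration case for $f_!$; the opfibration case for $f_*$ is formally dual.

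First step: the pointwise formula. Since $f^* = (-)\circ f^\op$, the functor $f_!$ is left Kan extension along $f^\op\colon \cat{E}^\op \to \cat{B}^\op$. Hence
\[
f_!(F)(b) \cong \colim_{(e,\, b \to f(e))} F e ,
\]
where the colimit is indexed by the comma category $(b \downarrow f^\op)$. Equivalently, it is indexed by the opposite of the comma category $(f \downarrow b)$, whose objects are pairs $(e, u\colon f(e)\to b)$ in $\cat{B}$. Writing $\iota\colon \cat{E}_b \hookrightarrow (f\downarrow b)$ for the inclusion $e \mapsto (e,\id_b)$, it then suffices to show that $\iota^\op\colon \cat{E}_b^\op \to (f\downarrow b)^\op$ is final. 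Finality lets us restrict the colimit along $\iota^\op$ without changing it, which gives the stated formula.

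Second step: finality via a right adjoint. I will show that $\iota$ has a right adjoint $r\colon (f\downarrow b)\to \cat{E}_b$, given by $r(e,u) \coloneqq u^*e$, the domain of a chosen cartesian lift $\bar u\colon u^*e \to e$ of $u$. Take a morphism $(e',\id) \to (e,u)$, which is a map $g\colon e' \to e$ with $f(g)=u$. The universal property of the cartesian lift $\bar u$ gives a unique vertical $h\colon e'\to u^*e$ with $\bar u\circ h = g$. This yields the natural bijection
\[
(f\downarrow b)(\iota e',(e,u)) \cong \cat{E}_b(e', r(e,u)).
\]
So $\iota \dashv r$. Passing to opposites, $r^\op \dashv \iota^\op$, so $\iota^\op$ is a right adjoint.

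Third step: conclude. A right adjoint is a final functor: for each object $d$ of the codomain, the comma category $(d\downarrow \iota^\op)$ has an initial object (the unit at $d$), so it is nonempty and connected. Restricting the colimit along the final functor $\iota^\op$ gives $f_!(F)(b)\cong\colim_{e\in\cat{E}_b^\op}Fe$.

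The dual case. For an opfibration and $f_*$, the right Kan extension is a limit over $(f^\op\downarrow b)$. Now the fibre inclusion $\cat{E}_b\hookrightarrow (b\downarrow f)$ has a left adjoint given by opcartesian pushforward. This makes the inclusion initial in the appropriate variance, so the limit restricts to $\lim_{e\in\cat{E}_b^\op}Fe$.

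The main obstacle is keeping the variances straight. We must check that the indexing category really is $\cat{E}_b^\op$ rather than $\cat{E}_b$, and that the adjoint appears on the correct side, so that we get finality for colimits and initiality for limits. I would first verify this carefully on a discrete fibration, where $\cat{E}_b$ is discrete and the formula reduces to a coproduct.
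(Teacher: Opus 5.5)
Your strategy is the standard argument behind the cited result. The paper gives no proof of its own, only the reference to Perrone--Tholen. You compute the Kan extension pointwise over a comma category, then show the fibre inclusion into it is final (resp.\ initial) because (op)cartesian lifts supply an adjoint. However, you have swapped the two comma categories, so steps 1--2 as written are false.

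Since $f_!(F)(b)\cong\int^{e}\cat{B}(b,fe)\times Fe$, the colimit for $f_!$ is indexed by $(b\downarrow f)^\op$. Its objects are $(e,u\colon b\to f(e))$, not $(e,u\colon f(e)\to b)$ as in your $(f\downarrow b)^\op$. Dually, since $f_*(F)(b)\cong\Psh{\cat{E}}(\cat{B}(f-,b),F)$, the limit for $f_*$ is indexed by $(f\downarrow b)^\op$, not $(b\downarrow f)^\op$.

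With your labels, step 2 does not typecheck. For $u\colon f(e)\to b$ there is no cartesian lift $\bar u\colon u^*e\to e$ of $u$. Also, a morphism $(e',\id_b)\to(e,u)$ in $(f\downarrow b)$ is a $g$ with $u\circ f(g)=\id_b$, not $f(g)=u$. Concretely, for $f=\id_{\cat{B}}$ the object $(b,\id_b)$ is terminal in $\cat{B}/b$, so the fibre inclusion has no right adjoint in general. Similarly, $\cat{E}_b\hookrightarrow(b\downarrow f)$ has no left adjoint by opcartesian pushforward, since there is no arrow $f(e)\to b$ to push along.

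The repair is just to exchange the labels. What you actually compute in step 2 (a cartesian lift of $u\colon b\to f(e)$, hom-sets $\{g\mid f(g)=u\}$) is exactly a proof that $\cat{E}_b\hookrightarrow(b\downarrow f)$ has a right adjoint. Likewise, opcartesian lifts of $u\colon f(e)\to b$ give a left adjoint to $\cat{E}_b\hookrightarrow(f\downarrow b)$. With the correct indexing categories, your finality argument in step 3 and the dual argument go through unchanged.

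To avoid this bookkeeping altogether, use three facts: $f$ is a fibration iff $f^\op$ is an opfibration, $(\cat{E}^\op)_b=\cat{E}_b^\op$, and $f_!,f_*$ are Kan extensions along $f^\op$. It then suffices to prove once the covariant fact that left Kan extension along an opfibration is the colimit over the fibre, and dually that right Kan extension along a fibration is the limit over the fibre.
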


We work with a class of functors on presheaf categories
that generalises the usual notion of polynomial functors.
Since our focus is mainly on coalgebras,
we restrict our attention to endofunctors;
nevertheless, all results in this section can be naturally extended to functors between different presheaf categories.
\begin{defi}[familial functor, cf.~\cite{Weber2007}] \label{def:gpf}
  A \emph{polynomial} is 
  a diagram of small categories $\cat{A} \xleftarrow{s} \cat{I} \xxrightarrow{q} \cat{J} \xxrightarrow{t} \cat{A}$ such that 
  $(s, q)$ forms a two-sided discrete fibration and
 ~$t$ is a discrete fibration.
  The \emph{familial functor} of a polynomial $P = (s, q, t)$ is the functor $\polyfunc{P} \coloneqq t_!q_*s^*\colon \Psh{\cat{A}} \to \Psh{\cat{A}}$. 
\end{defi}

By \autoref{prop:two_sided_df} and \autoref{prop:fib_kan_lim}, 
we obtain
\[
\polyfunc{P}(G)(a) \cong \coprod_{j \in \cat{J}_a^\op} \lim_{i \in \cat{I}_j^\op} G(si)
\quad
\text{ for each } G \in \PshTight{\cat{A}} \text{ and } a \in \cat{A},
\]
where $\cat{J}_a$ is the fibre of~$t$ over~$a$ and $\cat{I}_j$ is the fibre of~$q$ over~$j$.
In particular,
when all categories in the polynomial are discrete,
this expression reduces to
 $\coprod_{j \in \cat{J}_a^\op} \prod_{i \in \cat{I}_j^\op} G(si)$.
 Hence familial functors generalize the usual polynomial functors built from coproducts and products.
 Moreover, in this discrete setting, $\polyfunc{P}$ coincides with the notion of polynomial functors on slice categories~\cite{gambino2013polynomial},
 which is another generalisation of the usual polynomial functors.

\begin{rem}
  The two-sided discrete fibration condition on $(s, q)$ can be removed 
  from \autoref{def:gpf}
  without changing 
  the class of functors obtained (cf.~\autoref{prop:pra_poly}).
  A more general notion obtained by removing the restriction that~$t$ is a discrete fibration
  appears as \emph{generalized polynomial functors} in \cite[Def.~4.1]{Fiore12}.
\end{rem}

We now proceed to study steps for familial functors.
\begin{lem} \label{lem:diag}
  Let $P = (s, q, t)$ and $P' = (s', q', t')$ be polynomials,
  and $u, u_1, u_2$ be functors making the following diagram commute
  and the rightmost square a pullback.
    \begin{equation} \label{eq:poly_morph}
      \xymatrix{
        \cat{A}' \ar[d]^u &\cat{I}'  \ar[l]_{s'}  \ar[r]^-{q'} \ar[d]^{u_1} &\cat{J}' \ar[r]^{t'} \ar[d]^{u_2} \pullbackmark{0, 1}{1, 0} &\cat{A}' \ar[d]^u \\
        \cat{A} &\cat{I}  \ar[l]_{s}  \ar[r]^-{q} &\cat{J} \ar[r]^{t} &\cat{A}
      }
    \end{equation}
  Then there is a canonical natural transformation $\tau\colon u^* \polyfunc{P} \Rightarrow \polyfunc{P'} u^*$.
  \end{lem}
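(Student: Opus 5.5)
We construct $\tau\colon u^* t_! q_* s^* \Rightarrow t'_! q'_* s'^* u^*$ as a composite of three canonical 2-cells, one for each square of \eqref{eq:poly_morph}. First, $s u_1 = u s'$ gives $u_1^* s^* = (s u_1)^* = (u s')^* = s'^* u^*$. Second, $q u_1 = u_2 q'$ gives $u_1^* q^* = q'^* u_2^*$. Third, $t u_2 = u t'$ gives $u_2^* t^* = t'^* u^*$, and this square is a pullback.

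The plan is to chain three transformations:
\[
u^* t_! \xRightarrow{\ \beta\ } t'_! u_2^*, \qquad u_2^* q_* \xRightarrow{\ \gamma\ } q'_* u_1^*, \qquad u_1^* s^* = s'^* u^*,
\]
and to set $\tau \coloneqq (t'_! q'_* \cdot \mathrm{eq}) \circ (t'_! \gamma\, s^*) \circ (\beta\, q_* s^*)$. The middle step $\gamma$ is the mate of the identity $q'^* u_2^* = u_1^* q^*$ under $q^* \dashv q_*$ and $q'^* \dashv q'_*$. Concretely, $\gamma$ is the transpose of
\[
q'^* u_2^* q_* = u_1^* q^* q_* \xRightarrow{u_1^* \varepsilon} u_1^*,
\]
where $\varepsilon$ is the counit of $q^* \dashv q_*$. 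This requires no hypothesis.

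For $\beta$, the mate of the identity $t'^* u^* = u_2^* t^*$ under $t_! \dashv t^*$ and $t'_! \dashv t'^*$ goes the wrong way: it gives $t'_! u_2^* \Rightarrow u^* t_!$. So instead I would show that this Beck--Chevalley map is an isomorphism and take $\beta$ to be its inverse. This is the main obstacle, and it is where both the pullback hypothesis and the discreteness of $t$ are used.

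To check that the Beck--Chevalley map is invertible, I would compute pointwise with \autoref{prop:fib_kan_lim}. Since $t$ is a discrete fibration, each fibre $\cat{J}_a$ is discrete, so $t_!(H)(a) \cong \coprod_{j \in \cat{J}_a} H(j)$. Moreover $t'$, being a pullback of a discrete fibration, is again a discrete fibration, whose fibre over $a'$ is isomorphic to $\cat{J}_{u(a')}$ via $u_2$. Hence
\[
(t'_! u_2^* H)(a') \cong \coprod_{j' \in \cat{J}'_{a'}} H(u_2 j') \cong \coprod_{j \in \cat{J}_{u a'}} H(j) \cong (u^* t_! H)(a'),
\]
and one verifies that the canonical map is exactly this bijection on coproduct summands. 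Naturality in $a'$ follows from the reindexing of discrete fibrations being preserved under pullback.

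Composing gives $\tau$. Its canonicity comes from the fact that each piece is either an identity arising from commutativity or a mate, or the inverse of one. Its components can be described explicitly as
\[
\coprod_{j \in \cat{J}_{ua'}} \lim_{i \in \cat{I}_j} G(si) \longrightarrow \coprod_{j' \in \cat{J}'_{a'}} \lim_{i' \in \cat{I}'_{j'}} G(u s' i').
\]
This map sends the summand indexed by $j$ to the summand indexed by the unique $j'$ over $a'$ with $u_2 j' = j$, and it restricts the cone along $u_1$.
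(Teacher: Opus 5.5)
Your proposal is correct and matches the paper's proof: $\tau$ is the same composite of three cells, namely the identity $u_1^*s^* = s'^*u^*$, the mate of $q'^*u_2^* = u_1^*q^*$ under $q^*\dashv q_*$ and $q'^*\dashv q'_*$, and the inverse of the Beck--Chevalley isomorphism for the pullback square of $t$. The only difference is in justifying that last cell: you check invertibility by a direct fibrewise coproduct computation, whereas the paper cites the Beck--Chevalley condition for locally cartesian closed categories, viewing $t_!$ as a dependent sum along the discrete fibration $t$.
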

\begin{proof}
  We define $\tau$ as the following natural transformation:
    \begin{equation} \label{eq:poly_morph_nat}
      \xymatrix@C=3em{
        \PshNonSmash{\cat{A}'} \ar[r]^{{s'}^*} &\PshNonSmash{\cat{I}'}  \ar[r]^-{q'_{*}}&\PshNonSmash{\cat{J}'} \ar[r]^{t'_!} &\PshNonSmash{\cat{A}'} \\
        \PshNonSmash{\cat{A}} \ar[u]^{u^*} \ar[r]^{s^*} &\PshNonSmash{\cat{I}} \ar@{=>}[ul]|{\cong} \ar[u]^{u_1^*} \ar[r]^-{q_*} &\PshNonSmash{\cat{J}} \ar[r]^{t_!} \ar@{=>}[ul] \ar[u]^{{u_2}^*} &\PshNonSmash{\cat{A}} \ar[u]^{u^*} \ar@{=>}[ul]|{\cong}
      }
    \end{equation}
  The left-hand natural transformation is an isomorphism because $s \circ u_1 = u \circ s'$.
  The middle natural transformation is the mate, with respect to the adjunctions $q'^* \dashv q'_*$ and $q^* \dashv q_*$, of the natural isomorphism 
  $q'^* \circ u_2^* \cong  u_1^* \circ q^*$ induced by the equality $u_2 \circ q' = q \circ u_1$.
  The right-hand one is because of the Beck-Chevalley condition for locally cartesian closed categories.
\end{proof}
This canonical construction of steps for familial functors motivates the following category of polynomials.
\begin{defi}
 The category $\Poly$  is defined as follows.
 Its objects are polynomials $P = (s, q, t)$.
 A morphism $P' \to P$ is a tuple $(u, u_1, u_2)$ of functors 
 forming a commutative diagram as in \eqref{eq:poly_morph},
 whose rightmost square is a pullback.
 By abuse of notation, we often write~$u$ for the morphism $(u, u_1, u_2)$.
\end{defi}
For readers familiar with parametric right adjoint functors,
we note that
this category $\Poly$ is closely related to parametric right adjoint functors 
and a class of natural transformations for them.
While this connection helps understanding familial functors,
it is not essential for the remainder of the paper; readers who are not interested in parametric right adjoints may skip the rest of this subsection,
except for \autoref{cor:pra_intersection}, which ensures that $\polyfunc{P}$ satisfies the assumptions of  \autoref{thm:wf}.

Recall that a functor $T\colon \PshTight{\cat{A}} \to \PshTight{\cat{A}}$ is a \emph{parametric right adjoint (or shortly, p.r.a.)}
if 
the induced functor $T_1\colon \PshTight{\cat{A}} \to \PshTight{\cat{A}}/T1$ sending~$X$ to $T(!_X)\colon TX \to T1$ has a left adjoint
where $1$ is a terminal object  in $\PshTight{\cat{A}}$.
Note that~$T$ factors as the composite of $T_1$ with the forgetful functor $\PshTight{\cat{A}}/T1 \to \PshTight{\cat{A}}$.
The following fact discussed in \cite[Remark 2.12]{Weber2007} shows
that the class of familial functors coincides with 
that of p.r.a.~functors.
\begin{prop} \label{prop:pra_poly}
  For each polynomial~$P$, the functor $\polyfunc{P}$ is p.r.a.
  Conversely, every p.r.a.~functor $T\colon  \PshTight{\cat{A}} \to \PshTight{\cat{A}}$ is a familial functor of some polynomial (up to isomorphism).
\end{prop}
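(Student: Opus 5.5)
For the first claim we show directly that $\polyfunc{P} = t_!q_*s^*$ is a parametric right adjoint, by establishing that the induced functor $T_1\colon \Psh{\cat{A}} \to \Psh{\cat{A}}/\polyfunc{P}1$ has a left adjoint. We factor $T_1$ through slices. Since $s^*$ and $q_*$ are right adjoints, $q_*s^*$ is a right adjoint and hence a p.r.a.

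For $t_!$, recall that $t$ is a discrete fibration, so $\cat{J} \simeq \int J$ for the presheaf $J \coloneqq t_!1 \in \Psh{\cat{A}}$. There is then the standard equivalence $\Psh{\cat{J}} \simeq \Psh{\cat{A}}/J$. Under this equivalence $t_!$ becomes the forgetful functor $\Psh{\cat{A}}/J \to \Psh{\cat{A}}$, using the explicit formula $t_!(G)(a) \cong \coprod_{j \in \cat{J}_a} G(j)$ from \autoref{prop:fib_kan_lim}. Moreover, $\polyfunc{P}1 = t_!q_*1$, and $q_*1 = 1$ because the right adjoint $q_*$ preserves terminal objects, so $\polyfunc{P}1 \cong t_!1 = J$.

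Consequently $T_1$ is identified with the composite
\[
\Psh{\cat{A}} \xrightarrow{s^*} \Psh{\cat{I}} \xrightarrow{q_*} \Psh{\cat{J}} \simeq \Psh{\cat{A}}/J .
\]
This composite has a left adjoint, namely $s_!q^*$ precomposed with the equivalence. The step to check carefully is that the map $T_1(X) = \polyfunc{P}(!_X)$ indeed corresponds to $q_*s^*X$ viewed as an object over $J$. This follows because $t_!$ applied to the unique map $q_*s^*X \to 1$ is the projection to $J$. Hence $\polyfunc{P}$ is p.r.a.

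For the converse, let $T$ be p.r.a.\ with left adjoint $E\colon \Psh{\cat{A}}/T1 \to \Psh{\cat{A}}$ of $T_1$. Set $\cat{J} \coloneqq \int T1$ with $t$ its discrete fibration, so that $\Psh{\cat{A}}/T1 \simeq \Psh{\cat{J}}$. For $j = (a,x) \in \cat{J}$, let $E_j \coloneqq E(y_j)$, where $y_j$ is the representable of $\cat{J}$, equivalently $x\colon y_a \to T1$. By the adjunction, $T(X)(a) \cong \coprod_{x \in T1(a)} \Psh{\cat{A}}(E_{(a,x)}, X)$, naturally in $a$. Each $\Psh{\cat{A}}(E_j, X)$ is the limit of $X \circ \pi$ over the category of elements of $E_j$. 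We therefore define $\cat{I}$ as the category of pairs $(j, e)$ with $e$ an element of $E_j$; this is the Grothendieck construction of the functor $\cat{A}^\op \times \cat{J} \to \Set$, $(a', j) \mapsto E_j(a')$. Take $s(j,e) = $ the base of $e$ and $q(j,e) = j$. By the correspondence between two-sided discrete fibrations and profunctors recalled before \autoref{prop:two_sided_df}, $(s,q)$ is a two-sided discrete fibration.

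Finally, \autoref{prop:fib_kan_lim} gives $q_*s^*X(j) \cong \lim_{\cat{I}_j^\op} X(s-) \cong \Psh{\cat{A}}(E_j, X)$, natural in $j$, and applying $t_!$ recovers $T$ up to isomorphism. The main obstacle is the variance and naturality bookkeeping here: the fibre $\cat{I}_j$ must be the opposite-appropriate form of $\int E_j$ so that the limit formula matches the hom-set, and the isomorphism must be natural in $j$ and not only pointwise. We would fix these conventions by checking the representable case $E_j = y_{a'}$ first, where both sides reduce to $X(a')$.
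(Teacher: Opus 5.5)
Your proof is correct. For the converse you use the same construction the paper sketches right after the statement:
\begin{itemize}
\item $t$ is the projection from the category of elements of $T1$;
\item $j \mapsto E(y_j)$ is the functor that \autoref{prop:equiv_ra} attaches to the right adjoint $T_1$ under $\Psh{\cat{A}}/T1 \simeq \Psh{\cat{A}/T1}$;
\item $(s,q)$ is the two-sided discrete fibration of the associated profunctor $\cat{A}^\op \times \cat{J} \to \Set$.
\end{itemize}

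The forward direction is where you differ. The paper gives no argument there and only cites Weber's Remark~2.12. Your direct proof is the standard argument behind that remark and makes the proposition self-contained: you identify $t_!$ with the forgetful functor $\Psh{\cat{A}}/J \to \Psh{\cat{A}}$ and observe $\polyfunc{P}1 \cong t_!1 = J$. Then $T_1$ becomes $q_*s^*$ up to equivalence, with left adjoint $s_!q^*$. It uses only that $t$ is a discrete fibration, which fits the paper's remark that the two-sided condition on $(s,q)$ is inessential.

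The variance issue you flag as the main obstacle is harmless. With the paper's conventions, the fibre $\cat{I}_j$ is literally the category of elements of $E_j$. \autoref{prop:fib_kan_lim} then gives $\lim_{\cat{I}_j^\op} X(s-)$, which is exactly the co-Yoneda description of $\Psh{\cat{A}}(E_j, X)$. Naturality in $j$ also holds. For $g\colon j \to j'$, opcartesian transport in $\cat{I}$ is the functor on elements induced by $E(y_g)$, and transport morphisms lie in fibres of $s$. Hence the restriction map of $q_*s^*X$ is precomposition with $E(y_g)$.
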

  The canonical polynomial $(s, q, t)$ corresponding to a given p.r.a.~$T$ can be constructed as follows.
  Let $\cat{A}/T1$ be the category of elements, obtained as the pullback of $\PshTight{\cat{A}}/T1 \to \PshTight{\cat{A}}$ along the Yoneda embedding.
  Define~$t$ to be the functor $\cat{A}/T1 \to \cat{A}$.
  Consider the functor $\cat{A}/T1 \to \PshTight{\cat{A}}$,
  that corresponds to the right adjoint functor $T_1$ with the equivalence $\Psh{\cat{A}/T1} \simeq \PshTight{\cat{A}}/T1$ (see \autoref{prop:equiv_ra}).
  The two-sided discrete fibration $(s, q)$ is then defined
  as the associated one as in the discussion following \autoref{prop:two_sided_df}.

An important property of p.r.a.~functors is that they preserve connected limits~\cite{Weber2007}.
As a consequence, $\polyfunc{P}$ preserves limits of $\omega^\op$-chains,
and thus the final coalgebra of $\polyfunc{P}$ can be constructed as 
the limit of the final $\omega^\op$-chain (cf.~\cite{Adamek1974}).
Moreover,
since preservation of connected limits implies preservation of wide pullbacks, 
we obtain the following.
\begin{cor} \label{cor:pra_intersection}
 For each polynomial~$P$,
 the functor $\polyfunc{P}$ preserves inverse images and intersections. 
\end{cor}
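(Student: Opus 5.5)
We derive \autoref{cor:pra_intersection} from \autoref{prop:pra_poly} together with the fact, recalled just before the statement, that p.r.a.~functors preserve connected limits~\cite{Weber2007}. By \autoref{prop:pra_poly}, $\polyfunc{P}$ is p.r.a., so it preserves all connected limits. It then suffices to check that intersections and inverse images are connected limits in $\Psh{\cat{A}}$, and that $\polyfunc{P}$ sends the relevant monomorphisms to monomorphisms.

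\textbf{Inverse images.} The inverse image of a mono $m\colon A \to B$ along $f\colon X \to B$ is a pullback. A cospan is connected, so $\polyfunc{P}$ preserves this pullback. It remains to see that $\polyfunc{P}(m)$ is again a mono. A morphism $m$ is monic iff its kernel pair square (the pullback of $m$ along itself, with both projections the identity) is a pullback. This is again a connected limit, so $\polyfunc{P}$ preserves it, and hence $\polyfunc{P}(m)$ is monic. Therefore $\polyfunc{P}$ applied to the inverse-image square yields the inverse image of $\polyfunc{P}(m)$ along $\polyfunc{P}(f)$.

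\textbf{Intersections.} An intersection of a family $(m_k\colon B_k \rightarrowtail C)_{k\in K}$ is a wide pullback, i.e.~the limit of the diagram with one apex $C$ and legs $m_k$. For nonempty $K$ this diagram is connected, since every $B_k$ is joined to $C$, so $\polyfunc{P}$ preserves it; by the mono argument above, the images $\polyfunc{P}(m_k)$ are monos, so the result is the intersection of the $\polyfunc{P}(m_k)$. The empty intersection is $C$ itself (the limit of the one-object diagram $C$). This is also a connected limit and is trivially preserved.

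The main point needing care, if any, is confirming that wide pullbacks of arbitrary (possibly infinite) families fall under the connected limits that p.r.a.~functors preserve. This holds because Weber's result is for all small connected limits. As a sanity check, one can verify directly from the formula
\[\polyfunc{P}(G)(a) \cong \coprod_{j \in \cat{J}_a^\op} \lim_{i \in \cat{I}_j^\op} G(si)\]
that limits commute with the inner $\lim$ and that connected limits in $\Set$ commute with coproducts.
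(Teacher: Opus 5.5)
Your proposal is correct and follows the paper's argument: $\polyfunc{P}$ is p.r.a.\ by \autoref{prop:pra_poly}, p.r.a.\ functors preserve connected limits, and inverse images and intersections are (wide) pullbacks, hence connected limits. Your extra checks, that $\polyfunc{P}$ preserves monos via the kernel-pair square and that the empty intersection is trivially preserved, make explicit details the paper leaves implicit.
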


The next proposition shows that
the class of natural transformation induced by morphisms in $\Poly$ 
coincides with that of natural transformations of the form $\tau\colon u^*F \Rightarrow F' u^*$,
whose component at $1$ is an isomorphism
where $F, F'$ are p.r.a.~functors.
See \autoref{ap:vpoly_morph} for the proof.
\begin{prop} \label{prop:vpoly_morph}
  \begin{enumerate}
    \item 
    For any morphism in $\Poly$, 
    the canonical natural transformation $\tau$ given by \autoref{lem:diag} satisfies that $\tau_1$ is an isomorphism.
  \item
  Let $F\colon \PshTight{\cat{A}} \to \PshTight{\cat{A}}$ and $F'\colon \PshTight{\cat{A}'} \to \PshTight{\cat{A}'}$ be p.r.a.~functors 
  and $u\colon \cat{A}' \to \cat{A}$ be a functor.
  For
  any natural transformation $\tau\colon u^*F \Rightarrow F'u^*$
  such that $\tau_1$ is an isomorphism,
  there exists a morphism in $\Poly$ inducing $\tau$ (up to isomorphism).
  \end{enumerate}
\end{prop}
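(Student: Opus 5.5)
For part (1) I will compute the canonical transformation $\tau$ of \autoref{lem:diag} at the terminal presheaf $1$ by chasing $1$ through the three squares of \eqref{eq:poly_morph_nat}. Since $u^*$, $s^*$, $s'^*$, $u_1^*$, $u_2^*$ are precomposition functors and right adjoints, they preserve the terminal object. Since $q_*$, $q'_*$ are right adjoints, they do too. Hence at $1$ the left square is trivially an isomorphism $1 \cong 1$. The middle square, being the mate of $q'^*u_2^* \cong u_1^* q^*$, has a component $u_2^* q_* 1 \to q'_* u_1^* 1$ which is a map between terminal objects, hence an isomorphism. The right square is the Beck--Chevalley isomorphism, invertible at every object. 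So $\tau_1$ is a composite of isomorphisms. The only point to check is that the Beck--Chevalley map $t'_! u_2^* \Rightarrow u^* t_!$ really is invertible. This holds because $t$ is a discrete fibration and the right square is a pullback: by \autoref{prop:fib_kan_lim}, $t_!(G)(a)$ is the coproduct over the discrete fibre $\cat{J}_a$, and the fibres of $t'$ over $a'$ are exactly $\cat{J}_{u a'}$.

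For part (2) I will use the representation of p.r.a.\ functors from \autoref{prop:pra_poly}. Write $F \cong \polyfunc{P}$ and $F' \cong \polyfunc{P'}$ with canonical polynomials, where $\cat{J} = \cat{A}/F1$ and $\cat{J}' = \cat{A}'/F'1$. The isomorphism $\tau_1\colon u^*F1 \cong F'1$ transports, via categories of elements, to an isomorphism $\cat{A}'/u^*F1 \cong \cat{A}'/F'1$. Now $\cat{A}'/u^*F1$ is precisely the pullback of $t\colon \cat{A}/F1 \to \cat{A}$ along $u$. This yields $u_2\colon \cat{J}' \to \cat{J}$ and makes the rightmost square a pullback, up to replacing $P'$ by an isomorphic polynomial.

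Next I obtain $u_1$. Over each element $j' = (a', y)$ of $F'1$ with $y = \tau_1(x)$, the right adjoint data give representing presheaves: the left adjoint of $F'_1$ sends $j'$ to a presheaf $E'_{j'}$ on $\cat{A}'$ with $F'X(a')_y \cong \Psh{\cat{A}'}(E'_{j'}, X)$, and similarly $E_{u_2 j'}$ on $\cat{A}$. Restricting $\tau$ to the fibre over $y$ gives a map natural in $X$, from $\Psh{\cat{A}}(E_{u_2 j'}, X)$ to $\Psh{\cat{A}'}(E'_{j'}, u^*X) \cong \Psh{\cat{A}}(u_! E'_{j'}, X)$. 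By Yoneda this corresponds to a morphism $u_!E'_{j'} \to E_{u_2 j'}$, equivalently $E'_{j'} \to u^*E_{u_2 j'}$. Taking categories of elements, this gives a functor from $\int E'_{j'}$ to $\int E_{u_2 j'}$ over $u$. Since $\cat{I}'_{j'}$ and $\cat{I}_{u_2 j'}$ are these categories of elements, assembling over all $j'$ produces $u_1\colon \cat{I}' \to \cat{I}$ with $s u_1 = u s'$ and $q u_1 = u_2 q'$. Naturality in $j'$ is what lets the fibrewise maps glue into a functor compatible with the two-sided fibration structure.

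Finally I will check that the transformation induced by $(u, u_1, u_2)$ via \autoref{lem:diag} agrees with $\tau$. Evaluating at $a'$ and at a summand $j'$, the induced map is the restriction of limits along $u_1$, i.e.\ precomposition with $E'_{j'} \to u^*E_{u_2 j'}$, which by construction is $\tau$. I expect the main obstacle to be the coherence of this step: making sure the mate in the middle square really computes restriction of limits along the fibrewise functor, and that the fibrewise Yoneda maps are natural in $j'$. I would handle this using the equivalence $\Psh{\cat{A}/T1} \simeq \Psh{\cat{A}}/T1$ and work entirely with the formula $\coprod_j \lim_{i} G(si)$.
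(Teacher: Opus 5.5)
Your proposal is correct and follows essentially the same route as the paper. For (1), the middle mate is invertible at $1$ because all the functors involved are right adjoints and so preserve the terminal object, while the outer squares are isomorphisms everywhere. For (2), you use $\tau_1^{-1}$ to get the pullback square via $u_2(a',y) = (ua', \tau_1^{-1}(y))$, which is the paper's $v$, and then turn the fibrewise restriction of $\tau$ into maps $E'_{j'} \to u^*E_{u_2 j'}$ of representing presheaves.

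The only difference is presentation. You glue the fibrewise Yoneda maps into $u_1$ by hand using naturality in $j'$. The paper packages that same step as the nerve correspondence (\autoref{prop:equiv_ra}, \autoref{prop:corresp_natural}), followed by the Rosebrugh--Wood correspondence between profunctor morphisms and morphisms of two-sided discrete fibrations.
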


\subsection{Graphs and Steps for Familial Functors} \label{sec:poly_step}

As we saw in the previous subsection,
morphisms in $\Poly$ induce steps for familial functors,
which
in turn
yield  the following correspondence between ca-morphisms
by \autoref{prop:steps} and~\autoref{lem:coalg_to_alg}.
  \begin{prop} 
    There is a functor $\Poly^\op \to \Cat$
    mapping 
    $P \in \Poly$ to $\Coalg{\polyfunc{P}}$
    and a morphism~$u$ to 
    the functor $\Coalg{u^*}$
    induced by
  the canonical step
  $\tau\colon u^* \polyfunc{P} \Rightarrow \polyfunc{P'} u^*$ (see \autoref{prop:steps}).
  In the same way, there is a functor $\Poly \to \Cat$
    mapping 
    $P \in \Poly$ to $\Alg{\polyfunc{P}}$
    and~$u$ to 
    the functor $\Alg{u_*}$.
  \end{prop}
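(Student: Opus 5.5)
The plan is to show that the assignment is pseudo-functorial. The core task is verifying that the canonical steps of \autoref{lem:diag} compose correctly. Concretely, for composable morphisms $P'' \xrightarrow{v} P' \xrightarrow{u} P$ in $\Poly$, composition is componentwise, $(u\circ v, u_1\circ v_1, u_2\circ v_2)$. The rightmost square of the composite is again a pullback by the pasting lemma, so composition is well defined.

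Let $\tau^u\colon u^*\polyfunc{P}\Rightarrow \polyfunc{P'}u^*$, $\tau^v$, and $\tau^{uv}$ be the canonical steps. We must show that $\tau^{uv}$ coincides with the pasting
\[
(uv)^*\polyfunc{P}\cong v^*u^*\polyfunc{P}\xrightarrow{v^*\tau^u} v^*\polyfunc{P'}u^*\xrightarrow{\tau^v u^*}\polyfunc{P''}v^*u^*\cong\polyfunc{P''}(uv)^*.
\]
We also need $\tau^{\id}$ to be the identity modulo the canonical isomorphisms.

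The check is done square by square in diagram \eqref{eq:poly_morph_nat}. For the left square, the isomorphisms $s'^*u^*\cong u_1^*s^*$ come from strict equalities of functors, and these compose functorially. For the middle square, the mate construction is functorial with respect to pasting: the mate of a horizontal pasting of squares equals the pasting of the mates. This is a standard property of the mate correspondence, so it suffices to note that the underlying isomorphisms $q^*u_2^*\cong u_1^*q^*$ paste correctly. For the right square, the Beck--Chevalley isomorphism for the pullback square is itself the mate of an identity with respect to $t'_!\dashv t'^*$ and $t_!\dashv t^*$. Its compatibility with pasting pullback squares is again the functoriality of mates. Identities give identity mates, which settles the unit law.

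Given this, \autoref{prop:steps} turns each step into a lifting $\Coalg{u^*}\colon\Coalg{\polyfunc{P}}\to\Coalg{\polyfunc{P'}}$, sending $c$ to $\tau^u\circ u^*c$. The composition law then follows by a direct computation:
\[
\tau^v_{u^*C}\circ v^*(\tau^u_C\circ u^*c)=(\tau^v u^*\cdot v^*\tau^u)_C\circ (uv)^*c=\tau^{uv}_C\circ(uv)^*c.
\]
Strictly, this uses $(uv)^*=v^*u^*$, which holds on the nose for precomposition functors. The coalgebra assignment is therefore a strict functor $\Poly^\op\to\Cat$.

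For algebras, use the mate $\sigma^u\colon\polyfunc{P}u_*\Rightarrow u_*\polyfunc{P'}$ of $\tau^u$ under $u^*\dashv u_*$. Mates respect pasting, so $\sigma^{uv}$ is the composite of $\sigma^u$ and $\sigma^v$, modulo the isomorphism $(uv)_*\cong u_*v_*$. Hence $\Alg{u_*}$, given by $a\mapsto u_*a\circ\sigma^u$, is covariantly functorial. Because right Kan extensions are only defined up to isomorphism, this part is functorial only up to coherent isomorphism, i.e.\ a pseudofunctor; alternatively, one can choose the $u_*$ strictly.

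The main obstacle is the middle and right squares: checking carefully that the Beck--Chevalley isomorphisms and the mates paste correctly. If the general functoriality-of-mates lemma feels too slick, I would instead verify the compatibility pointwise, using the fibre formula $\polyfunc{P}(G)(a)\cong\coprod_{j}\lim_{i\in\cat{I}_j^\op}G(si)$. There $\tau^u$ acts by reindexing $j\mapsto u_2 j'$ and restricting limits along $u_1\colon\cat{I}'_{j'}\to\cat{I}_{u_2j'}$, and both reindexing and restriction visibly compose.
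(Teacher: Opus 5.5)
Your proposal is correct and is essentially the paper's argument. The paper only notes that the canonical 2-cells in \eqref{eq:poly_morph_nat} respect composition and identities by the triangle identities, which is exactly the compatibility of mates with pasting that you verify square by square, after which the liftings of \autoref{prop:steps} compose as you compute; your remark that the algebra part holds only up to the canonical isomorphisms $(uv)_*\cong u_*v_*$ is a fair refinement of the statement, though your suggested remedy of choosing the $u_*$ strictly would need its own justification, so the pseudofunctor reading is the safe one.
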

  \begin{proof}
    This holds because the canonical natural transformations in~\eqref{eq:poly_morph_nat}
    respect composition by the triangle identities.
  \end{proof}
  \begin{cor}\label{cor:alg_coalg_crp}
    Let $u\colon P' \to P$ be a morphism  in $\Poly$.
  For each $\polyfunc{P}$-coalgebra~$c$ and each $\polyfunc{P'}$-algebra~$a$,
  there is 
  a bijective correspondence between ca-morphisms from~$c$ to $\Alg{u_*}(a)$
  and those from $\Coalg{u^*}(c)$ to ~$a$.
  \end{cor}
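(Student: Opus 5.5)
This is a direct instance of \autoref{lem:coalg_to_alg}, so the plan is to check that lemma's hypotheses and transfer its conclusion. I take the adjunction $L \dashv R$ to be $u^* \dashv u_*\colon \Psh{\cat{A}'} \to \Psh{\cat{A}}$. Here $u_*$ is right Kan extension along $u^\op$, which exists because $\cat{A}$ and $\cat{A}'$ are small. The endofunctors are $F = \polyfunc{P}$ on $\Psh{\cat{A}}$ and $G = \polyfunc{P'}$ on $\Psh{\cat{A}'}$.

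The step is the canonical natural transformation $\tau\colon u^*\polyfunc{P} \Rightarrow \polyfunc{P'}u^*$ from \autoref{lem:diag}, which is available because $u$ is a morphism of $\Poly$. Its type is $LF \Rightarrow GL$, exactly as \autoref{prop:steps} requires. That proposition then yields the mate $\sigma\colon \polyfunc{P}u_* \Rightarrow u_*\polyfunc{P'}$, along with the liftings $\Coalg{u^*}(c) = \tau \circ u^*(c)$ and $\Alg{u_*}(a) = u_*(a)\circ\sigma$. These coincide with the functors assigned to $u$ by the preceding proposition, whose functoriality is not needed for the present statement.

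The conclusion then follows from \autoref{lem:coalg_to_alg}. Concretely, a morphism $i\colon C \to u_*A$ corresponds to its adjoint transpose $\bar i = \varepsilon_A \circ u^*(i)\colon u^*C \to A$. The ca-morphism equation $i = u_*(a)\circ\sigma_A\circ \polyfunc{P}(i)\circ c$ corresponds under transposition to $\bar i = a \circ \polyfunc{P'}(\bar i)\circ \tau_C \circ u^*(c)$.

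The only real content is verifying this transposition. The key step is the mate identity $\varepsilon \circ u^*\sigma = \ldots$, relating $\sigma$ to $\tau$ via the unit and counit. One then uses naturality of $\tau$ to commute $\polyfunc{P'}$ past the transpose. This is routine and is already packaged in \autoref{lem:coalg_to_alg}, so I expect no genuine obstacle. The only care needed is that the $\sigma$ used in $\Alg{u_*}$ is precisely the mate of the canonical $\tau$, rather than some other step between $\polyfunc{P}u_*$ and $u_*\polyfunc{P'}$.
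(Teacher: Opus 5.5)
Your proposal is correct and follows the paper's route: the corollary is obtained by instantiating \autoref{prop:steps} and \autoref{lem:coalg_to_alg} with the adjunction $u^* \dashv u_*$, the endofunctors $\polyfunc{P}$ and $\polyfunc{P'}$, and the canonical step $\tau\colon u^*\polyfunc{P} \Rightarrow \polyfunc{P'}u^*$ of \autoref{lem:diag}. The transposition argument you sketch, which uses the mate relation and naturality of $\tau$, is the one given in the appendix proof of \autoref{lem:coalg_to_alg}.
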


The restriction to familial functors, rather than arbitrary endofunctors,
offers several advantages for analysing graphs of their coalgebras.
For coalgebras of familial functors, each
edge is generated by a specific premise position.  
We first make this label explicit.
\begin{defi}[labelled graph of a coalgebra of a familial functor]
\label{def:labelled_graph}
Let \(P=(s,q,t)\) be a polynomial and let
\(c\colon C\to \polyfunc{P}(C)\) be a coalgebra.
The \emph{labelled graph} of \(c\) is the directed labelled graph defined as follows.
Its nodes are objects \((x,n) \in \GfibTotal{C} \), which are also written as $n \in C(x)$.
Its edges are given by
\[
  \big(n \in C(x)\big)
  \xrightarrow{i}
  \big(n_i \in C(s(i))\big)
\]
for each $i \in \cat{I}$ with $q(i) = j$, where
  $c_x(n)=(j,\langle n_i\rangle_{i\in \cat{I}_j^{\op}})$.
\end{defi}

\begin{prop} \label{prop:graph_next}
  Let $P = (s, q, t)$ be a polynomial and~$c$ be an $\polyfunc{P}$-coalgebra.
  For each node $(x, n)$ in the graph of~$c$ (cf.~\autoref{def:graph}) and each $y \in \cat{A}^\op$,
  we have
\begin{equation} \label{eq:graph_next}
  A^c_{(x, n)}(y) = \{n_i \mid s(i) = y\},
\end{equation}
where $c_x(n) = (j, \langle n_i \rangle_{i \in \cat{I}_j^\op})$.
Consequently, the underlying graph of the labelled graph of~$c$ is the graph of~\(c\),
and the well-foundedness of the graph of~$c$ is equivalent to that of the labelled graph of~$c$.
\end{prop}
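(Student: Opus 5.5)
The strategy is to describe the subpresheaves $B \hookrightarrow C$ with $c_x(n) \in F(B)(x)$ concretely, using the formula
\[
\polyfunc{P}(G)(x) \cong \coprod_{j \in \cat{J}_x^\op} \lim_{i \in \cat{I}_j^\op} G(si).
\]
Since $\polyfunc{P} = t_!q_*s^*$ and each of these functors preserves monomorphisms, $\polyfunc{P}(B)$ is a subpresheaf of $\polyfunc{P}(C)$. Here $t_!$ is a coproduct over fibres because $t$ is a discrete fibration, and $q_*$ is a limit over fibres, which preserves monos. Concretely, an element $(j', \langle m_i\rangle)$ of $\polyfunc{P}(C)(x)$ lies in $\polyfunc{P}(B)(x)$ exactly when every component satisfies $m_i \in B(s(i))$. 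The reason is that a limit of sets is a set of compatible families, and a subfamily taking values in $B$ is automatically compatible in $B$.

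Writing $c_x(n) = (j, \langle n_i \rangle_{i \in \cat{I}_j^\op})$, we get
\[
c_x(n) \in \polyfunc{P}(B)(x) \iff \forall i \in \cat{I}_j.\; n_i \in B(s(i)).
\]
So $A^c_{(x,n)}$ is the intersection of all subpresheaves $B$ containing every $n_i$, that is, the subpresheaf generated by the elements $n_i \in C(s(i))$. For each $y$, this generated subpresheaf is
\[
\{\, C(g)(n_i) \mid i \in \cat{I}_j,\ g\colon y \to s(i) \,\}.
\]

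The key step, and the main obstacle, is to show that this set already equals $\{n_i \mid s(i) = y\}$; that is, the family $\langle n_i\rangle$ is closed under restriction. I would use that $s$ restricted to the fibre $\cat{I}_j$ is a discrete fibration, which is the first lifting condition of the two-sided discrete fibration $(s,q)$. Given $g\colon y \to s(i)$, there is a unique lift $i' \to i$ in $\cat{I}_j$ with $s(i') = y$. Compatibility of the element $\langle n_i\rangle$ of the limit $\lim_{i \in \cat{I}_j^\op} C(si)$ then gives $n_{i'} = C(g)(n_i)$. Conversely, each $n_i$ with $s(i) = y$ appears by taking $g = \id$. This establishes \eqref{eq:graph_next}.

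The consequences then follow directly. By \autoref{def:graph}, an edge $(x,n) \to (y,m)$ exists iff $m \in A_{(x,n)}(y)$. By \eqref{eq:graph_next}, this holds iff $m = n_i$ for some $i \in \cat{I}_j$ with $s(i) = y$, which is exactly a labelled edge $(x,n) \xrightarrow{i} (y, m)$ of \autoref{def:labelled_graph}. So the two graphs have the same nodes and the same underlying edge relation. An infinite path in one therefore yields an infinite path in the other: forget the labels in one direction, and choose labels in the other. Hence well-foundedness of the two graphs is equivalent.
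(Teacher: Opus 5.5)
Your proof is correct and follows essentially the same route as the paper. In both, the key step is that $\{n_i \mid s(i)=y\}$ is closed under restriction, using the unique lift in $\cat{I}_j$ along $s$ together with compatibility of $\langle n_i\rangle$ in $\lim_{i\in\cat{I}_j^\op}C(s(i))$. The only difference is the inclusion $\{n_i\}\subseteq A^c_{(x,n)}$: the paper obtains it from \autoref{lem:graph}, and hence from preservation of intersections, whereas you derive it directly from your componentwise description of $c_x(n)\in\polyfunc{P}(B)(x)$, which makes that lemma unnecessary here.
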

\begin{proof}
  By \autoref{lem:graph},
  we have $(j, \langle n_i \rangle_{i \in \cat{I}_j^\op}) \in \polyfunc{P}(A_{(x, n)})(x)$, which means $n_i \in A_{(x, n)}(s(i))$ for each $i \in \cat{I}_j^\op$.
  Conversely, define a subpresheaf~$B$ of~$C$
  by $B(y) \coloneqq \{n_i \mid s(i) = y\}$ for each $y \in \cat{A}^\op$. It is indeed a subpresheaf: because $(s, q)$ is a two-sided discrete fibration,
  for each
  $n_{i'} \in B(y')$ and $f\colon y \to y'$ in $\cat{A}$, 
  we have a unique morphism $i \to i'$ in $\cat{I}_j$ above~$f$,
  which satisfies 
  $Cf(n_{i'}) = n_{i} \in B(y)$
  by $\langle n_i \rangle_{i \in \cat{I}_j^\op} \in \lim_{i \in \cat{I}_j^\op}C(s(i))$.
  Therefore, $c_x(n) \in \polyfunc{P}(B)(x)$ implies the inclusion $A_{(x, n)} \hookrightarrow B$.
\end{proof}

A morphism in $\Poly$ offers a way to construct a graph over a given graph along a base-change functor $u^*$.
We exploit this perspective in the next section when defining GTC for non-wellfounded proofs and when expanding proof systems to those with ordinal annotations.
\newcommand{\graphhom}[1]{\mathcal{G}(#1)}
\begin{prop} \label{prop:graph_morph}
Let \(u=(u,u_1,u_2)\colon P'\to P\) be a morphism in \(\Poly\), and let
\(c\colon C\to \polyfunc{P}(C)\) be a coalgebra.  Then the assignment
\[
  \big(n \in u^*C(x)\big)\mapsto \big(n \in C(u(x))\big)
\]
extends to a homomorphism $\graphhom{u}$ from the labelled graph of
\(\Coalg{u^*}(c)\) to the labelled graph of~\(c\).  It sends an
edge with label \(i \in \cat{I}'\) to the edge with
label \(u_1(i) \in \cat{I}\).  

Moreover, let $e$
be an edge $\big(n_1 \in C(a_1)\big)\xrightarrow{i}\big(n_2\in C(a_2)\big)$ in the labelled graph of~\(c\),
and
let \(a'_1 \in \cat{A}'_{a_1},a'_2\in\cat{A}'_{a_2}\).
Then there is a labelled edge
  from $(n_1 \in u^*C(a_1'))$ to $(n_2 \in u^*C(a_2'))$
in the labelled graph of \(\Coalg{u^*}(c)\) lying above
the edge $e$ if and only if there exists an object
\(i'\in\cat{I}'\) such that
\begin{equation} \label{eq:graph_lift}
  u_1(i') = i, \quad
  s'(i') = a_2', \quad\text{and}\quad
  t'(q'(i'))=a_1'.
\end{equation}
\end{prop}
\begin{proof}
Let
  $(\node{n}{u^*C(x)}) \xrightarrow{i'} (\node{n'}{u^*C(s'(i'))})$
be an edge in the labelled graph of \(\Coalg{u^*}(c)\).  
Then we have $\pi_0 (c_{u(x)}(n)) = u_2(q'(i'))$.
Write
  $c_{u(x)}(n)
  =
  \bigl(u_2(q'(i')),\langle m_i\rangle_{i\in \cat{I}_{u_2(q'(i'))}^{\op}}\bigr)$.
By definition of the canonical step induced by~\(u\), 
it follows that $n' = m_{u_1(i')}$,
and the labelled graph of~\(c\) has the edge
  \[\big(\node{n}{C(u(x))}\big) \xrightarrow{u_1(i')}
  \big(\node{n'}{C(s(u_1(i')))}\big).\]
Since \(u(s'(i'))=s(u_1(i'))\), this gives the claimed homomorphism.
The moreover part follows from the same description.  
\end{proof}
Thus, once an edge in the labelled graph of \(c\) and objects above its
source and target are fixed, the existence of a lifted edge in the
labelled graph of \(\Coalg{u^*}(c)\) is determined by checking
the existence of a premise position that is compatible with
the chosen source and target objects.  
Motivated by this observation,
for \(i\in\cat{I}\), \(a'_1\in\cat{A}'_{t(q(i))}\), and
\(a'_2\in\cat{A}'_{s(i)}\), we write
\[
  a'_1 \traceto{i} a'_2
\]
if there exists \(i'\in\cat{I}'\) satisfying \eqref{eq:graph_lift}.

\section{Abstract Non-Wellfounded Proof Systems} \label{sec:proof_system}
We finally introduce our framework for non-wellfounded proof systems
on top of the theory of coalgebras and their graphs of familial functors
developed in the previous sections.
In this framework,
derivation trees (or pre-proofs) are modelled as coalgebras,
and
global trace conditions (GTC) are formulated as conditions on the graphs of these coalgebras. 
We establish soundness in this framework and characterise the GTC in terms of recursive coalgebras.

\subsection{Polynomials and Abstract Proof Systems} \label{sec:disc_proof}
We employ familial functors to represent proof systems,
inspired by the work of Fiore \cite{Fiore12} for abstract syntax.
\newcommand{\TS}{\mathrm{TS}}
\begin{defi} 
  An \emph{(abstract) proof system}~$P$ 
  consists of small categories $\Jdg$ and~$R$, functors $\arfunc\colon R \to \Cat$, 
  $\prem\colon \GopfibTotal{\arfunc} \to \Jdg$, and
  $\ccl\colon R \to \Jdg$ such that the diagram
  \[
  \Jdg \xleftarrow{\prem} \GopfibTotal{\arfunc} \xrightarrow{\Gopfib{\arfunc}} R \xrightarrow{\ccl} \Jdg,
  \]
  is
  a polynomial in the sense of \autoref{def:gpf};
  that is, $(\prem, \Gopfib{\arfunc})$ is a two-sided discrete fibration and $\ccl$ is a discrete fibration.
  We sometimes write $P = (\prem, \Gopfib{\arfunc}, \ccl)$ for such a proof system.
  An object of $\Jdg$ is called a \emph{judgement}, and an object of~$R$ is called a \emph{rule}.
  For a rule $r \in R$, the object $\ccl(r)$ is called the \emph{conclusion} of~$r$ and $\prem(r, i)$ is called the \emph{$i$-th premise} of~$r$ for each $i \in \arfunc(r)$.
\end{defi}
  Applying the discussion after \autoref{def:gpf},
  we have
  \[
    \prffunc{P}(G)(s) \cong \coprod_{\substack{r \in R_s}} \lim_{i \in \arfunc(r)^\op}G(\prem(r, i)),
  \text{ for each } G \in \Psh{\Jdg}, s \in \Jdg.
  \]

Intuitively, an abstract proof system represents a proof system whose 
judgements are objects of $\Jdg$, and
whose inference rules are given, for each $r \in R$, by
\begin{center}
  \refstepcounter{equation}\label{eq:rule_r}
  \begin{minipage}{0.95\linewidth}
    \centering
\begin{prooftree}
  \AxiomC{$\{\prem(r, i)\}_{i \in \arfunc(r)}$}
  \RightLabel{$r \in R$.}
  \UnaryInfC{$\ccl(r)$}
\end{prooftree}
\end{minipage}(\theequation)
\end{center}

Morphisms in $\Jdg$ and~$R$ represent admissible transformations between judgements and rules, respectively. This allows us to model proof systems with additional structures, such as parameters or contexts.
Such
situations arise, for example, when interpreting open terms relative to
assignments of free variables; see \autoref{sec:santocanale} for an
example where a non-discrete category of judgements is used. 
We also exploit this feature in \autoref{sec:ord_system} when handling proof systems with ordinal annotations.
For many concrete proof systems, however,
it suffices to take $\Jdg$ and~$R$ to be discrete,
with
an arity function
$\arfunc\colon R \to \nat \, (\hookrightarrow \Cat)$ specifying the number of premises for each rule.

In the remainder of this section, we fix an abstract proof system $P = (\prem, \Gopfib{\arfunc}, \ccl)$.

\begin{defi}
A \emph{pre-proof} of~$P$
is a coalgebra $c\colon C \to \prffunc{P}(C)$ of the familial functor $\prffunc{P}\colon \Psh{\Jdg} \to \Psh{\Jdg}$.
\end{defi}

The labelled graph of~$c$ is 
a (possibly infinite) derivation graph 
with respect to the inference rules as in \eqref{eq:rule_r}.
Concretely, 
a node is $n \in C(s)$, 
representing the pair of a judgement $s \in \Jdg$ and an element $n \in C(s)$.
The structure map~$c$ assigns to each node $n \in C(s)$ a rule $r \in R_s$, i.e.~$\ccl(r) = s$, together with its children (or premise nodes) $\langle \node{n_i}{C(\prem(r, i))}\rangle_{i \in \arfunc(r)^\op}$.
Then for each $i \in \arfunc(r)$, there is an edge 
\[
(n \in C(s)) \xrightarrow{i} (n_i \in C(\prem(r, i))).
\]

In categorical semantics, 
a (finite) derivation tree $\Pi$ with a root judgement $\Gamma \vdash \varphi$ is typically interpreted as a morphism $\llbracket \Pi \rrbracket\colon \llbracket \Gamma \rrbracket \to \llbracket \varphi \rrbracket$ in a semantic category $\cat{C}$.
Such an interpretation can be viewed as assigning to each node of the derivation tree a morphism that is compatible with the interpretation of inference rules.
We adopt an analogous approach for abstract proof systems: 
we interpret a pre-proof as a ca-morphism into an algebra that specifies the interpretation of rules.
\begin{defi} \label{def:solution}
  Let $a\colon \polyfunc{P}(\Omega) \to \Omega$ be an algebra.
  A \emph{solution of a pre-proof~$c$ with respect to~$a$}
  is 
  a ca-morphism from~$c$ to~$a$.
\end{defi}
An algebra $a\colon \polyfunc{P}(\Omega) \to \Omega$ specifies a semantics for the proof system~$P$.
The presheaf $\Omega \in \Psh{\Jdg}$ assigns to each judgement the set of its semantic values. 
This is analogous to the hom-set $\cat{C}(\llbracket \Gamma \rrbracket, \llbracket \varphi \rrbracket)$ in ordinary categorical semantics (cf.~\cite{10.5555/373919.373928}), when the judgement~$s$ has the form $\Gamma \vdash \varphi$. 
The structure map~$a$ of the algebra determines, for each rule, how to construct a semantic value for the conclusion node from those for the premises.

Accordingly,
a solution of a pre-proof~$c$ is a consistent assignment of semantic values to the nodes of the derivation graph, respecting the interpretation of rules given by~$a$.

By \autoref{thm:wf}, 
a pre-proof~$c$ is recursive if and only if 
it is well-founded, which is equivalently characterized by the well-foundedness of the graph of~$c$.
In this case,
the universality of the recursive coalgebra ensures the existence of
a unique solution.
\begin{thm} \label{thm:wf_solution}
  If a pre-proof~$c$ is well-founded, then there exists a unique solution of~$c$ w.r.t.~any $\prffunc{P}$-algebra~$a$.
\end{thm}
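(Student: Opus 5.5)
The plan is to reduce the statement directly to \autoref{thm:wf}. By definition, a solution of $c$ with respect to $a$ is just a ca-morphism from $c$ to $a$ (\autoref{def:solution}). So it suffices to show that $c$ is a recursive $\prffunc{P}$-coalgebra: recursiveness is precisely the unique existence of a ca-morphism into every $\prffunc{P}$-algebra $a$.

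To apply \autoref{thm:wf} to $F = \prffunc{P}\colon \Psh{\Jdg} \to \Psh{\Jdg}$, we must check its hypothesis: $F$ preserves intersections and inverse images. Since $P = (\prem, \Gopfib{\arfunc}, \ccl)$ is by definition a polynomial in the sense of \autoref{def:gpf}, $\prffunc{P}$ is a familial functor. Hence \autoref{cor:pra_intersection} gives exactly these two preservation properties. This corollary rests on \autoref{prop:pra_poly}, which shows $\prffunc{P}$ is p.r.a., and on the fact that p.r.a.\ functors preserve connected limits, in particular wide pullbacks of monos and pullbacks.

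The argument then runs in three steps.
\begin{enumerate}
\item Note that $\prffunc{P}$ satisfies the hypotheses of \autoref{thm:wf}, by \autoref{cor:pra_intersection}.
\item Invoke the implication $(\ref{item:wf} \Rightarrow \ref{item:rec})$ of \autoref{thm:wf} to conclude that the well-founded pre-proof $c$ is recursive.
\item Unfold the definition of recursive coalgebra to obtain, for each algebra $a\colon \prffunc{P}(\Omega) \to \Omega$, a unique ca-morphism $c \to a$, that is, a unique solution.
\end{enumerate}

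If ``well-founded'' is instead read as well-foundedness of the (labelled) derivation graph, we first use \autoref{prop:graph_next} to identify the labelled graph's well-foundedness with that of the graph of $c$. We then use $(\ref{item:graph} \Rightarrow \ref{item:rec})$ of \autoref{thm:wf}.

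There is no substantive obstacle. The only point needing care is confirming that the preservation hypotheses of \autoref{thm:wf} really hold for $\prffunc{P}$, and this is entirely delegated to \autoref{cor:pra_intersection}.
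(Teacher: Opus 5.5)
Your proposal is correct and follows the paper's argument: the paper likewise notes that $\prffunc{P}$ preserves intersections and inverse images (\autoref{cor:pra_intersection}), so \autoref{thm:wf} makes the well-founded pre-proof $c$ recursive, and recursiveness is exactly unique existence of a ca-morphism (a solution) into every $\prffunc{P}$-algebra. Your side remark on reading well-foundedness via the derivation graph, using \autoref{prop:graph_next}, also matches the paper's discussion just before the theorem.
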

This result subsumes the standard situation of ordinary finitary proof systems: every finite derivation tree admits a unique solution.

\begin{exa}[descending sequence] \label{eg:desc_proof}
  We revisit the example from \autoref{sec:desc} as a running example. 
  Consider
  the abstract proof system $1 \leftarrow 1+\nat = 1+\nat \to 1$,
  where $1$ is the terminal category and $\nat$ is the discrete category of natural numbers.
This proof system has a single judgement $* \in 1$ and 
the following trivial inference rules indexed by $r \in 1 + \nat$:
\begin{center}
  \AxiomC{$*$}
  \RightLabel{$r$}
  \UnaryInfC{$*$}
  \DisplayProof.
\end{center}
The proof system itself is very simple; the non-trivial information is carried by the coalgebra~$\cds$.
The coalgebra $\cds$ defined in \autoref{sec:desc} is a pre-proof of this proof system,
and the function $\descending$ is a solution of $\cds$ w.r.t.~the algebra $\ads$.
The labelled graph of the pre-proof $\cds$ has
$(*, (n, xs))$ as nodes for each $n \in \nat$ and $xs \in \nat^\omega$.
Its edges are given by
\begin{align*}
\big((n, i:j:t) \in \nat \times \nat^\nat\big) &\xrightarrow{* \in \arfunc(*)} \big((n+1, j:t)  \in \nat \times \nat^\nat\big) & \text{if } i > j, \\
\big((n, i:j:t)  \in \nat \times \nat^\nat\big) &\xrightarrow{* \in \arfunc(n)} \big((1, j:t)\in \nat \times \nat^\nat\big)  &\text{if } i \leq j.
\end{align*}
Thus infinite paths in the graph of \(c_{ds}\) represent infinite call
sequences of the definition of~\(ds\), showing that
\(c_{ds}\) is not well-founded.  
Later, the GTC for this
example will say exactly that along every infinite call sequence, the
output-producing 
case \(i\leq j\) occurs infinitely often.
\end{exa}

\subsection{Global Trace Conditions}
As established in \autoref{thm:wf_solution}, a pre-proof admits a unique solution whenever it is well-founded.
For non-wellfounded pre-proofs, however, the existence of a unique solution does not hold in general.
Nevertheless, by imposing certain conditions on derivation graphs, called \emph{global trace conditions}, we can ensure the unique existence of solutions even for non-wellfounded cases.

To define a global trace condition,
we first introduce a structure for tracing formulas along paths in derivation graphs.
\begin{defi}[trace structure]
  A \emph{trace structure} for~$P$ consists of the following data:
  \begin{itemize}
    \item a presheaf $\fml \in \Psh{\Jdg}$, 
    whose elements $\varphi \in \fml(s)$ are called \emph{formulas},
    \item 
    a proof system~$T$ and 
    a morphism $\Gfib{\fml}\colon T \to P$ in $\Poly$ as below,
    which specifies how to trace formulas along edges in the proof system,
\begin{equation} \label{eq:trace_struct}
  \xymatrix{
    \GfibTotal{\fml} \ar[d]^{\Gfib{\fml}} &\GopfibTotal{\arfunc'}  \ar[l]_{\prem'}  \ar[r]^-{\Gopfib{\arfunc'}} \ar[d]^{(\Gfib{\fml})_1} &R' \ar[d]^{(\Gfib{\fml})_2} \ar[r]^-{\ccl'} \pullbackmark{0, 1}{1, 0} &\GfibTotal{\fml} \ar[d]^{\Gfib{\fml}} \\
    \Jdg &\GopfibTotal{\arfunc}  \ar[l]_{\prem}  \ar[r]^-{\Gopfib{\arfunc}} &R \ar[r]^{\ccl} &\Jdg
  }
\end{equation}
    \item 
    a subset \(\prog \subseteq \coprod_{i \in \GopfibTotal{\arfunc}}(\traceto{i})\), 
    whose elements are said to be \emph{progressing},
  \end{itemize}
satisfying
     for each $h\colon r_1 \to r_2$ in~$R$ and $i\in \arfunc(r_1)$, 
    \begin{center}
      $\big(\varphi \traceto{\arfunc(h)(i)\in \arfunc(r_2)} \psi\big)$ \ (resp.~$\in \prog$)\quad implies \quad
     $\big(\fml(\ccl(h))(\varphi) \traceto{i \in \arfunc(r_1)}\psi\big)$ \ (resp.~$\in \prog$).
    \end{center}
     Please note that $\prem(r_1, i) = \prem(r_2, \arfunc(h)(i))$ since $(\prem, \Gopfib{\arfunc})$ is a two-sided discrete fibration.
  We concisely write $(\Gfib{\fml}, \prog)$ for these data.
    \end{defi}

\begin{defi}[trace]
  Let $(\Gfib{\fml}, \prog)$ be a trace structure and $c$ be a pre-proof of~$P$.
  Consider an infinite path 
in the labelled graph of a pre-proof~$c$
  \[
(n_0 \in C(s_0)) \xrightarrow{j_0 \in \arfunc(r_0)} (n_1 \in C(s_1)) \xrightarrow{j_1 \in \arfunc(r_1)} (n_2 \in C(s_2)) \xrightarrow{j_2 \in \arfunc(r_2)} \cdots.
  \]
  A \emph{trace} along this path is a sequence $\langle \varphi_i \rangle_{i \in \nat}$ of formulas $\varphi_i \in \fml(s_i)$ such that 
  \[
  \varphi_0 \traceto{j_0 \in \arfunc(r_0)} \varphi_{1} \traceto{j_{1} \in \arfunc(r_{1})} \varphi_{2} \traceto{j_{2} \in \arfunc(r_{2})} \cdots.
  \]
  It is called \emph{infinitely progressing} if 
  $\big(\varphi_i \traceto{j_i \in \arfunc(r_i)} \varphi_{i+1}\big) \in \prog$ for infinitely many $i \in \nat$.
\end{defi}
 By \autoref{prop:graph_morph},
a trace along a given infinite path can be equivalently seen as a lift of that path to the labelled graph of $\Coalg{\Gfib{\fml}^*}(c)$ along the graph homomorphism $\graphhom{\Gfib{\fml}}$.
Intuitively, a trace follows a single path in the derivation graph while keeping track of additional information, such as formulas, specified by the trace structure.
Progressing trace steps identify those steps along the trace where a distinguished event occurs, typically corresponding to an unfolding of a fixed point.

\begin{defi}[GTC and proofs] \label{def:GTC}
  A pre-proof~$c$ of~$P$ satisfies the \emph{global trace condition} (or shortly, \emph{GTC})
  on a trace structure 
   if for each infinite path in the labelled graph of~$c$, 
   there exists an infinitely progressing trace along a suffix of the given infinite path.
  A pre-proof is called a \emph{proof} if it satisfies the GTC.
\end{defi}

Well-foundedness trivially implies the GTC.
\begin{prop} \label{prop:wf_GTC}
  Let~$c$ be a pre-proof of~$P$.
  If~$c$ is well-founded, then it satisfies the GTC for each trace structure.
\end{prop}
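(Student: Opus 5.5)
The claim is vacuous once well-foundedness is rephrased as the absence of infinite paths. First I will apply \autoref{thm:wf} to the familial functor $\prffunc{P}$; this is allowed because $\prffunc{P}$ preserves intersections and inverse images by \autoref{cor:pra_intersection}. Since $c$ is well-founded, the theorem tells us that the graph of $c$ (in the sense of \autoref{def:graph}) has no infinite path. By \autoref{prop:graph_next}, the underlying graph of the labelled graph of $c$ is exactly this graph, so the labelled graph of $c$ also has no infinite path.

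Now fix an arbitrary trace structure $(\Gfib{\fml}, \prog)$. The GTC (\autoref{def:GTC}) asks that every infinite path in the labelled graph of $c$ have an infinitely progressing trace along one of its suffixes. There are no such paths, so the condition holds vacuously, and $c$ satisfies the GTC.

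The proof does not use any property of the trace structure. The only step needing care is moving from well-foundedness of the coalgebra to the absence of infinite labelled paths. That step relies on two things: the hypotheses of \autoref{thm:wf} hold for $\prffunc{P}$, and labelled edges coincide with unlabelled edges. Both are supplied directly by the cited results, so I do not expect a real obstacle.
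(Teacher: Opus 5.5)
Your proposal is correct and matches the paper's intended argument, which the paper dismisses as trivial. Via \autoref{thm:wf} (applicable by \autoref{cor:pra_intersection}) and \autoref{prop:graph_next}, well-foundedness rules out infinite paths in the labelled graph, so the GTC holds vacuously for every trace structure.
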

We later show that the converse also holds; this yields a characterisation of the recursiveness in terms of GTC.

\begin{exa}[GTC for descending sequence] \label{eg:desc_GTC}
  Consider \autoref{eg:desc_proof}.
We define a trace structure as 
$(\id_P, \prog)$
where
$\prog \coloneqq \{(* \traceto{* \in \arfunc(n)} *) \mid n \in \nat\}$.
Although the pre-proof~$\cds$ defined in \autoref{sec:desc} is not recursive,
it satisfies the GTC:
for any infinite path $\langle (n_i, \mathrm{xs}_i) \in \nat \times \nat^\nat \rangle_{i \in \nat}$ with labels $\langle \node{*}{\arfunc(r_i)} \rangle_i$ in the labelled graph of $\cds$,
there is a trivial trace $\langle * \rangle_{i \in \nat}$ along the path.
It is infinitely progressing because 
if
the rule $*$ is applied infinitely many times in a row, 
then the first components of the stream $\mathrm{xs}_i$ form an infinite decreasing sequence of $\nat$, yielding a contradiction.  
\end{exa}

\subsection{Soundness} \label{sec:ord_system}
To establish soundness of proofs in our framework,
we introduce the category $\OJdg{\gamma}$ of judgements with assignments to ordinal numbers,
and an abstract proof system for such judgements.
This construction is
inspired by proof systems with ordinal annotations, which are widely employed in soundness proofs for ordinary non-wellfounded proof systems.
\begin{defi} \label{def:ojdg}
Let $t\coloneqq (\Gfib{\fml}, \prog)$ be
a trace structure for 
$P$ as in \eqref{eq:trace_struct},
and $\gamma$ be an ordinal number.
The category $\OJdg{\gamma}$ is 
defined as follows:
\begin{itemize}
  \item 
an object is a pair of
an object $s \in \Jdg$ and
a function 
$f\colon \fml(s) \to \gamma \, (= \{\alpha \mid \alpha < \gamma\})$,
\item
a morphism from $(s_1, f_1)$ to $(s_2, f_2)$ 
is a morphism $h\colon s_1 \to s_2$ in $\Jdg$
such that
$f_1\circ \fml(h) \leq f_2$ pointwise, i.e.~$f_1(\fml(h)(\varphi)) \leq f_2(\varphi)$ for each $\varphi \in \fml(s_2)$.
\end{itemize}

\end{defi}
Let $u^\gamma\colon \OJdg{\gamma} \to \Jdg$ be the forgetful functor. 
This forgetful functor $u^\gamma$ is an opfibration because, for each $h\colon s \to s'$ in $\Jdg$ and $f\colon \fml(s) \to \gamma$ in $(\OJdg{\gamma})_{s}$, 
we have an opcartesian morphism from $f$ to $f \circ \fml(h)$ over~$h$.
Therefore,
the functor $(u^\gamma)_*\colon \Psh{\OJdg{\gamma}} \to \Psh{\Jdg}$ can be computed via limits by \autoref{prop:fib_kan_lim}.

We define an abstract proof system $P^\gamma$
as the top row of the following diagram; the functors in the diagram are defined below:
\begin{equation} \label{eq:ojdg_jdg}
  \xymatrix{
    \OJdg{\gamma} \ar[d]^{u^\gamma} &\GopfibTotal{(\arfunc^\gamma)}  \ar[l]_{\prem^\gamma}  \ar[r]^-{\Gopfib{(\arfunc^\gamma)}} \ar[d]^{u^\gamma_1} &R^\gamma \ar[d]^{u^\gamma_2} \ar[r]^-{\ccl^\gamma} \pullbackmark{0, 1}{1, 0} &\OJdg{\gamma} \ar[d]^{u^\gamma} \\
    \Jdg &\GopfibTotal{\arfunc}  \ar[l]_{\prem}  \ar[r]^-{\Gopfib{\arfunc}} &R \ar[r]^{\ccl} &\Jdg
  }
\end{equation}
We define $R^\gamma$
as the category  whose objects are 
pairs of $r \in R$ and $f\colon \fml(\ccl(r)) \to \gamma$ and whose morphisms
$h\colon (r_1, f_1) \to (r_2, f_2)$ are
morphisms $h\colon r_1 \to r_2$ in~$R$ such that 
$\ccl(h)\colon (\ccl(r_1), f_1) \to (\ccl(r_2), f_2)$ in $\OJdg{\gamma}$.
The functors $\ccl^\gamma, u^\gamma_2$ are obvious forgetful functors.

The functor
$\arfunc^\gamma\colon R^\gamma \to \Cat$
is defined
by mapping 
$(r, f) \in R^\gamma$
to
the category $\arfunc^\gamma(r, f)$ defined as follows.
Its objects are pairs of $i \in \arfunc(r)$ and $g\colon \fml(\prem(r, i)) \to \gamma$ such that
for each $\varphi \traceto{i \in \arfunc(r)} \psi$,
\begin{equation} \label{eq:ol_ar}
  \begin{cases}
g(\psi) < f(\varphi) &\text{ if }  \big(\varphi \traceto{i \in \arfunc(r)} \psi\big) \in \prog, \\ 
g(\psi) \leq f(\varphi) &\text{ otherwise}.
  \end{cases}
\end{equation}
A morphism $(i_1, g_1) \to (i_2, g_2)$ in $\arfunc^\gamma(r, f)$ is 
a morphism $h\colon i_1 \to i_2$ in $\arfunc(r)$ such that 
$\prem(r, h)$ is a morphism
$(\prem(r, i_1), g_1) \to (\prem(r, i_2), g_2)$ in $\OJdg{\gamma}$.
For a morphism $h\colon (r_1, f_1) \to (r_2, f_2)$ in $R^\gamma$,
we define
$\arfunc^\gamma(h)\colon \arfunc^\gamma(r_1, f_1) \to \arfunc^\gamma(r_2, f_2)$
by 
\[
\arfunc^\gamma(h)(i, g) \coloneqq (\arfunc(h)(i), g)
\quad \text{ and } \quad 
\arfunc^\gamma(h)(f) \coloneqq \arfunc(h)(f),
\]
for each $f\colon (i, g) \to (i', g')$ in $\arfunc^\gamma(r_1, f_1)$.
Note that $(\arfunc(h)(i), g) \in \arfunc^\gamma(r_2, f_2)$ holds
because 
for each $\varphi \traceto{\arfunc(h)(i) \in \arfunc(r_2)} \psi$,
it follows that
$g(\psi) \leq f_1\big(\fml(\ccl(h))(\varphi)\big) \leq f_2(\varphi)$
by assumption on the trace structure,
and the first inequality is strict if $(\varphi \traceto{\arfunc(h)(i) \in \arfunc(r_2)} \psi) \in \prog$.

The functor $u^\gamma_1$ is the canonical forgetful functor,
induced by the natural transformation $\pi_0\colon \arfunc^\gamma \Rightarrow \arfunc \circ u^\gamma_2$.
The functor $\prem^\gamma$ is defined by
$\prem^\gamma((i, g) \in \arfunc^\gamma(r, f)) \coloneqq (\prem(r, i), g)$.
Using these data,
$u^\gamma\colon P^\gamma \to P$ forms a morphism in $\Poly$.

By \autoref{cor:alg_coalg_crp},
this morphism
induces functors $\Alg{u^\gamma_*}$ and $\Coalg{(u^\gamma)^*}$,
together with a canonical bijective correspondence between ca-morphisms~$h$ and $h^\gamma$ as illustrated below for each $\polyfunc{P^\gamma}$-algebra $a'$.
\[
  \xymatrix@C=0.6pc{
    &\Psh{\Jdg} \lloop{\polyfunc{P}} \ar@/^2ex/[rrr]^-{(u^\gamma)^*} & & &\Psh{\OJdg{\gamma}} \rloop{\polyfunc{P^\gamma}} \ar@/^2ex/[lll]^-{u^\gamma_*}_-{\text{\shortstack{$\bot$\\ \, }}} \\
    c \ar[rr]^-h & &\Alg{u^\gamma_*}(a') &\Coalg{(u^\gamma)^*}(c) \ar[rr]^-{h^\gamma} & &a'
  }
\]
The pre-proof $\Coalg{(u^\gamma)^*}(c)$ can be seen as the freely generated pre-proof in the proof system with ordinal annotations. 
\begin{lem} \label{lem:GTC_rec}
  If a pre-proof~$c$ satisfies the GTC on a trace structure~$t$, then
  $\Coalg{(u^\gamma)^*}(c)$ is recursive.
\end{lem}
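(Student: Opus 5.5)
By \autoref{thm:wf}, together with \autoref{cor:pra_intersection} and \autoref{prop:graph_next}, it suffices to show that the labelled graph of $\Coalg{(u^\gamma)^*}(c)$ has no infinite path. So I will assume an infinite path exists and derive a contradiction.

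First I describe the path. Its nodes are of the form $(n \in C(s), f)$ with $f\colon \fml(s)\to\gamma$. By \autoref{prop:graph_morph}, applying $\graphhom{u^\gamma}$ gives an infinite path in the labelled graph of $c$:
\[
(n_0\in C(s_0)) \xrightarrow{j_0\in\arfunc(r_0)} (n_1\in C(s_1)) \xrightarrow{j_1\in\arfunc(r_1)} \cdots,
\]
together with functions $f_k\colon\fml(s_k)\to\gamma$. Each edge lifts, so by \eqref{eq:graph_lift} there is $i'_k=((j_k,g_k)\in\arfunc^\gamma(r_k,f'_k))$ with $u_1^\gamma(i'_k)=(r_k,j_k)$, $\prem^\gamma(i'_k)=(s_{k+1},f_{k+1})$ and $\ccl^\gamma(r_k,f'_k)=(s_k,f_k)$. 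Hence $f'_k=f_k$ and $g_k=f_{k+1}$, and condition \eqref{eq:ol_ar} becomes: whenever $\varphi\traceto{j_k\in\arfunc(r_k)}\psi$, we have $f_{k+1}(\psi)\le f_k(\varphi)$, with strict inequality if the step is in $\prog$.

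One point needs care. The labelled graph of $\Coalg{(u^\gamma)^*}(c)$ is taken with respect to $c_{u(x)}(n)$, where $c$ assigns the rule $r_k$ and its premise children. Lifting may also involve a morphism $r\to r_k$ in $R$, i.e.\ a cartesian reindexing. I need $\graphhom{u^\gamma}$ to send edges to edges labelled exactly $u^\gamma_1(i')$, which \autoref{prop:graph_morph} states. I expect this bookkeeping to be the main obstacle, though it is already handled by that proposition.

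Next I apply the GTC to the projected path. It gives a suffix, starting at some index $K$, and a trace $\varphi_K\traceto{}\varphi_{K+1}\traceto{}\cdots$ along it that progresses infinitely often. Along this trace the ordinals $\alpha_k\coloneqq f_k(\varphi_k)<\gamma$ satisfy $\alpha_{k+1}\le\alpha_k$ for all $k\ge K$, and $\alpha_{k+1}<\alpha_k$ for infinitely many $k$. This yields an infinite strictly descending sequence of ordinals, which is impossible since ordinals are well-ordered.

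Therefore the graph is well-founded, and $\Coalg{(u^\gamma)^*}(c)$ is recursive by \autoref{thm:wf}. The argument works for every ordinal $\gamma$; no cardinality assumption on $\gamma$ is needed for this direction.
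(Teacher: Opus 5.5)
Your proof is correct and follows essentially the same route as the paper: project an infinite path of the graph of $\Coalg{(u^\gamma)^*}(c)$ to an infinite path of $c$, take the infinitely progressing trace that the GTC supplies on a suffix, read off from \eqref{eq:ol_ar} an infinite strictly descending sequence of ordinals $f_k(\varphi_k)$, and conclude via \autoref{thm:wf}. Your worry about a reindexing morphism $r\to r_k$ is unnecessary: because the rightmost square of \eqref{eq:ojdg_jdg} is a pullback, the rule above $(s_k,f_k)$ is simply $(r_k,f_k)$.
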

\begin{proof}
  We prove the statement by contraposition. 
  Assume that there exists an infinite path 
  $\langle \node{n_i}{((u^\gamma)^*C)(s_i, f_i)} \rangle_{i \in \nat}$  with labels $\langle \node{(j_i, g_i)}{\arfunc^\gamma(r_i, f_i)} \rangle_{i \in \nat}$
  in the labelled graph of $\Coalg{(u^\gamma)^*}(c)$.
  Then 
  $\langle \node{n_i}{C(s_i)} \rangle_{i \in \nat}$ with $\langle \node{j_i}{\arfunc(r_i)} \rangle_{i \in \nat}$
  is also an infinite path in the labelled graph of~$c$.

  If~$c$ satisfies the GTC,
  there exists $N \in \nat$ and
  an infinitely progressing trace 
  $\langle \varphi_i \rangle_{i \geq N}$ along the tail of the infinite path from~$N$.
  By definition of $P^\gamma$,
  for each $i \geq N$,
  $f_{i}(\varphi_i) \geq g_i(\varphi_{i+1}) = f_{i+1}(\varphi_{i+1})$ 
  since $\varphi_i \traceto{j_i \in \arfunc(r_i)} \varphi_{i+1}$,
  and the inequality is strict if $(\varphi_i \traceto{j_i \in \arfunc(r_i)} \varphi_{i+1}) \in \prog$.
  Since the trace is infinitely progressing,
  we obtain a strictly decreasing sequence of ordinals,
  which is a contradiction.
\end{proof}

By combining this lemma with \autoref{lem:coalg_to_alg}, we obtain the soundness theorem for non-wellfounded proofs.
\begin{thm}[Soundness] \label{thm:soundness}
    If~$c$ is a proof of~$P$, then~$c$ has a unique solution with respect to $\Alg{u^\gamma_*}(a')$ for any $\polyfunc{P^\gamma}$-algebra $a'$.
\end{thm}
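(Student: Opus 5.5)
The argument combines \autoref{lem:GTC_rec} with the correspondence of ca-morphisms along the adjunction $(u^\gamma)^* \dashv u^\gamma_*$, in the form of \autoref{prop:rec_solution} or \autoref{cor:alg_coalg_crp}.

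First, since $c$ is a proof, it satisfies the GTC on the trace structure $t$. \autoref{lem:GTC_rec} then says that the $\polyfunc{P^\gamma}$-coalgebra $\Coalg{(u^\gamma)^*}(c)$ is recursive. By definition of recursiveness, for the given $\polyfunc{P^\gamma}$-algebra $a'$ there is exactly one ca-morphism $h^\gamma$ from $\Coalg{(u^\gamma)^*}(c)$ to $a'$.

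Second, I would use the morphism $u^\gamma\colon P^\gamma \to P$ in $\Poly$ constructed in \eqref{eq:ojdg_jdg}. \autoref{lem:diag} gives it a canonical step $\tau\colon (u^\gamma)^*\polyfunc{P} \Rightarrow \polyfunc{P^\gamma}(u^\gamma)^*$. By \autoref{cor:alg_coalg_crp}, ca-morphisms from $c$ to $\Alg{u^\gamma_*}(a')$ are in bijection with ca-morphisms from $\Coalg{(u^\gamma)^*}(c)$ to $a'$, the bijection being adjoint transposition. Since the latter set is a singleton, so is the former. Hence $c$ has a unique solution with respect to $\Alg{u^\gamma_*}(a')$ in the sense of \autoref{def:solution}. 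Equivalently, this is \autoref{prop:rec_solution} applied with $a \coloneqq \Alg{u^\gamma_*}(a')$, where the hypothesis $\Alg{R}(a') \cong a$ holds trivially.

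The one point that needs care is that $u^\gamma$ really is a morphism in $\Poly$. This requires that $P^\gamma$ is a polynomial and that the rightmost square of \eqref{eq:ojdg_jdg} is a pullback. Concretely:
\begin{itemize}
\item $\ccl^\gamma$ must be a discrete fibration;
\item $(\prem^\gamma, \Gopfib{(\arfunc^\gamma)})$ must be a two-sided discrete fibration;
\item $R^\gamma \cong R \times_{\Jdg} \OJdg{\gamma}$, which is immediate from the definition of $R^\gamma$.
\end{itemize}
The text asserts these facts when constructing $P^\gamma$, so I would take them as given. I would also note that $(u^\gamma)_*$ exists and is right adjoint to $(u^\gamma)^*$ as a right Kan extension, which is all that \autoref{prop:steps} needs.

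With these in place, no step remains difficult. The substantive ordinal argument, where a strictly decreasing sequence of ordinals yields a contradiction, is already carried out in \autoref{lem:GTC_rec}, and $\gamma$ can be any ordinal.
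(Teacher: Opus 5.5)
Your proposal is correct and is essentially the paper's argument. \autoref{lem:GTC_rec} makes $\Coalg{(u^\gamma)^*}(c)$ recursive, and the adjoint correspondence of ca-morphisms (\autoref{lem:coalg_to_alg}, here via \autoref{cor:alg_coalg_crp}, equivalently \autoref{prop:rec_solution}) turns the unique ca-morphism into $a'$ into a unique solution of $c$ with respect to $\Alg{u^\gamma_*}(a')$, taking as given, as the paper does, that $u^\gamma\colon P^\gamma \to P$ is a morphism in $\Poly$.
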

As discussed after \autoref{def:solution}, 
an $\polyfunc{P}$-algebra in the definition of a solution
specifies a semantics for the proof system.
In typical examples of non-wellfounded proofs (as in \autoref{sec:modal}),
the existence of $a'$ comes from the fact that the semantics of a fixed-point operator $\mu f$ is given by $f^\gamma(\bot)$, which is the limit of transfinite iteration $f^\alpha(\bot)$ for ordinals~$\alpha$ less than or equal to $\gamma$.

\subsection{Coalgebraic Characterisation of GTC}
In this section, we investigate the relationship between recursiveness and GTC,
providing two characterisations:
recursiveness in terms of GTC,
and GTC in terms of recursiveness.

We begin with the former: the characterisation of recursiveness in terms of GTC.
The following result shows that the implication in \autoref{prop:wf_GTC} is in fact an equivalence.
\begin{prop} \label{prop:rec_GTC}
  A pre-proof~$c$ is recursive if and only if it satisfies the GTC for all trace structures.
\end{prop}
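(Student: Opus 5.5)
The forward direction is already available. If $c$ is recursive, then by \autoref{thm:wf} (applicable because $\polyfunc{P}$ preserves intersections and inverse images, \autoref{cor:pra_intersection}) $c$ is well-founded. \autoref{prop:wf_GTC} then gives the GTC for every trace structure.

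For the converse I argue by contraposition. Assume $c$ is not recursive. By \autoref{thm:wf} and \autoref{prop:graph_next}, the labelled graph of $c$ contains an infinite path. It then suffices to exhibit a single trace structure on which $c$ fails the GTC. I will take the trivial trace structure with no progress:
\begin{itemize}
\item let $\fml$ be the terminal presheaf $1$, so $\GfibTotal{\fml}\cong\Jdg$;
\item let $T \coloneqq P$ and let the morphism $\Gfib{\fml}\colon T\to P$ in $\Poly$ be the identity $(\id,\id,\id)$;
\item let $\prog \coloneqq \emptyset$.
\end{itemize}
The compatibility condition in the definition of trace structure is trivial here: $\fml(\ccl(h))$ is the identity on the unique formula, and $\prog$ is empty.

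With this structure, a trace along a suffix of an infinite path always exists, namely the constant sequence of the unique formula. Indeed, by \eqref{eq:graph_lift} with $u_1=\id$, we have $\ast\traceto{i}\ast$ for every $i\in\GopfibTotal{\arfunc}$. But no trace can be infinitely progressing, because $\prog=\emptyset$. So the infinite path in the graph of $c$ witnesses that $c$ violates the GTC on this structure, which completes the contraposition.

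The only point that needs care is confirming that this degenerate choice is legitimate: that $\id_P$ is a morphism of $\Poly$ of the required shape, with the rightmost square trivially a pullback, and that the definition of trace structure places no nonemptiness requirement on $\prog$. I do not expect either check to cause difficulty, so the proof is essentially a combination of \autoref{thm:wf} and \autoref{prop:wf_GTC} with this counterexample structure.
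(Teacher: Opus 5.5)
Your proposal is correct and follows the paper's proof. The forward direction goes through \autoref{thm:wf} and \autoref{prop:wf_GTC}. The converse uses the trivial trace structure $(\id,\emptyset)$ with no progressing edges, so that any infinite path violates the GTC; your explicit choice $\fml = 1$ and your check of the compatibility condition simply spell out details the paper leaves implicit.
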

\begin{proof}
  The only-if part is given by \autoref{prop:wf_GTC} and \autoref{thm:wf}.
  For the if-part, consider the trace structure $(\id, \emptyset)$ where no edges are progressing.
  If there is an infinite path in the labelled graph of~$c$,
  then there is no infinitely progressing trace along the path,
  contradicting the assumption that~$c$ satisfies the GTC on this trace structure.
\end{proof}

We next turn to the converse direction, giving a characterisation of GTC in terms of recursiveness.
\newcommand{\height}{\mathrm{ht}_{\vec{j}}}
\begin{defi} \label{def:non_progressing}
  Let~$c$ be a pre-proof of~$P$.
  An infinite path in the labelled graph of $c$
  is \emph{non-progressing}
  if there exists no infinitely progressing trace along it.
  For such an infinite path with labels $\vec{j} = \langle \node{j_i}{\arfunc(r_i)} \rangle_{i \in \nat}$,
  we define a directed graph $G_{\vec{j}}$ whose nodes are pairs $(i, \varphi)$ of $i \in \nat$ and $\varphi \in \fml(\ccl(r_i))$,
  and whose edges are given by $(i, \varphi_i) \to (k, \varphi_k)$ 
  when $i < k$ and there is $\varphi_i \traceto{j_i \in \arfunc(r_i)} \cdots \traceto{j_{k-1} \in \arfunc(r_{k-1})} \varphi_k$ 
  such that only the last edge is progressing.

The graph $G_{\vec{j}}$ is well-founded because otherwise, there exists an infinitely progressing trace, contradicting our assumption.
  For each node $(i, \varphi)$,
  the \emph{height $\height(i, \varphi)$} of the node is an ordinal number defined by transfinite induction as follows:
  \[
    \height(i, \varphi) \coloneqq \sup\{\height(j, \psi) + 1 \mid (i, \varphi) \to (j, \psi) \text{ in }G_{\vec{j}}\},
  \]
  with the convention that $\sup \emptyset = 0$.
\end{defi}
  It satisfies 
  $\height(i, \varphi) > \height(j, \psi)$ 
  for each edge $(i, \varphi) \to (j, \psi)$ in $G_{\vec{j}}$
  and 
  $\height(i, \varphi) \geq \height(i+1, \psi)$ 
  for each $\varphi \traceto{j_i \in \arfunc(r_i)} \psi$.
\begin{lem} \label{lem:wf_rec}
  Let~$c$ be a pre-proof of~$P$.
    Assume that $\gamma$ is an ordinal 
    greater than
    the height of any node for each non-progressing
    infinite path.
    If $\Coalg{(u^\gamma)^*}(c)$ is recursive,
    then~$c$ satisfies the GTC.
\end{lem}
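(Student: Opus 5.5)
We argue by contraposition: assuming $c$ violates the GTC, we build an infinite path in the labelled graph of $\Coalg{(u^\gamma)^*}(c)$. By \autoref{prop:graph_next} and \autoref{thm:wf} (whose hypotheses hold by \autoref{cor:pra_intersection}), this shows that $\Coalg{(u^\gamma)^*}(c)$ is not recursive. Suppose there is an infinite path $\langle n_i \in C(s_i)\rangle_{i\in\nat}$ with labels $\vec{j}=\langle j_i \in \arfunc(r_i)\rangle_i$ such that no suffix carries an infinitely progressing trace. In particular the path itself is non-progressing in the sense of \autoref{def:non_progressing}, so the height function $\height$ is defined on $G_{\vec{j}}$ and takes values $<\gamma$ by hypothesis.

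The construction is to annotate each node by heights. Define $f_i\colon \fml(s_i)\to\gamma$ by $f_i(\varphi)\coloneqq \height(i,\varphi)$; note that $s_i=\ccl(r_i)$. We then need the lifted edges
\[
  \big(n_i \in (u^\gamma)^*C(s_i,f_i)\big) \xrightarrow{(j_i,f_{i+1})} \big(n_{i+1}\in (u^\gamma)^*C(s_{i+1},f_{i+1})\big).
\]
By the moreover part of \autoref{prop:graph_morph}, such an edge exists over the original edge precisely when some $i'\in \GopfibTotal{(\arfunc^\gamma)}$ satisfies \eqref{eq:graph_lift}. The candidate is the object $(j_i,f_{i+1})\in\arfunc^\gamma(r_i,f_i)$. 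Its source is $\prem^\gamma = (s_{i+1},f_{i+1})$ and its conclusion is $(s_i,f_i)$, so it remains only to check \eqref{eq:ol_ar}. That is, for every step $\varphi\traceto{j_i}\psi$ we need $f_{i+1}(\psi)\le f_i(\varphi)$, with strict inequality when the step is progressing.

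The non-strict case is the stated property $\height(i,\varphi)\ge\height(i+1,\psi)$. The strict case holds because a progressing step $\varphi\traceto{j_i}\psi$ is itself an edge $(i,\varphi)\to(i+1,\psi)$ of $G_{\vec{j}}$, since it is a one-step trace whose last edge progresses; hence $\height(i,\varphi)>\height(i+1,\psi)$. Both inequalities are easy consequences of the recursive definition of $\height$ and I expect no difficulty there. For the weak one, any $G_{\vec{j}}$-edge out of $(i+1,\psi)$ composes with the step $\varphi\traceto{j_i}\psi$. This gives either a $G_{\vec{j}}$-edge out of $(i,\varphi)$, provided the step $\varphi\traceto{j_i}\psi$ is non-progressing, or else a dominating edge $(i,\varphi)\to(i+1,\psi)$. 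In both cases the supremum defining $\height(i,\varphi)$ is at least $\height(i+1,\psi)$.

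The step I expect to be the main obstacle is the bookkeeping in the lifting criterion. Concretely, one must make sure that the chosen $(j_i,f_{i+1})$ actually lies in $\arfunc^\gamma(r_i,f_i)$, with $(r_i,f_i)$ a genuine object of $R^\gamma$ over $r_i$ via the pullback, and that consecutive lifted edges share endpoints. They do, since the target annotation at step $i$ and the source annotation at step $i+1$ are both $f_{i+1}$. Once this holds, the lifted edges form an infinite path in the graph of $\Coalg{(u^\gamma)^*}(c)$, contradicting its recursiveness by \autoref{thm:wf}. Hence $c$ satisfies the GTC.
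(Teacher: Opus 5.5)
Your proposal is correct and follows the paper's proof: you argue by contraposition, annotate a GTC-violating path by $f_i(\varphi) \coloneqq \mathrm{ht}_{\vec{j}}(i,\varphi)$, and conclude that this gives an infinite path in the graph of $\Coalg{(u^\gamma)^*}(c)$. The paper states the last step with only ``it follows that''; your check of \eqref{eq:ol_ar} via the two height inequalities, and of the lifting criterion in \autoref{prop:graph_morph}, fills in exactly what it omits.
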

\begin{proof}
  Assume that~$c$ does not satisfy GTC.
  Then there is an infinite path $\langle \node{n_i}{C(s_i)} \rangle_{i \in \nat}$ with labels $\vec{j} \coloneqq \langle \node{j_i}{\arfunc(r_i)} \rangle_{i \in \nat}$ in the labelled graph of~$c$
  such that
  there is no infinitely progressing trace along the path.
  For each $i \in \nat$,
  define $f_i\colon \fml(s_i) \to \gamma$ by
  $f_i(\varphi) \coloneqq \height(i, \varphi)$.
  Then 
  it follows that
  $\langle \node{n_i}{((u^\gamma)^*C)(s_i, f_i)} \rangle_{i \in \nat}$
  is an infinite path in the graph of $\Coalg{(u^\gamma)^*}(c)$.
\end{proof}

\begin{thm} \label{thm:GTC_rec}
  A pre-proof~$c$ satisfies the GTC if and only if $\Coalg{(u^\gamma)^*}(c)$ is recursive for each ordinal number $\gamma$.
\end{thm}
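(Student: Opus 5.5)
The two directions follow from the two lemmas just proved: \autoref{lem:GTC_rec} gives the forward direction, and \autoref{lem:wf_rec} gives the backward direction once we pick a large enough ordinal. The only real work is showing that such an ordinal exists.

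For the \emph{only-if} direction, assume $c$ satisfies the GTC on $t$. \autoref{lem:GTC_rec} holds for an arbitrary ordinal $\gamma$, since its proof only uses well-foundedness of the ordinals below $\gamma$. Hence $\Coalg{(u^\gamma)^*}(c)$ is recursive for every $\gamma$.

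For the \emph{if} direction, assume $\Coalg{(u^\gamma)^*}(c)$ is recursive for every $\gamma$. By \autoref{lem:wf_rec}, it suffices to find one ordinal $\gamma$ exceeding $\height(i,\varphi)$ for every non-progressing infinite path $\vec{j}$ and every node $(i,\varphi)$ of $G_{\vec{j}}$.
\begin{itemize}
\item Each height is an ordinal, since $G_{\vec{j}}$ is well-founded, as noted in \autoref{def:non_progressing}.
\item The collection of non-progressing infinite paths in the labelled graph of $c$ forms a set. A path is a function $\nat \to \GfibTotal{C} \times \GopfibTotal{\arfunc}$, and $C$, $\Jdg$ and $R$ are small.
\item For each such path, the node set of $G_{\vec{j}}$ is the set $\coprod_i \fml(\ccl(r_i))$.
\end{itemize}
So the family of all heights is a set of ordinals. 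Its supremum plus one gives the required $\gamma$. More concretely, each height is bounded by the rank of a well-founded relation on a set of cardinality $\kappa$, so $\gamma = \kappa^+$ suffices, where $\kappa = |\coprod_{s}\fml(s)| \cdot \aleph_0$.

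With this $\gamma$, \autoref{lem:wf_rec} yields the GTC.

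The main obstacle is this size argument, together with checking the claim inside \autoref{lem:wf_rec} that the constructed sequence really is a path in the graph of $\Coalg{(u^\gamma)^*}(c)$. For the latter, one verifies condition \eqref{eq:ol_ar} for $f_i = \height(i,-)$ and $g_i = f_{i+1}$ using the two height inequalities. The progressing case needs a strict decrease. This comes from the one-step edge $(i,\varphi)\to(i+1,\psi)$ in $G_{\vec{j}}$, whose only edge is its last one and is progressing.

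I would rely on the lemma as stated and on the description of edges in \autoref{prop:graph_morph}. There, the lifted edge exists exactly when $(j_i,f_{i+1})$ is an object of $\arfunc^\gamma(r_i,f_i)$.
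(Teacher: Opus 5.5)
Your proposal is correct and follows the paper's route. The theorem is \autoref{lem:GTC_rec} (valid for every $\gamma$) combined with \autoref{lem:wf_rec} at a sufficiently large $\gamma$, and your cardinality/supremum argument (e.g.\ $\gamma=\kappa^+$) correctly supplies the existence of such a $\gamma$, which the paper leaves implicit; note only that the heights are well defined exactly for paths with no infinitely progressing trace along any suffix, which is what failure of the GTC provides.
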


\begin{exa}
  Consider the running example of descending sequences (\autoref{eg:desc_proof} and \autoref{eg:desc_GTC}).
  In this case, 
  an infinite path in the graph of a pre-proof~$c$ has labels 
  $\vec{j}$ of the form $\langle \node{*}{\arfunc(r_i)} \rangle_{i \in \nat}$
  with $r_i \in 1 + \nat$.
  A sequence $\vec{j}$ is non-progressing if there is $N \in \nat$ such that $r_i = *$  for each $i \geq N$.
  For any such sequence,
  the height is at most~$N$.
  Hence taking $\gamma \coloneqq \omega$ satisfies the assumption of \autoref{lem:wf_rec},
  and we have that
  the pre-proof $\cds$ satisfies the GTC if and only if $\Coalg{(u^\omega)^*}(\cds)$ is recursive.
  Moreover, 
  in this situation
  the construction of the proof system $P^\omega$ 
  offers
  the adjunction $\Delta \dashv \lim$ and the endofunctor $\Gds$ described in \autoref{sec:desc}.
\end{exa}

\section{Base Change of GTC} \label{sec:base_change}
As a supplementary property of the GTC,
let us introduce a way of transferring trace structures and GTCs along morphisms between proof systems.
This is analogous to the classical result (\autoref{prop:rec_coalg}) on the transfer of recursiveness of coalgebras along adjunctions.
\begin{lem} \label{lem:poly_pb}
  The category $\Poly$ has pullbacks.
\end{lem}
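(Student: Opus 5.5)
Given a cospan $P_1 \xrightarrow{v} P \xleftarrow{w} P_2$ in $\Poly$, with $v=(v,v_1,v_2)$ and $w=(w,w_1,w_2)$, I will build the pullback componentwise in $\Cat$. Concretely, set $\cat{A}'' \coloneqq \cat{A}_1\times_{\cat{A}}\cat{A}_2$, $\cat{I}'' \coloneqq \cat{I}_1\times_{\cat{I}}\cat{I}_2$ and $\cat{J}''\coloneqq \cat{J}_1\times_{\cat{J}}\cat{J}_2$. The maps $s'',q'',t''$ are induced by the universal property of these pullbacks, using the commuting squares of $v$ and $w$. The projections give candidate morphisms $P''\to P_1$ and $P''\to P_2$. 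The proof then has three steps: (a) $P''$ is a polynomial; (b) the projections are morphisms in $\Poly$, i.e.\ the rightmost squares are pullbacks; (c) the universal property holds.

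Step (b) and the $t''$ part of (a) are formal. Since the rightmost squares of $v$ and $w$ are pullbacks, we have $\cat{J}_1\cong \cat{J}\times_{\cat{A}}\cat{A}_1$ and $\cat{J}_2\cong\cat{J}\times_{\cat{A}}\cat{A}_2$. Pasting pullbacks then gives $\cat{J}''\cong \cat{J}\times_{\cat{A}}\cat{A}''\cong \cat{J}_1\times_{\cat{A}_1}\cat{A}''$, so the square from $t''$ to $t_1$ is a pullback, and symmetrically for $t_2$. Moreover $t''$ is a pullback of the discrete fibration $t$ along $\cat{A}''\to\cat{A}$, and discrete fibrations are stable under pullback, so $t''$ is a discrete fibration.

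The main obstacle is showing in (a) that $(s'',q'')$ is a two-sided discrete fibration. I will verify the three lifting axioms directly, componentwise.

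- Take $i''=(i_1,i_2)$ and a morphism $(f_1,f_2)\colon a''\to s''(i'')$ in $\cat{A}''$. Lift $f_1$ uniquely to $i'_1\to i_1$ in the fibre $(\cat{I}_1)_{q_1(i_1)}$, and lift $f_2$ likewise in $\cat{I}_2$.
- Both lifts map under $v_1$ and $w_1$ to lifts in $\cat{I}$ of the common image $v(f_1)=w(f_2)$. These lie in the fibre over $v_2(q_1(i_1))=w_2(q_2(i_2))$, because $v_1$ and $w_1$ preserve $q$-vertical maps (they commute with $q$ via $v_2,w_2$).
- By uniqueness of lifts in $\cat{I}$ the two images agree, so the pair is a morphism of $\cat{I}''$, and it is unique because each component is.
- The opposite-direction lift along $q''$ is handled symmetrically.
- For the factorisation axiom, decompose each component and invoke uniqueness in $\cat{I}$ again.

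The one delicate point is that $v_1$ and $w_1$ send cartesian lifts to cartesian lifts. This follows from $s\circ v_1=v\circ s_1$ together with $q\circ v_1=v_2\circ q_1$ and uniqueness.

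For (c), take a cone $Q\to P_1$, $Q\to P_2$ agreeing in $P$. Its components factor uniquely through the three pullbacks $\cat{A}''$, $\cat{I}''$, $\cat{J}''$. The induced map is a $\Poly$-morphism because its rightmost square is a pullback: pasting with the pullback square for $Q\to P_1$ and using the pullback lemma (via $\cat{J}''\to\cat{J}_1$ being a pullback of $\cat{A}''\to\cat{A}_1$) gives this. Uniqueness is inherited from $\Cat$.
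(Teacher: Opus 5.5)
Your proposal is correct and follows the same route as the paper. You form the componentwise strict pullbacks $\cat{A}''$, $\cat{I}''$, $\cat{J}''$ in $\Cat$, use the pullback lemma for the rightmost squares and stability of discrete fibrations for $t''$, and check the universal property componentwise via the pullback lemma. The only difference is that you spell out the two-sided discrete fibration check for $(s'',q'')$, using uniqueness of lifts in $\cat{I}$ to force $v_1,w_1$ to preserve lifts, which the paper dismisses as holding ``by definition''.
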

\begin{proof}
  Let $(f_1, f_2, f_3)\colon P_2 \to P_1$ and $(g_1, g_2, g_3)\colon P_3 \to P_1$ 
  be morphisms
  in $\Poly$. 
  We write 
  $\cat{C}^1_i \xleftarrow{s_i} \cat{C}^2_i \xrightarrow{q_i} \cat{C}^3_i \xrightarrow{t_i} \cat{C}^1_i$ for the polynomials $P_i$ ($i=1, 2, 3$).
  For $j=1, 2, 3$, define a category $\cat{C}^j$ by the change-of-base of $g_j$ along $f_j$,
  and write $h^j_i\colon \cat{C}^j \to \cat{C}^j_i$ ($i = 2, 3$) for the projection functors.
  We define $P = \cat{C}^1 \xleftarrow{s} \cat{C}^2 \xrightarrow{q} \cat{C}^3 \xrightarrow{t} \cat{C}^1$ as the induced functors 
  by universality of pullbacks.
  Then the pullback lemma shows that $(h^3_i, t, t_i, h^1_i)$ forms a pullback square for each $i = 1, 2$.
  Since discrete fibrations are stable under pullbacks,~$t$ is a discrete fibration.
  By definition of $(s, q)$, it forms a two-sided discrete fibration.
  Thus,~$P$ is a polynomial and $(h^1_i, h^2_i, h^3_i)\colon P \to P_i$ are morphisms in $\Poly$ for $i=1, 2$.
  It is straightforward to check that they form a pullback in $\Poly$.
\end{proof}
\begin{defi} \label{def:pullback_ts}
  Let $P_i = \Jdg_i \xleftarrow{\prem_i} \GopfibTotal{\arfunc_i} \xrightarrow{\Gopfib{(\arfunc_i)}} R_i \xrightarrow{\ccl_i} \Jdg_i$ ($i = 1, 2$) be proof systems and
  $f\colon P_2 \to P_1$ be a morphism in $\Poly$  such that
  $f_1$ preserves opcartesian morphisms.
  Given a trace structure
  $(\Gfib{\fml}\colon T_1 \to P_1, \prog)$  for $P_1$,
  we define a trace structure $(\Gfib{(\fml \circ f^\op)}, \prog_2)$ for $P_2$, denoted $f^\sstar(\Gfib{\fml}, \prog)$,
  as follows.
  \begin{itemize}
    \item The morphism $\Gfib{(\fml \circ f^\op)}\colon T_2 \to P_2$ is given by the pullback (change-of-base) of~$u$ along~$f$. 
    Note that $\fml \in \Psh{\Jdg_1}$ is mapped to $\fml \circ f^\op \in \Psh{\Jdg_2}$ via the pullback.
  \begin{equation} \label{eq:pb_uf}
    \xymatrix{
      T_2 \pullbackmark{0, 1}{1, 0}\ar[d]_{\Gfib{(\fml \circ f^\op)}} \ar[r]^{f'} &T_1 \ar[d]^{\Gfib{\fml}} \\
      P_2 \ar[r]^{f} &P_1.
    }
  \end{equation}
  \item The subset $\prog_2$ is defined by 
  \[
  \prog_2 \coloneqq \{\varphi \traceto{i \in \arfunc_2(r)} \psi \text{ for }T_2, P_2 \mid \varphi \traceto{f_1(i) \in \arfunc_1(f_2(r))} \psi \in \prog \text{ for }T_1, P_1 \}.\]
  \end{itemize}
  The tuple $(\Gfib{(\fml \circ f^\op)}, \prog_2)$ forms a trace structure
  by the assumption on preservation of opcartesian morphisms.
\end{defi}

\begin{prop} \label{prop:pullback_GTC}
  Let $f\colon P_2 \to P_1$ be a morphism in $\Poly$,
  and~$c\colon C \to \polyfunc{P_1}(C)$ be a pre-proof of $P_1$.
  Then the following statements hold.
  \begin{enumerate}
    \item If~$c$ is recursive, then $\Coalg{f^*}(c)$ is recursive.
    \item 
      Assume that $f_1$ preserves opcartesian morphisms.
      If~$c$ satisfies the GTC 
      for $P_1$ with a trace structure~$t$, then 
      $\Coalg{f^*}(c)$ satisfies the GTC for $P_2$ with $f^\sstar t$.
  \end{enumerate}
\end{prop}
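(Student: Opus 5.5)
For part (1), the plan is to use \autoref{prop:rec_coalg} directly. The morphism $f\colon P_2 \to P_1$ in $\Poly$ induces, by \autoref{lem:diag}, a canonical step $\tau\colon f^*\polyfunc{P_1} \Rightarrow \polyfunc{P_2} f^*$. Moreover $f^*$ has the right adjoint $f_*$, so we are in the setting of \autoref{prop:steps} with $L = f^*$ and $R = f_*$. \autoref{prop:rec_coalg} then says that $\Coalg{f^*}$ preserves recursive coalgebras, which is exactly the first claim.

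Alternatively, one could argue graph-theoretically. \autoref{prop:graph_morph} gives a graph homomorphism $\graphhom{f}$ from the labelled graph of $\Coalg{f^*}(c)$ to that of $c$. Any infinite path upstairs would then map to an infinite path downstairs, contradicting \autoref{thm:wf}; this also uses \autoref{cor:pra_intersection} and \autoref{prop:graph_next}.

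For part (2), I would argue via paths and traces. Take an infinite path in the labelled graph of $\Coalg{f^*}(c)$, say with nodes $n_k \in C(f(x_k))$ and labels $i_k \in \cat{I}_2$. Its image under $\graphhom{f}$ is an infinite path in the labelled graph of $c$, with labels $f_1(i_k)$, visiting the same elements $n_k \in C(f(x_k))$. By the GTC for $c$, there is a suffix starting at some $N$ and an infinitely progressing trace $\langle \varphi_k\rangle_{k\ge N}$ along it. Each $\varphi_k$ lies in $\fml(f(x_k)) = (\fml\circ f^\op)(x_k)$, so it is a candidate formula for $P_2$.

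The key step is to show that each $\varphi_k \traceto{f_1(i_k)} \varphi_{k+1}$ in $T_1$ lifts to $\varphi_k \traceto{i_k} \varphi_{k+1}$ in $T_2$. By definition this means producing an object $i'$ of the premise category of $T_2$ with $(\Gfib{(\fml\circ f^\op)})_1(i') = i_k$, $s'(i') = \varphi_k$-at-$x_{k+1}$ and $t'(q'(i')) = \varphi_k$-at-$x_k$. Since $T_2$ is the pullback in $\Poly$ of \autoref{lem:poly_pb}, computed componentwise as change-of-base, such an $i'$ is exactly a pair consisting of $i_k$ and a witness $i''\in T_1$ for the $T_1$-trace step lying over $f_1(i_k)$. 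The compatibility of the data then follows from the pullback squares: $s_2, q_2, t_2$ are the induced functors, and $f(x) = \Gfib{\fml}(\varphi)$ matches on both sides.

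Progress is preserved by construction, since $\prog_2$ is defined as the preimage of $\prog$. The lifted sequence is therefore an infinitely progressing trace along the suffix from $N$ of the original path, so $\Coalg{f^*}(c)$ satisfies the GTC for $f^\sstar t$.

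I expect the main obstacle to be the bookkeeping in this lifting step: unwinding the pullback of \autoref{lem:poly_pb} at the level of objects of $\cat{C}^2$ and $\cat{C}^3$ to confirm that a $T_1$-witness together with $i_k$ yields a $T_2$-witness. The hypothesis that $f_1$ preserves opcartesian morphisms is needed only so that $f^\sstar t$ is a trace structure (\autoref{def:pullback_ts}); the path-lifting argument itself should not use it beyond that.
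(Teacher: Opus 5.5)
Your proposal is correct and follows the paper's proof essentially step for step. Part (1) is \autoref{prop:rec_coalg} applied to the canonical step of \autoref{lem:diag}. Part (2) maps an infinite path down along $\mathcal{G}(f)$ via \autoref{prop:graph_morph}, takes the infinitely progressing trace that the GTC gives for $c$, and lifts each step to $f^\sstar t$ by pairing $i_k$ with its $T_1$-witness in the componentwise pullback; progress is preserved because $\prog_2$ is the preimage of $\prog$, and your unwinding of the pullback is more explicit than the paper's. One small slip: the witness should satisfy $s'(i') = (x_{k+1},\varphi_{k+1})$, not ``$\varphi_k$ at $x_{k+1}$''.
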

\begin{proof}
  (1) Follows immediately from \autoref{prop:rec_coalg}.

  (2)
  For each infinite path $\langle n_i \in (f^*C)(s_i) \rangle_{i \in \nat}$ with labels $\langle j_i \in \arfunc_2(r_i) \rangle_{i \in \nat}$ in the labelled graph of $\Coalg{f^*}(c)$,
  by \autoref{prop:graph_morph}, we have a path $\langle n_i \in C(f(s_i)) \rangle_{i \in \nat}$ with labels $\langle f_1(j_i) \in \arfunc_1(f_2(r_i)) \rangle_{i \in \nat}$ in the labelled graph of~$c$.
  When~$c$ satisfies the GTC for $t = (\Gfib{\fml}, \prog)$,
  there exists $N \in \nat$ and an infinitely progressing trace $\langle \varphi_i \rangle_{i \geq N}$ along the suffix $\langle n_i \in C(f(s_i)) \rangle_{i \geq N}$
  on~$t$.
  Let us show that
  $\langle (s_i, \varphi_i) \rangle_{i \geq N}$ is an infinitely progressing trace for $\langle n_i \in (f^*C)(s_i) \rangle_{i \geq N}$ on $f^\sstar t$.
  For each $i \in \nat$,
  $\varphi_i \traceto{f_1(j_i) \in \arfunc_1(f_2(r_i))} \varphi_{i+1}$ for $P_1$ with~$t$ holds and it means that 
  there exists $\node{i'}{\arfunc_1'(f_2(r_i), \varphi_i)}$ such that 
  $(\Gfib{\fml})_1(\node{i'}{\arfunc_1'(f_2(r_i), \varphi_i)}) = (\node{f_1(j_i)}{\arfunc_1(f_2(r_i))})$ and
  $\prem_1'(\node{i'}{\arfunc_1'(f_2(r_i), \varphi_i)}) = \varphi_{i+1}$.
  By definition of $f^\sstar t$,
  it follows that $(s_i, \varphi_i) \traceto{j_i \in \arfunc_2(r_i)} (s_{i+1}, \varphi_{i+1})$ for $P_2$ with $f^\sstar t$.
  Moreover, this trace is infinitely progressing.
\end{proof}

\section{Examples} \label{sec:example}
The examples in this section are intended to clarify 
how the abstract framework is instantiated in concrete proof systems.
The first two
examples, in  \autoref{sec:modal} and \autoref{sec:higher}, show how the GTC is guided by the semantics in fixed-point
logics.  The ordinal-indexed approximants used to interpret fixed-point
operators provide the lifted algebra required to apply the soundness theorem (\autoref{thm:soundness}).
The third example, in \autoref{sec:santocanale}, illustrates the use of 
non-discrete categories of judgements.
When open terms
are interpreted relative to assignments of free variables, judgements
vary functorially with these assignments, and hence are naturally
organised into a category rather than a mere set.  

\subsection{For Modal $\mu$-calculus} \label{sec:modal}
\newcommand{\Prop}{\mathrm{Prop}}
\newcommand{\Var}{\mathrm{Var}}
We consider a non-wellfounded proof system for the modal $\mu$-calculus drawn from~\cite{DBLP:conf/lics/AfshariL17,AfshariW22}. This system is an adaptation of the tableaux proof system given by Niwi\'nski and Walukiewicz~\cite{DBLP:journals/tcs/NiwinskiW96}.

\subsubsection{sequent calculus for the modal $\mu$-calculus}
We begin by recalling the modal $\mu$-calculus.
Let $\Prop$ be a set of propositional letters,
$\Sigma$ be a set of actions,
and
$\Var$ be a countably infinite set of variables.
The set of $\mu$-calculus formulas is defined by the following BNF:
\[
  \varphi ::= p \mid \neg p \mid x \mid \varphi \lor \varphi \mid \varphi \land \varphi \mid \langle a \rangle \varphi \mid [a] \varphi \mid \mu x.\varphi \mid \nu x.\varphi,
\]
where $p \in \Prop$, $x \in \Var$, and $a \in \Sigma$.
We refer to $(\neg, \lor, \land)$
as logical connectives, to $(\langle a \rangle, [a])$ as modalities, and to $(\mu x, \nu x)$ as fixed-point operators.
A formula $\varphi$ is said to be
\emph{well-named}
if 
 for each variable~$x$, there is at most one subformula of the form $\sigma x.\psi$ with $\sigma \in \{\mu,\nu\}$, and if~$x$ occurs free in $\varphi$, then there is no subformula of $\varphi$ of the form $\sigma x.\psi$.
We regard formulas up to \(\alpha\)-equivalence and always choose
well-named representatives. 

Substitutions are defined in a standard way, and 
it is still consistent with well-named
by renaming of bound variables when necessary.
Given a formula $\varphi$,
we define a partial order $\leq_\varphi$ on variables occurring in $\varphi$
as the least one such that
$x \leq_\varphi y$
if~$x$ is free in a subformula of the form $\sigma y.\psi$ of $\varphi$, where $\sigma \in \{\mu, \nu\}$.
We call~$x$ a \emph{$\sigma$-variable of $\varphi$} if $\sigma x.\psi$ is a subformula of $\varphi$  for some $\psi$.

The negation operator $\neg$ can be extended to all formulas by using De Morgan duality, e.g.~$\neg(\mu x.\langle a \rangle x \land p) = \nu x.[a]x \lor \neg p$.
For the soundness argument,
we fix a labelled transition system (LTS for short)
 $\mathcal{K} = (S, R\colon \Sigma \to \mathcal{P}(S \times S), L\colon \Prop \to \mathcal{P}S)$.
Let $\rho\colon \Var \to \mathcal{P}S$ be a function, called a \emph{valuation}. 
The \emph{semantics  of a formula $\varphi$ in~$K$ under $\rho$} is given by a set of states $\llbracket \varphi \rrbracket_\rho^\mathcal{K} \subseteq S$ defined inductively as follows:
\begin{align*}
  &\llbracket p \rrbracket_\rho^\mathcal{K} \coloneqq L(p),
  \llbracket x \rrbracket_\rho^\mathcal{K} \coloneqq \rho(x), 
  \llbracket \varphi_1 \land \varphi_2 \rrbracket_\rho^\mathcal{K} \coloneqq \llbracket \varphi_1 \rrbracket_\rho^\mathcal{K} \cap \llbracket \varphi_2 \rrbracket_\rho^\mathcal{K}, \\
  &\llbracket [a] \varphi \rrbracket_\rho^\mathcal{K} \coloneqq \{s \mid \forall t.~\big((s, t) \in R(a) \text{ implies } t \in \llbracket\varphi \rrbracket_\rho^\mathcal{K}\big).\}, \\
  &\llbracket \nu x.\varphi \rrbracket_\rho^\mathcal{K} \coloneqq 
  \mathrm{gfp}(\lambda v \in \mathcal{P}S.~\llbracket \varphi \rrbracket_{\rho[x \mapsto v]}^\mathcal{K}).
\end{align*}
The semantics of other formulas are defined by duality.
Here, for a monotone endofunction~$f$  on a complete lattice,
$\mathrm{gfp}(f)$ denotes its greatest fixed point, whose existence is guaranteed by the Knaster--Tarski fixed-point theorem.
Note that for each formula $\sigma x.\varphi$ with $\sigma \in \{\mu, \nu\}$,
variables occurring in $\varphi$ appear only positively in $\varphi$.
This ensures that
the function $\lambda v \in \mathcal{P}S.~\llbracket \varphi \rrbracket_{\rho[x \mapsto v]}^\mathcal{K}$ is monotone on the complete lattice $(\mathcal{P}S, \subseteq)$.

We next recall a sequent calculus introduced in~\cite{DBLP:conf/lics/AfshariL17}.
A \emph{sequent} is a finite set of formulas.
We say that a sequent $\Gamma$ is \emph{valid} in $\mathcal{K}$ if 
$\bigcup_{\varphi \in \Gamma}\llbracket \varphi \rrbracket_\rho^\mathcal{K} = S$ for each valuation $\rho$.
The proof system consists of the following inference rules:
\begin{center}
  \AxiomC{}
  \RightLabel{Ax}
 \UnaryInfC{$p, \neg p$}
 \DisplayProof
 \quad
  \AxiomC{$\Gamma$}
  \RightLabel{Wk}
 \UnaryInfC{$\Gamma, \varphi$}
 \DisplayProof
 \quad
 \AxiomC{$\Gamma, \varphi, \psi$}
 \RightLabel{$\lor$}
 \UnaryInfC{$\Gamma, \varphi \lor \psi$}
 \DisplayProof
 \quad
 \AxiomC{$\Gamma, \varphi$}
 \AxiomC{$\Gamma, \psi$}
 \RightLabel{$\land$}
 \BinaryInfC{$\Gamma, \varphi \land \psi$}
 \DisplayProof
 \\
 \AxiomC{$\Gamma, \varphi$}
 \RightLabel{Mod}
 \UnaryInfC{$\langle a \rangle \Gamma,[a]\varphi$}
 \DisplayProof
 \quad
 \AxiomC{$\Gamma, \varphi[\sigma x.\varphi/x]$}
 \RightLabel{$\sigma$ (where $\sigma \in \{\mu, \nu\}$)}
 \UnaryInfC{$\Gamma, \sigma x.\varphi$}
 \DisplayProof
\end{center} 
Here, 
\(\langle a\rangle\Gamma\) is the set
\(\{\langle a\rangle\varphi \mid \varphi\in\Gamma\}\). 
The symbols Ax, Wk,  $\lor$, $\land$, Mod, $\sigma$ are rule schemata,
each representing a (possibly infinite) family of concrete rule instances.
A rule instance is obtained by instantiating a rule schema with an assignment to its metavariables.
For example, $(\lor, (\Gamma, \varphi, \psi \coloneqq \emptyset, p, p))$
represents the rule instance whose conclusion is $p \lor p$ and whose premise is~$p$.
For the rule schema $\sigma$,
the bound variable~$x$ in the principal formula $\sigma x.\varphi$ is said to be \emph{unfolded}.

\subsubsection{abstract proof system~$P$} \label{subsec:modal_p}
We then define an abstract proof system~$P$ as follows.
Regarding sets as discrete categories,
we take $\Jdg$ to be the set of sequents and
$R$ to be the set of rule instances.
The arity functor $\arfunc$ is given as $\arfunc\colon R \to \nat \hookrightarrow \Cat$,
where
$ar(r)$ is the number of the premises of the rule~$r$.
For $i \in \arfunc(r)$,
$\prem(r, i)$ and $\ccl(r)$ are defined as the~$i$-th premise and the conclusion of the rule instance~$r$, respectively.

Then 
a (possibly infinite) tree $\Pi$ obtained by applying rules, can be represented as a pre-proof $c\colon C \to \prffunc{P}(C)$ in $\Psh{\Jdg}$
defined by
$C(\Gamma)$ to be the set of vertices of $\Pi$ labelled with the sequent $\Gamma$, and
$c_\Gamma(v) \coloneqq (r, \langle v_i \rangle_{i \in \arfunc(r)})$
where
$v_i$ is the~$i$-th premise node of~$v$ in $\Pi$ and 
$r$ is the rule instance from~$v$ to $\langle v_i \rangle_i$.

We move on to define a trace structure for~$P$.
A \emph{marked formula} $\check{\varphi}$ (resp.~\emph{marked sequent}) is a formula (resp.~sequent) equipped with one mark on an occurrence of a fixed-point operator $\nu$ in $\varphi$. We sometimes explicitly write the mark as $\nu_\bullet$.
A trace structure $(\Gfib{\fml}\colon T \to P, \prog)$ for this proof system is defined by
\begin{itemize}
  \item $\fml \in \Psh{\Jdg}$ maps $\Gamma$ to 
  the set of marked sequents of $\Gamma$.
  \item
  A proof system $T = (\GfibTotal{\fml} \xleftarrow{\prem'} \GopfibTotal{\arfunc'}   \xrightarrow{\Gopfib{\arfunc'}} R' \xrightarrow{\ccl'} \GfibTotal{\fml})$
  is defined as a refinement
  of the original proof system~$P$ by explicitly tracking marked formulas:
  $R'$ is defined to be the change of base $R \times_{\Jdg} \GfibTotal{\fml}$,
  and the premises of a rule $(r, \check{\varphi}) \in R'$ 
  are defined by marking the formulas of the premises of~$r$ in~$P$ that correspond to $\check{\varphi}$ in the inclusion.
  For example, 
  \begin{center} \label{eq:modal_marked_rule}
 \AxiomC{$\Gamma, \check{\varphi}, \psi$}
 \RightLabel{$\lor$}
 \UnaryInfC{$\Gamma, \check{\varphi} \lor \psi$}
 \DisplayProof \quad 
 \AxiomC{$\{\Gamma, \varphi[\nu_\bullet x.\varphi/x]_i\}_{i}$}
 \RightLabel{$\nu$}
 \UnaryInfC{$\Gamma, \nu_\bullet x.\varphi$}
 \DisplayProof \quad
 \AxiomC{$\Gamma, \check{\varphi}[\sigma x.\varphi/x]$}
 \AxiomC{$\{\Gamma, \varphi[\sigma x.\check{\varphi}/x]_i\}_i$}
 \RightLabel{$\sigma$}
 \BinaryInfC{$\Gamma, \sigma x.\check{\varphi}$}
 \DisplayProof \quad
  \end{center}
  where the index~$i$ in the last one ranges over the occurrences of~$x$ in $\varphi$,
  and for each such~$i$, 
  the marked formulas
  $\varphi[\nu_\bullet x.\varphi/x]_i$ and $\varphi[\sigma x.\check{\varphi}/x]_i$
  are obtained by marking only  
  in the substituted part for 
  the occurrence~$i$.
  \item A subset $\prog$ is defined by marking the unfolding of $\nu$-variables:
  \[\prog \coloneqq \{\nu_\bullet x.\varphi \traceto{* \in \arfunc(r)} \varphi[\nu_\bullet x.\varphi/x]_i \mid 
  r \text{ is a rule instance of $\nu$ and }
  i \text{ is an occurrence of~$x$ in $\varphi$}\}.\]
\end{itemize}

With this trace structure,
the resulting GTC (see \autoref{def:GTC})
is equivalent to the condition adopted in~\cite{DBLP:conf/lics/AfshariL17}, namely:
every infinite path $\langle \Gamma_i \in \Jdg\rangle_{i \in \nat}$ in $\Pi$ has a $\nu$-thread,
where a $\nu$-thread is defined as follows.
A \emph{thread} along an infinite path $\langle\Gamma_i\rangle_{i \in \nat}$ is a sequence of formulas $\langle\varphi_i \in \Gamma_i \rangle_{i \geq N}$ with $N \in \nat$
such that 
$\varphi_{i+1}$ is the corresponding formula to $\varphi_i$
for each $i \geq N$.
A thread is called a \emph{$\nu$-thread} if 
a $\nu$-variable~$x$ is the minimal variable
(with respect to $\leq_{\varphi_N}$)
among variables
unfolded infinitely often along the thread.

We use the following property of traces in the modal
\(\mu\)-calculus, following \cite[Lemma~7]{DBLP:conf/mfcs/Bruse14}: if a thread along an infinite path contains infinitely many unfoldings of a fixed-point operator, 
then there is a unique variable that is minimal with respect to the dependency order among the variables unfolded infinitely often along the thread.

\begin{prop} \label{prop:nu-thread}
  The GTC of the proof system~$P$ equipped with the trace structure $(\Gfib{\fml}\colon T \to P, \prog)$ defined above 
  is equivalent to the following condition:
  \begin{center}
  every infinite path $\langle \Gamma_i \rangle_{i \in \nat}$ in $\Pi$ has a $\nu$-thread.
  \end{center}

\end{prop}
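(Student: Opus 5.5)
We prove the two directions separately, relating a trace in the abstract sense (a sequence of marked sequents $\check\Gamma_i\in\fml(\Gamma_i)$ connected by $\traceto{}$) to a thread in the sense of~\cite{DBLP:conf/lics/AfshariL17}. The basic observation, read off from the definition of $T$, is this. A marked sequent $\check\Gamma_i$ is $\Gamma_i$ with one marked occurrence of a $\nu$-operator, which lies inside a unique formula $\varphi_i\in\Gamma_i$. The relation $\check\Gamma_i\traceto{}\check\Gamma_{i+1}$ holds exactly when $\varphi_{i+1}$ is the formula corresponding to $\varphi_i$ and the mark is carried along. The mark is carried either to the same syntactic occurrence, or, when the unfolded formula is $\sigma x.\varphi$ with the mark inside $\varphi$ or on the unfolded $\nu$ itself, into one of the substituted copies. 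A step is progressing exactly when the marked $\nu_\bullet x$ is itself unfolded. Consequently every trace projects to a thread, and the mark always sits on a $\nu$-binder that is a subformula of the current formula of the thread.

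\emph{GTC implies $\nu$-thread.} Take an infinitely progressing trace along a suffix and project it to a thread $\langle\varphi_i\rangle_{i\ge N}$. The marked binder $\nu_\bullet x$ is unfolded infinitely often, so $x$ is unfolded infinitely often. It remains to show that $x$ is the $\leq_{\varphi_N}$-minimal such variable. Suppose some $y$ with $y<x$ strictly (so $y$ is free in the body of $\sigma x$, or lower in the dependency order) is unfolded at a later step. Then $\sigma y.\psi$ must be the principal formula, so the current formula is a subformula of the $\sigma y$-unfolding. The marked occurrence $\nu_\bullet x$ then lies inside $\psi$, and after the unfolding it lands in the substituted copy $\sigma y.\psi$ or stays in $\psi$. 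In either case the $\nu_\bullet x$ occurrence is only unfolded again after passing back through a copy created by this $y$-unfolding. The cleanest way to organise this is to use the well-known subformula/closure argument: the thread eventually stays inside the Fischer--Ladner closure, and the outermost infinitely unfolded binder is the minimal one. By Bruse's lemma~\cite[Lemma~7]{DBLP:conf/mfcs/Bruse14} the minimal variable is unique. We argue that if $y<x$ were unfolded infinitely often, the mark would be infinitely often relocated into fresh copies and would have to be inside a $\sigma y$-formula at each $\nu_\bullet x$-unfolding; the well-namedness and the dependency order then contradict $x$ being unfolded while the principal operator's scope is $y$. Hence $x$ is minimal, and the thread is a $\nu$-thread.

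\emph{$\nu$-thread implies GTC.} Let $\langle\varphi_i\rangle_{i\ge N}$ be a $\nu$-thread with minimal variable $x$. By Bruse's lemma, beyond some $M\ge N$ no variable strictly below $x$ is unfolded. Pick $M'\ge M$ at which $x$ is unfolded, so that $\varphi_{M'}=\nu x.\psi$, and mark this binder. Inductively we transport the mark along the thread, choosing at each $\sigma$-step the copy that corresponds to the next formula on the thread. The key invariant is that, after $M'$, $\varphi_i$ is always of the form $\chi[\nu x.\psi/x]$ with the marked binder being the (unique, since no variable above the scope of $x$ is unfolded) $\nu x$ which the thread will next unfold. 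Because nothing below $x$ is unfolded, the thread never leaves the scope of this binder except by unfolding it, which re-creates a marked copy via the second displayed rule of $T$. Each such unfolding is in $\prog$, and it occurs infinitely often, so the trace is infinitely progressing.

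The main obstacle is making the invariant in the second direction rigorous, namely showing that the thread's next $x$-unfolding is exactly the marked occurrence and not some other copy of $\nu x.\psi$. I would prove this by induction on the steps after $M'$, using well-namedness together with the fact that occurrences of $\nu x.\psi$ inside $\varphi_i$ arise only from substitutions performed by $x$-unfoldings along the thread. The first direction uses the same structural fact in contrapositive form.
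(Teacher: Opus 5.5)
Your proposal follows the paper's proof in both directions. For ``GTC $\Rightarrow$ $\nu$-thread'' you project the infinitely progressing trace onto the formulas carrying the mark and show that the marked variable is the minimal one unfolded infinitely often. For the converse you go past the point after which no binder below $x$ is unfolded and carry a mark on copies of the $\nu x$-binder, with a progressing step at each $x$-unfolding.

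Your forward transport with look-ahead (always marking the copy the thread will unfold next) is the paper's backward propagation from a later $x$-unfolding, read in the other direction. The backward reading has the advantage that the mark at each position is dictated by the future $x$-unfolding rather than chosen. The fact you single out as the main obstacle is that the occurrence unfolded at the next $x$-unfolding descends from the one unfolded at the previous one. This is exactly what the paper compresses into ``the absence of unfoldings of binders smaller than $x$ ensures that this backward propagation is compatible with the trace structure''. Your plan for it is sound, provided you also record one thing: the other $\nu x$-binders already present in $\varphi_{M'}$ sit inside closed copies of binders $\sigma w$ with $w<x$, and the thread can only enter these by unfolding $w$.

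The step that fails as written is in the first direction: ``in either case the $\nu_\bullet x$ occurrence is only unfolded again after passing back through a copy created by this $y$-unfolding''. If the mark stays in the body $\psi$, the marked binder can be unfolded, progressively, before any $y$-copy is reached. For instance, take $\mu y.\nu x.(x\wedge y)$, unfold $\mu y$ keeping the mark on the outer $\nu x$, and then unfold that $\nu x$. What is true is a dichotomy, and one half of it is missing from your sketch.

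Here is the repaired argument. Let $y$ be the minimal variable unfolded infinitely often. By Bruse's lemma it is unique, hence least, since there are finitely many variables. Suppose $y<x$, and go past the point after which only variables $\geq y$ are unfolded. Take consecutive $y$-unfoldings $t_1<t_2$. The $\sigma y$-binder unfolded at $t_2$ is (a duplicate of) one of the copies substituted at $t_1$, because every $\sigma y$-binder outside those copies lies inside a closed copy of some $\sigma w$ with $w<y$, which the thread cannot enter. Now there are two cases.
\begin{itemize}
\item If at $t_1$ the mark is left outside these copies, it stays outside them, since the copies are closed and the mark only moves to descendant positions. Then it cannot lie in the formula unfolded at $t_2$, contradicting that this formula is on the thread determined by the mark.
\item Otherwise the mark is moved into one of the copies at $t_1$. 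It then lies strictly inside a subformula the thread does not enter before $t_2$, so it is not principal in $(t_1,t_2]$.
\end{itemize}
Hence only finitely many steps are progressing, a contradiction. The paper itself only cites Bruse's lemma at this point, which gives uniqueness of the minimal variable but not that it is the marked one. So this dichotomy is the actual content of that direction, and your write-up should state it explicitly.
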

\begin{proof}
 Assume that a pre-proof satisfies the GTC.
 Then for every infinite path $\langle \node{v_i}{C(\Gamma_i)} \rangle_{i \in \nat}$
 with labels $\langle \node{j_i}{\arfunc(r_i)} \rangle_{i \in \nat}$ in the graph of~$c$, there exists an infinitely progressing trace $\langle \check{\Gamma}_i \rangle_{i \geq N}$ for some $N \in \nat$.
 Let $\langle \varphi_i \rangle_{i \geq N}$ be
 the sequence  of formulas $\varphi_i \in \Gamma_i$ that include the mark in $\check{\Gamma}_i$.
 Then this sequence forms a $\nu$-thread 
 since the variable of the fixed-point operator $\nu x$ tracked in  $\langle \check{\Gamma}_i \rangle_{i \geq N}$ 
 is minimal 
 by the result~\cite[Lemma 7]{DBLP:conf/mfcs/Bruse14}.

 Conversely, 
 suppose that
 every infinite path in $\Pi$ has a $\nu$-thread.
 Let
 $\langle v_i \in C(\Gamma_i) \rangle_{i \in \nat}$ be an infinite path, 
 and let
 $\langle \varphi_i \rangle_{i \geq N}$ be a $\nu$-thread along this path.
 Let~$x$ be the minimal variable among variables unfolded infinitely often.
  By increasing \(N\) if necessary, we may assume that no fixed-point
binder strictly smaller than \(x\) is unfolded along the suffix.
 We now construct a trace along this suffix as follows:
for each
\(n\geq N\), choose a later position \(k\geq n\) at which \(x\)
is unfolded, and propagate the mark on that occurrence backwards from
\(k\) to \(n\) along the thread.  The absence of unfoldings of binders
smaller than \(x\) ensures that this backward propagation is compatible
with the trace structure.  Thus we obtain a sequence of marked
sequents
  $\langle \check{\Gamma}_n\rangle_{n\geq N}$
forming a trace along the path.  
Since \(x\) is unfolded infinitely often, the
trace is infinitely progressing.  Hence the path satisfies the GTC.
\end{proof}

The semantics $a\colon \prffunc{P}\Omega \to \Omega$
in $\Jdg$ is defined by 
\begin{align*}
\Omega(\Gamma) &\coloneqq \{*\}
\text{ if } \Gamma \text{ is valid in $\mathcal{K}$, and } \emptyset \text{ otherwise}, \\
a_\Gamma(r, \langle *\rangle_i) &\coloneqq *.
\end{align*}
This assignment is well-defined because each rule is sound:
whenever all premises of a rule are valid, its conclusion is valid.
Then a ca-morphism from~$c$ to~$a$ yields that
any sequent in $\Pi$ is valid.

\subsubsection{abstract proof system $P^\gamma$}
Let \(P\) be the abstract proof system for the modal \(\mu\)-calculus
defined in \autoref{subsec:modal_p},
and $\gamma$ be a sufficiently large ordinal, in particular larger than the cardinality of $2^S$
where $S$ is the set of states of the given LTS $\mathcal{K}$.
The proof system $P^\gamma$
can be seen as a transfinite extension of the original proof system~$P$
obtained
by annotating $\nu$-operators with ordinal numbers.
Such an ordinal-annotated system is often employed to prove soundness or completeness~\cite[Section 5]{DBLP:journals/tcs/NiwinskiW96}.

A judgement $(\Gamma, f)  \in \OJdg{\gamma}$
in the proof system $P^\gamma$
represents a sequent $\Gamma$ with ordinal annotations $\vec{\beta}$ on the $\nu$-operators in $\Gamma$;
we write $\Gamma^{\vec{\beta}}$ for this.
A rule arising from $\nu$ can be represented as\footnote{For simplicity, we suppress arities indexed by $\vec{\beta}' \leq \vec{\beta}$. This does not affect the proof system because the validity of these instances is ensured via the assignment $\vec{\beta}$.}:
\begin{center}
 \AxiomC{$\{\Gamma, \varphi^{\vec{\beta}}[\nu^{\alpha'} x.\varphi^{\vec{\beta}}/x]\}_{\alpha' < \alpha}$}
 \UnaryInfC{$\Gamma, \nu^\alpha x.\varphi^{\vec{\beta}}$}
 \DisplayProof
\end{center}
This rule is based on the transfinite computation of greatest fixed points~\cite{cousot1979constructive}:
for a monotone function $F\colon 2^S \to 2^S$,
$F^\gamma (\top) = \mathrm{gfp}(F)$
where
the approximant $F^\alpha$ for $\alpha \leq \gamma$ is defined by
  $F^0(\top) \coloneqq S$,
  $F^{\alpha+1}(\top) \coloneqq F(F^\alpha(\top))$,
and, if $\alpha$ is a limit ordinal,
  $F^\alpha(\top) \coloneqq \bigcap_{\alpha'<\alpha} F^{\alpha'}(\top)$.
Equivalently, using the convention that the intersection over the empty
family is \(S\), this definition can be written compactly as
  $F^\alpha(\top)
  =
  \bigcap_{\alpha'<\alpha} F(F^{\alpha'}(\top))$.

\subsubsection{soundness}
The transfinite computation of greatest fixed points 
induces that the semantic algebra~$a$ in $\Jdg$ lies in 
the image of $\Alg{u_*^\gamma}$.
For an assignment $f\colon \fml(\Gamma) \to \gamma$ and a subformula $\varphi$ of a formula in $\Gamma$,
we define $\llbracket (\varphi, f) \rrbracket^\mathcal{K}_{\rho}$
by following the  definition of $\llbracket \varphi \rrbracket^\mathcal{K}_\rho$,
except for the $\nu$-formulas, which are interpreted as
\[
  \llbracket (\nu x.\varphi, f) \rrbracket_{\rho}^\mathcal{K} \coloneqq 
  (\lambda v \in 2^S.~\llbracket (\varphi, f) \rrbracket_{\rho[x \mapsto v]}^\mathcal{K})^{f(\nu_\bullet x.\varphi)}(\top),
\]
where
$f(\nu_\bullet x.\varphi)$ is shorthand for the ordinal assigned by $f$ to 
the marked sequent $\check{\Gamma}$ obtained by marking the occurrence of $\nu x.\varphi$.

Define 
the semantics $a'\colon \prffunc{P^\gamma}(\Omega') \to \Omega'$ in $\OJdg{\gamma}$ by
\begin{align*}
\Omega'((\Gamma, f)) &\coloneqq \{*\} \text{ if } (\Gamma, f) \text{  is valid in $\mathcal{K}$, and } \emptyset \text{ otherwise}, \\
a'_{(\Gamma, f)}((r, f), \langle * \rangle_i) &\coloneqq *.
\end{align*}
Here, 
validity of $(\Gamma, f)$ is 
defined analogously to validity of $\Gamma$.

Noting that $F^\gamma (\top) = \mathrm{gfp}(F)$,
it follows that
$a = \Alg{u^\gamma_*}(a')$.
Therefore, by \autoref{thm:soundness}, for a proof \(c\), there is a
unique ca-morphism from \(c\) to \(a\), implying that every sequent
\(\Gamma\) appearing in \(c\) is valid in the fixed LTS \(K\).  Since
\(K\) was arbitrary, every such sequent is valid in every LTS.

\subsection{For Higher-Order Fixed-Point Logics} \label{sec:higher}
As in~\autoref{sec:modal},
our framework accommodates a non-wellfounded proof system for higher-order fixed-point logics with natural numbers (HFL$_\nat$), introduced in~\cite{DBLP:conf/csl/KoriT021}.
While we omit the concrete definition of the logic HFL$_\nat$ and its (abstract) proof system
due to space limitation,
we highlight differences from the modal $\mu$-calculus that are relevant in this context.

The logic HFL$_\nat$ is a higher-order logic with natural numbers,
equipped with both least and greatest fixed-point operators and allowing alternation between them.
Accordingly,
whereas 
for the modal $\mu$-calculus
we take marked formulas for fixed-point operators $\nu$ in sequents,
for HFL$_\nat$
we need to consider both 
 $\mu$ on the left-hand side
and 
 $\nu$ on the right-hand side of sequents.
We therefore adopt a trace structure to track 
$\mu$ on the left-hand side and $\nu$ on the right-hand side, in line with the notion of $\mu$/$\nu$-traces introduced in~\cite{DBLP:conf/csl/KoriT021}, see also the revised version~\cite{kori2020cyclic}.

The resulting GTC is, aside from minor differences in formulation,
essentially the same as the GTC proposed in~\cite{DBLP:conf/csl/KoriT021}.
Although it often suffices to track (unmarked) formulas along a path (cf.~\autoref{prop:nu-thread}) in first-order logics such as the modal $\mu$-calculus,
it is no longer sufficient in the higher-order setting,
which requires tracking marked formulas precisely.

\subsection{For Non-Wellfounded Proofs in \(\mu\)-bicomplete Categories} \label{sec:santocanale}

We finally discuss 
a non-wellfounded variant of 
Santocanale's
cut-free circular proof system~\cite{DBLP:conf/fossacs/Santocanale02}.  
 Santocanale originally
formulates proofs as finite graphs, while here we present the
same rules in the form of possibly infinite derivation trees, in accordance
with our coalgebraic framework.  
This example illustrates that our
framework accommodates abstract proof systems whose category of judgements
\(\Jdg\) is non-discrete.

\subsubsection{the calculus in \cite{DBLP:conf/fossacs/Santocanale02}}

We briefly recall the calculus described in~\cite{DBLP:conf/fossacs/Santocanale02}, adapting some notions to the present paper.  
Let $\lambda$ be an infinite regular cardinal,
and
\(\mathcal C\) be a locally $\lambda$-presentable category.
We fix a signature $\Omega$ of function symbols and an interpretation~$I$ of $\Omega$,
that is, $I(H)\colon \mathcal{C}^n \to \mathcal{C}$ for each~$n$-arity function symbol $H \in \Omega$.
Later,
we will impose a chain-convergence assumption:
parameterized initial
algebras and final coalgebras required for the interpretation of terms are obtained by the corresponding initial and final chains.  
This assumption ensures that \(\mathcal C\) is \(\mu\)-bicomplete~\cite{DBLP:journals/ita/Santocanale02}.
We write
$T(\mathcal{C})$ for the collection of
terms built from objects of \(\mathcal C\), variables, finite products
\(\bigwedge\), finite coproducts \(\bigvee\), and function symbols in $\Omega$.

A \emph{directed system of equations} over \(\mathcal C\) is a finite family
of labelled equations of the form
\[
  (x =_{\epsilon_S(x)} q_S(x))_{x\in X_S}
\]
where 
$x$ is a variable,
$q_S(x)$ is a term,
and
\(\epsilon_S(x)\in\{\mu,\nu\}\), together with a subset
\(X_{0,S}\subseteq X_S\) of designated variables, such that the dependency
graph on \(X_S\), defined by \(x\to y\) when \(y\) occurs in \(q_S(x)\), is a
forest with back edges rooted at \(X_{0,S}\).  
We write $(X_S,q_S,\epsilon_S,X_{0,S})$ for such a directed system and \(\leq_S\) for the induced order on variables.

We shall use
two directed system
  $S=(X_S,q_S,\epsilon_S,X_{0,S})$ and
  $T=(X_T,q_T,\epsilon_T,X_{0,T})$
for variables occurring on the left-hand and right-hand side of sequents, respectively.  
A sequent is a
pair of terms, written \(s\vdash t\).
For a set \(Z\) of free variables 
such that
$Z \cap X_S = \emptyset$,
each term \(s\) whose free variables are contained
in \(X_S\cup Z\) is interpreted as a functor
\[
  \llbracket s\rrbracket^Z_S\colon \mathcal{C}^Z \to \mathcal{C},
\]
by following the structure of the term in the usual way,
together with
an induction on the size of directed systems and on $<_S$.
A variable in~$Z$ is interpreted as the projection on the variable,
while a variable in $X_S$ is determined as follows.

For a variable $x \in X_S$,
the interpretation of the term $q_S(x)$ under 
 induces the functor \[\llbracket q_S(x) \rrbracket^{Z \cup x^\downarrow}_{S_{> x}}\colon \mathcal{C}^{Z} \times \mathcal{C}^{(x^\downarrow)} \to \mathcal{C}\] 
 under a directed system $S_{> x}$ consisting of variables greater than~$x$,
 where $x^\downarrow \coloneqq \{y \mid y \leq_S x\}$.
Substituting the interpretations for variables strictly under~$x$,
we obtain a parameterized endofunctor $\Phi_x^Z\colon \mathcal{C}^Z \times \mathcal{C} \to \mathcal{C}$.
We assume that, for each parameter $\rho \in \mathcal{C}^Z$,
each endofunctor $\Phi_x^Z(\rho, \_)$ 
has its initial algebra and final coalgebra obtained as the colimit of the initial $\lambda$-chain and the limit of its final $\lambda^\op$-chain, respectively;
see \cite{DBLP:journals/jlp/AdamekMM18} for background on fixed points.
If \(\epsilon_S(x)=\mu\) (resp.~$\epsilon_S(x) = \nu$), then \(\llbracket x\rrbracket_S^Z\) is defined as
the carrier of the parameterized initial algebra (resp.~final coalgebra) of \(\Phi_x^Z\).  
We use the analogous notation
  $\llbracket t\rrbracket^W_T$
for a term \(t\) interpreted relative to the right-hand system \(T\).

We work with the assumption-free fragment, omitting the rule \(A\) in \cite[Sec.~2.2]{DBLP:conf/fossacs/Santocanale02}. 
The calculus then contains rules for constant morphisms in $\mathcal{C}$, functoriality for function symbols, finite products, finite coproducts, and fixed-point unfolding.
We do not spell out these standard rules here; the fixed-point unfolding rules are as follows:
\[
\begin{array}{c@{\qquad}c}
  \dfrac{q_S(x)\vdash t}
        {x\vdash t}
        \ L\epsilon_S(x) x
&
  \dfrac{s\vdash q_T(y)}
        {s\vdash y}
        \ R\epsilon_T(y) y,
\end{array}
\]
  for \(x\in X_S\) and \(y\in X_T\).
For example, if \(\epsilon_S(x)=\mu\), the first rule is written
\(L\mu x\).

\subsubsection{abstract proof system~$P$} \label{subsec:santocanale_p}
Let us define an abstract proof system~$P$ for the calculus above.
We fix sets
\(Z\) and \(W\) of free variables for the left- and right-hand sides,
respectively.
For a left term \(s\), its interpretation is a functor
  $\llbracket s\rrbracket_S:\mathcal{C}^{Z}\to\mathcal{C}$,
and for a right term \(t\), its interpretation is a functor
  $\llbracket t\rrbracket_T:\mathcal{C}^{W}\to\mathcal{C}$.

We do not take judgements to be merely syntactic sequents, since open terms
depend on assignments of their free variables.  Instead, following the
convention that semantic objects are presheaves on \(\Jdg\), we define
\newcommand{\Seq}{\mathrm{Seq}}
\[
  \Jdg
  \coloneqq
  \Seq_{S,T}\times \mathcal{C}^{Z}\times (\mathcal{C}^{W})^\op,
\]
where \(\Seq_{S,T}\) is the discrete category of sequents
\(s\vdash t\).  We write an object of \(\Jdg\) as
  $(s\vdash t;\rho,\sigma)$
where \(\rho\in\mathcal{C}^{Z}\) and \(\sigma\in\mathcal{C}^{W}\).  
The category $\Psh{\Jdg}$ can be seen as $(\Set^{(\mathcal{C}^{Z})^\op \times \mathcal{C}^{W}})^{\Seq_{S, T}}$.
We now define an abstract proof system~$P$
from the rules of the calculus.
As in \autoref{sec:modal},
the rules yield an abstract proof system $P_0 = (\Seq_{S, T} \xleftarrow{\prem_0} \GopfibTotal{\arfunc_0} \xrightarrow{\Gopfib{\arfunc_0}} R_0 \xrightarrow{\ccl_0} \Seq_{S, T})$.
Taking the product with $\mathcal{C}^{Z} \times (\mathcal{C}^{W})^\op$,
we obtain
$P = (\Jdg \xleftarrow{\prem} \GopfibTotal{\arfunc} \xrightarrow{\Gopfib{\arfunc}} R \xrightarrow{\ccl} \Jdg)$
where $X \coloneqq X_0 \times \mathcal{C}^{Z} \times (\mathcal{C}^{W})^\op$ for $X \in \{R, \ccl\}$,
$\arfunc(r, \rho, \sigma) \coloneqq \arfunc(r)$,
and
$\prem(r, \rho, \sigma, i) \coloneqq (\prem_0(r, i), \rho, \sigma)$.

Then a derivation tree can be represented as a pre-proof $c\colon C \to \prffunc{P}(C)$ in $\Psh{\Jdg}$,
where $C(s \vdash t; \rho, \sigma)$ is the set of vertices labelled with $s \vdash t$.

We next define a trace structure.  Let
\[
  D_{S,T}\coloneqq
  \{\,L_x \mid x\in X_S,\ \epsilon_S(x)=\mu\,\}
  \cup
  \{\,R_y \mid y\in X_T,\ \epsilon_T(y)=\nu\,\}.
\]
The marker \(L_x\) means that we are following a left trace whose least
recurring left $\mu$-variable is intended to be \(x\).  Dually, \(R_y\)
means that we are following a right trace whose least recurring right $\nu$-variable is intended to be \(y\).  We take
  $\mathrm{fml}\in \widehat{\Jdg}$
to be the constant presheaf at \(D_{S,T}\).
A trace structure for~$P$ is  defined as follows.  
A marker \(L_x\)  propagates along
a premise \(i\) of a rule unless the rule is a left fixed-point rule for
some \(z\in X_S\) with \(z<_S x\), and  it is progressing exactly when $r=L\mu x$.
Dually, a marker $R_y$ propagates unless the rule is a right fixed-point rule for some \(z\in X_T\) with \(z<_T y\), and it is progressing exactly when $r=R\nu y$.

The resulting GTC says that, along every infinite path, either a left $\mu$ or right $\nu$-variable is unfolded infinitely often, and this variable is minimal among the variables unfolded infinitely often on the corresponding side.
The original condition in \cite{DBLP:conf/fossacs/Santocanale02} is formulated as a condition on cycles because they work with circular proofs rather than non-wellfounded proofs.

Define a presheaf
\[
  \Omega(s\vdash t; \rho, \sigma)
  \coloneqq
  \mathcal{C}(\llbracket s\rrbracket_S^{Z}(\rho),\llbracket t\rrbracket_T^{W}(\sigma)).
\]
The interpretation of the rules induces an \(F[P]\)-algebra
  $a\colon F[P](\Omega)\to \Omega$.

\subsubsection{abstract proof system \(P^{\gamma}\)}
Let~$P$ be the abstract proof system defined in \autoref{subsec:santocanale_p}.
Recall that we assume that the semantics of variables in \(X_S\) and \(X_T\)
is obtained by ordinal approximation: for each equation, the associated
parameterized endofunctor has its initial algebra or final coalgebra obtained
as the colimit of the initial \(\lambda\)-chain or the limit of the final \(\lambda^{\op}\)-chain, respectively.
We put $\gamma \coloneqq \lambda+1$.

Applying the construction in \autoref{def:ojdg} to the trace
structure above, we obtain an ordinal-annotated proof system
  $P^\gamma$.
A judgement of \(P^\gamma\) is a pair
  $((s\vdash t;\rho,\sigma),f)$,
where
  $f\colon D_{S, T}\to \gamma$
assigns an ordinal to each left $\mu$ or right $\nu$-variables.  

\subsubsection{Soundness}

We define a presheaf $\Omega' \in \Psh{\Jdg^\gamma}$ by
\[
  \Omega'((s\vdash t;\rho,\sigma),f)
  \coloneqq
  \mathcal{C}
  \bigl(
    \llbracket s\rrbracket_{S,f}^{Z}(\rho),
    \llbracket t\rrbracket_{T,f}^{W}(\sigma)
  \bigr),
\]
where
  $\llbracket s\rrbracket^{Z}_{S,f}\colon \mathcal C^{Z}\to\mathcal C$ and
  $\llbracket t\rrbracket^{W}_{T,f}\colon \mathcal C^{W}\to\mathcal C$
  are defined
  in almost the same way as $\llbracket s\rrbracket^{Z}_{S}$ and $\llbracket t\rrbracket^{W}_{T}$, respectively,
except for bound variables.  
For \(x\in X_S\), let
  $\Phi^{Z}_{x,f}\colon \mathcal C^{Z}\times\mathcal C\to\mathcal C$
be the parameterized endofunctor 
defined by $\Phi^{Z}_{x,f}(\rho, c) \coloneqq \llbracket q_S(x) \rrbracket_{S_{>x}, \Delta_\lambda}^{Z \cup x^\downarrow}[\llbracket y \rrbracket^{Z}_{S, f}/y, c/x]_{y <_S x}(\rho)
= \llbracket q_S(x) \rrbracket_{S_{>x}, \Delta_\lambda}^{Z \cup x^\downarrow}(\rho, (\llbracket y \rrbracket^{Z}_{S, f})_{y < x}, c)$.
Then for \(\rho\in\mathcal C^{Z}\), we define
$\llbracket x\rrbracket^{Z}_{S,f}(\rho)$ to be the $f(L_x)$-th object of the initial chain of $\Phi^{Z}_{x,f}(\rho, \_)$ if $\epsilon_S(x) = \mu$,
and the carrier of the final coalgebra of $\Phi^{Z}_{x,f}(\rho, \_)$ otherwise.
Its action on morphisms is defined by the same transfinite induction, using functoriality of $\Phi^A_{x, f}$ and the universal property of colimits.
Dually, \(\llbracket t\rrbracket^{W}_{T,f}\) is defined in the same way for
the right-hand system \(T\), using \(f(R_y)\) as the approximation stage for
each variable \(y\in X_T\) with $\epsilon_T(y) = \nu$.

In what follows, we focus on the left-hand side.
The corresponding definitions and arguments for the right-hand side are obtained dually.
For each $f\colon D_{S, T} \to \gamma$, $x \in X_S$, and $\beta \in \gamma$,
we write $f[x \mapsto \beta]$ for 
the function mapping~$d$ to $\beta$ if $\epsilon_S(x) = \mu$ and $d = L_x$,
and $f(d)$ otherwise.
\begin{lem} \label{lem:santocanale_sem}
  For each $f\colon D_{S, T} \to \gamma$, the following statements hold.
  \begin{enumerate}
    \item \label{item:y_gam} For each bound variable $x\in X_S$
      $\llbracket x \rrbracket_{S, f}^Z = \llbracket x \rrbracket_{S, f[y \mapsto \lambda]_{y >_S x}}^Z$,
      and it further gives that
      $\Phi_{x, f}^Z = \Phi_{x, f[x \mapsto \beta]}^Z$ for each $\beta \in \gamma$.
    \item \label{item:substi} For each (left) term~$s$ whose variables from $X_S$ are in $\{y \mid y \leq_S x \text{ or }y \geq_S x\}$, $\llbracket s \rrbracket^Z_{S, f} \cong \llbracket s\rrbracket^{Z \cup x^\downarrow}_{S_{> x}, f}[\llbracket y \rrbracket^Z_{S, f}/y]_{y \leq_S x}$.
    \item  \label{item:s_gamma}
    For each (left) term~$s$,
    $\llbracket s \rrbracket^Z_{S, \Delta_\lambda} \cong \llbracket s \rrbracket^Z_{S}$.
  \end{enumerate}
\end{lem}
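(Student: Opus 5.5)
Since $\llbracket - \rrbracket^Z_{S,f}$ is defined by induction on the size of the directed system and on $<_S$, I would prove the three items together by that same induction. Item~(\ref{item:y_gam}) is what makes the definition coherent, item~(\ref{item:substi}) is the substitution lemma underlying it, and item~(\ref{item:s_gamma}) is a consequence.

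\textbf{Item (\ref{item:y_gam}).} The first claim says that $\llbracket x \rrbracket^Z_{S,f}$ ignores the markers of variables strictly above $x$; these variables are exactly those of $S_{>x}$. By definition, $\Phi^Z_{x,f}$ is computed from $\llbracket q_S(x)\rrbracket^{Z\cup x^\downarrow}_{S_{>x},\Delta_\lambda}$. This already fixes every variable of $S_{>x}$ at stage $\lambda$, and it plugs in only $\llbracket y\rrbracket^Z_{S,f}$ for $y<_S x$. By the induction hypothesis on $<_S$, these values in turn depend only on $f$ restricted to $\{L_z \mid z \le_S y\}$. Hence $\Phi^Z_{x,f}$ depends only on $f(L_z)$ for $z<_S x$. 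Since $\llbracket x\rrbracket^Z_{S,f}$ is then either the $f(L_x)$-th stage of the initial chain of $\Phi^Z_{x,f}$ or its final coalgebra, it is unchanged when $f$ is modified at $L_y$ for $y>_S x$. This gives the first claim. The same dependency computation shows that $\Phi^Z_{x,f}$ does not read $f(L_x)$, which is the second claim.

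\textbf{Item (\ref{item:substi}).} I would argue by structural induction on $s$. The cases of constants, products, coproducts and function symbols follow because all constructors commute with substitution. Free variables in $Z$ are trivial. For a variable $y \le_S x$, both sides are literally $\llbracket y\rrbracket^Z_{S,f}$. For a variable $y \ge_S x$, i.e.\ $y \in S_{>x}$ (or $y = x$), I would unfold the definition. Over $S_{>x}$, the parameterised functor $\Phi_y$ is built from $q_S(y)$ with the variables below $x$ left as free parameters. Substituting $\llbracket z\rrbracket^Z_{S,f}$ for $z \le_S x$ commutes with forming the chain stages and (co)limits, since these are computed pointwise in the parameter $\rho$. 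The induction hypothesis on $q_S(y)$ then identifies the two resulting endofunctors, and hence their initial-chain stages and final coalgebras. This case is the main obstacle. It requires care to show that the isomorphisms are natural in the parameters, so that the chains and their colimits and limits agree up to isomorphism. It also requires checking that the $\Delta_\lambda$ convention in the definition of $\Phi$ matches $f$ on $S_{>x}$, which is supplied by item~(\ref{item:y_gam}).

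\textbf{Item (\ref{item:s_gamma}).} When $f=\Delta_\lambda$, every $\mu$-variable $x$ is interpreted as the $\lambda$-th stage of the initial chain of $\Phi^Z_{x,\Delta_\lambda}$. By the chain-convergence assumption, this stage is the initial algebra. Proceeding by induction along $<_S$, using item~(\ref{item:substi}) to see that $\Phi^Z_{x,\Delta_\lambda}\cong\Phi^Z_x$, I would obtain $\llbracket x\rrbracket^Z_{S,\Delta_\lambda}\cong\llbracket x\rrbracket^Z_S$ for every variable. The $\nu$-variables agree because both sides are the final coalgebra of isomorphic functors. The statement for arbitrary terms $s$ then follows by structural induction.
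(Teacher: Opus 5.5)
Your overall plan is the paper's own: a simultaneous induction on the size of the directed system and on $<_S$, structural induction on terms, and item (3) obtained from chain convergence at $\Delta_\lambda$. Items (1) and (2) go through essentially as you describe, with two small bookkeeping corrections in (2). First, for $y>_S x$ the hypothesis you actually use is item (2) for the variables $w$ with $x<_S w<_S y$ that are substituted into $\Phi_y$, not a hypothesis ``on $q_S(y)$''. Second, item (1) is not needed there, since both sides build $\Phi_y$ from the same $\llbracket q_S(y)\rrbracket^{Z\cup y^\downarrow}_{S_{>y},\Delta_\lambda}$ and, for $\mu$-variables, read the same stage $f(L_y)$.

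The step that fails as written is in item (3), where you derive $\Phi^Z_{x,\Delta_\lambda}\cong\Phi^Z_x$ from item (2). Item (2) relates the annotated semantics of $S$ and of $S_{>x}$ for one and the same $f$, so it never compares annotated with ordinary semantics. Unfolding the definitions, $\Phi^Z_{x,\Delta_\lambda}$ is $\llbracket q_S(x)\rrbracket^{Z\cup x^\downarrow}_{S_{>x},\Delta_\lambda}$ with $\llbracket y\rrbracket^Z_{S,\Delta_\lambda}$ substituted for $y<_S x$. Likewise, $\Phi^Z_x$ is $\llbracket q_S(x)\rrbracket^{Z\cup x^\downarrow}_{S_{>x}}$ with $\llbracket y\rrbracket^Z_S$ substituted. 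Your induction along $<_S$ handles the substituted values. However, comparing the two interpretations of $q_S(x)$ over $S_{>x}$ concerns the variables $>_S x$, and an induction along $<_S$ reaches those only after $x$. What you need is item (3) itself for the strictly smaller system $S_{>x}$ with free variables $Z\cup x^\downarrow$. This is the size component of the induction, so the statement must be quantified over all systems and all sets of free variables. It gives $\Phi^Z_{x,\Delta_\lambda}\cong\Phi^Z_x$ naturally in $\rho$, hence isomorphic initial and final chains. Only then does the chain-convergence assumption identify the $\lambda$-th stage with the initial algebra, since that assumption is stated for $\Phi^Z_x$ and not for $\Phi^Z_{x,\Delta_\lambda}$.
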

\begin{proof}[Proof sketch]
The proof is by the same induction as the definition of the semantics:
structural induction on terms, together with well-founded induction on
the lexicographic order consisting of the size of the directed system and
the strict order \(<_S\).  
The last statement
\eqref{item:s_gamma} follows from the chain-convergence assumption: at the
top assignment \(\Delta_\lambda\), the relevant ordinal approximants have
converged to the initial algebras or final coalgebras used in the
ordinary semantics.
\end{proof}

We now define an algebra
  $a'\colon
  F[P^{\gamma}](\Omega')
  \to
  \Omega'$ by interpreting the rules of $P^\gamma$.
  For rules other than fixed-point unfolding, the interpretation is given by the same way as in $a$.
  For the rule $L\mu x$ with $f\colon D_{S, T} \to \gamma$,
  by \autoref{lem:santocanale_sem}.\ref{item:y_gam},
  $\llbracket x \rrbracket_{S, f}^Z(\rho)$ is equal to the $f(L_x)$-th object of the initial chain of $\Phi_{x, f[x \mapsto \gamma]}^Z(\rho, \_)$.
  \autoref{lem:santocanale_sem}.\ref{item:substi} and \autoref{lem:santocanale_sem}.\ref{item:y_gam} induce that
  $\llbracket x \rrbracket_{S, f}^Z(\rho)$
  is isomorphic to 
  $\colim_{\beta < f(L_x)} \llbracket q_S(x)\rrbracket^Z_{S, f[x \mapsto \beta, z \mapsto \lambda]_{z >_S x}}(\rho)$
  because
  \begin{align*}
  &\llbracket q_S(x)\rrbracket^Z_{S, f[x \mapsto \beta, z \mapsto \lambda]_{z >_S x}}(\rho) \\
  &\cong \llbracket q_S(x)\rrbracket^{Z \cup x^\downarrow}_{S_{> x}, f[x \mapsto \beta, z \mapsto \lambda]_{z >_S x}}[\llbracket y \rrbracket^Z_{S, f[x \mapsto \beta, z \mapsto \lambda]_{z >_S x}}/y]_{y \leq_S x}(\rho) &\text{by }\autoref{lem:santocanale_sem}.\ref{item:substi} \\
  &\cong \llbracket q_S(x)\rrbracket^{Z \cup x^\downarrow}_{S_{> x}, \Delta_\lambda}[\llbracket y \rrbracket^Z_{S, f[x \mapsto \beta]}/y]_{y \leq_S x}(\rho) &\text{by }\autoref{lem:santocanale_sem}.\ref{item:y_gam} \\
  &\cong \Phi_{x, f}^Z(\rho, (\llbracket y \rrbracket^{Z}_{S, f})_{y < x}, \llbracket x \rrbracket^Z_{S, f[x \mapsto \beta]})  &\text{by }\autoref{lem:santocanale_sem}.\ref{item:y_gam}.
  \end{align*}
  Then the semantics for this rule is given by the universality of the colimit.

The construction of \(a'\) together with \autoref{lem:santocanale_sem}.\ref{item:s_gamma}
yields
  $a \cong \Alg{u^\gamma_*}(a')$.
Therefore \autoref{thm:soundness} provides that the proof system~$P$ is sound.

\section{Related Work} \label{sec:relatedwork}

\hspace*{\parindent}
\emph{a) Abstract cyclic proofs}:
As discussed in the introduction,
Afshari and Wehr introduced \emph{abstract cyclic proofs}~\cite{AfshariW22},
where the GTC is described in a categorical manner.
Building on their framework, Leigh and Wehr studied a proof translation from cyclic proofs with GTCs to those with local conditions called \emph{reset conditions},
aiming at more efficient validity checking of cyclic proofs~\cite{DBLP:journals/apal/LeighW24}.
Their work focuses primarily on algorithmic aspects,
and graph structures of paths and traces are represented using semi-categories and relations.
In contrast, we basically represent graphs as coalgebras of generalised polynomial functors.
This is 
motivated by our goal of establishing a soundness result within an abstract framework,
namely,
interpreting proofs as ca-morphisms.

\emph{b) Realisations}:
Kozen introduced \emph{realisations} as a way to generalise polynomial functors on $\Set$ 
to those with labels, representing type signatures as directed multigraphs
\cite{DBLP:journals/entcs/Kozen11}.
Rather than specifying graph structures by such graph formalisms,
we derive them as graphs of coalgebras in presheaves, following a natural extension of the construction on $\Set$ developed in~\cite{AdamekMM20}.
Jeannin et al.~\cite{DBLP:journals/mscs/JeanninKS17}
established an analogous result to \autoref{thm:wf} for functors presented via realisations,
whereas our approach applies more generally to functors on presheaves that are not necessarily polynomial.

\emph{c) Categorical global conditions in automata theory}:
There is a line of work that studies global conditions, such as B\"uchi and parity conditions,
from a categorical and coalgebraic perspective, particularly in automata theory~\cite{DBLP:conf/cmcs/UrabeH18,DBLP:conf/cmcs/CianciaV12}.
These approaches aim at providing a categorical formalisation of accepted behaviours.
While our GTC can be seen as a generalisation of B\"uchi conditions,
it serves a different purpose:
it is formulated as a property of a coalgebra itself to guarantee the existence of a ca-morphism.

\section{Conclusion} \label{sec:conclusion}
To establish a characterisation of the GTC in terms of recursiveness,
we developed a coalgebraic framework for non-wellfounded proofs by exploiting graphs of coalgebras in presheaves and the correspondence of ca-morphisms along adjunctions.
Within this framework,
we established a soundness theorem:
any pre-proof satisfying the GTC has a unique solution.
We also studied a relationship between recursiveness of coalgebras and the GTC, including a coalgebraic characterisation of GTC in terms of recursive coalgebras.

As a future work,
we plan to study further properties of proof systems,
such as completeness and cut-elimination, 
in our coalgebraic framework.
Another direction of future work is to formulate other soundness conditions for non-wellfounded proofs, such as reset conditions~\cite{DBLP:conf/lics/AfshariL17,DBLP:journals/apal/LeighW24} and bouncing threads~\cite{10.1145/3531130.3533375},
and to investigate the relationships between proof systems with these conditions and their
corresponding soundness theorems.

\section*{Acknowledgment}
  \noindent The author would like to thank Shin-ya Katsumata and Keisuke Hoshino for helpful discussions.
  The author is supported by JST ACT-X, Grant  No.~JPMJAX25CD.

\bibliographystyle{alphaurl}
\bibliography{mybib}

\appendix

\section{Omitted Proofs}
\subsection{Proof of \autoref{lem:coalg_to_alg}}
\begin{proof}
  Let~$c$ be an~$F$-coalgebra $C \to FC$ and~$a$ be an~$G$-algebra $GA \to A$.
  Here we shall write $\ol{(-)}$ for the adjoint transposition.

  Then
  a morphism~$i$ from~$c$ to $\Alg{R}(a)$  
  is a morphism $i\colon C \to RA$ such that
  $i = (C \xxrightarrow{c} FC \xxrightarrow{Fi} FRA \xxrightarrow{\sigma_A} RGA \xxrightarrow{Ra} RA)$,
  and
  $i'$ from $\Coalg{L}(c)$ to~$a$ is a morphism $i'\colon LC \to A$ such that $i' = (LC \xxrightarrow{Lc} LFC \xxrightarrow{\tau_C} GLC \xxrightarrow{Gi'} GA \xxrightarrow{a} A)$.
  These morphisms~$i$ and $i'$ correspond under the adjoint transposition:
  because of the correspondence between $\tau$ and $\sigma$,
  we have
  \begin{align*}
    i &= (C \xxrightarrow{c} FC \xxrightarrow{Fi} FRA \xxrightarrow{\sigma_A} RGA \xxrightarrow{Ra} RA)\\
    &= (C \xxrightarrow{c} FC \xxrightarrow{\ol{\tau_C}} RGLC \xxrightarrow{RG\ol{i}} RGA \xxrightarrow{Ra} RA) \\
    &= \left(\ol{LC \xxrightarrow{Lc} LFC \xxrightarrow{\tau_c} GLC \xxrightarrow{G\ol{i}} GA \xxrightarrow{a} A}\right).
  \end{align*}
  The second equality holds because
  $(LFC \xxrightarrow{\tau_C} GLC \xxrightarrow{G\ol{i}} GA)
  = (LFC \xxrightarrow{\tau_C} GLC \xxrightarrow{GLi} GLRA \xxrightarrow{G\epsilon_A} GA)
  = (LFC \xxrightarrow{LFi} LFRA \xxrightarrow{\tau_{RA}} GLRA \xxrightarrow{G\epsilon_A} GA)$
  shows $RG\ol{i} \circ \ol{\tau_C} = RG\epsilon_A \circ \ol{\tau_{RA}} \circ Fi = \sigma_A \circ Fi$.
\end{proof}

\subsection{Proof of \autoref{prop:vpoly_morph}} \label{ap:vpoly_morph}
Let us introduce the following two facts before going into the proof.
\begin{prop} \label{prop:equiv_ra}
  There is an equivalence between $[\cat{B}, \Psh{\cat{A}}]^\op$ and the category of right adjoints $\Psh{\cat{A}} \to \Psh{\cat{B}}$ and natural transformations between them.
  It is given by the nerve construction: 
  it maps a functor $F\colon \cat{B} \to \Psh{\cat{A}}$ to the right adjoint $R\colon \Psh{\cat{A}} \to \Psh{\cat{B}}$ defined by $R(X)(b) = \Psh{\cat{A}}(Fb, X)$ for each $X \in \Psh{\cat{A}}$ and $b \in \cat{B}$,
  and it maps a right adjoint $R\colon \Psh{\cat{A}} \to \Psh{\cat{B}}$ to $L \circ y\colon \cat{B} \to \Psh{\cat{A}}$
  where~$L$ is the left adjoint of~$R$ and $y\colon \cat{B} \to \Psh{\cat{B}}$ is the Yoneda embedding.
\end{prop}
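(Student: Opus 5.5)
The plan is to reduce the statement to two standard facts and then compose them. The first fact is the density of the Yoneda embedding: restriction along $y\colon \cat{B}\to\Psh{\cat{B}}$ is an equivalence between cocontinuous functors $\Psh{\cat{B}}\to\Psh{\cat{A}}$ and $[\cat{B},\Psh{\cat{A}}]$, with inverse given by left Kan extension $\mathrm{Lan}_y$. The second fact is that passing to adjoints, with natural transformations transported by mates, is a contravariant equivalence between left adjoints $\Psh{\cat{B}}\to\Psh{\cat{A}}$ and right adjoints $\Psh{\cat{A}}\to\Psh{\cat{B}}$. The proposition is then obtained by checking that the composite equivalence is the nerve construction described in the statement.

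First I will show that the nerve $R_F(X)(b)=\Psh{\cat{A}}(Fb,X)$ of a functor $F\colon\cat{B}\to\Psh{\cat{A}}$ is indeed a right adjoint. Its left adjoint will be the realisation
\[
L_F(Y)=\colim_{(b,y)\in\GfibTotal{Y}} Fb .
\]
The adjunction isomorphism $\Psh{\cat{A}}(L_FY,X)\cong\Psh{\cat{B}}(Y,R_FX)$ comes from writing $Y$ as the canonical colimit of representables and applying Yoneda at each $yb$. This also gives $L_F\circ y\cong F$. On morphisms, I will send $\alpha\colon F\Rightarrow F'$ to precomposition $R_{F'}\Rightarrow R_F$, which makes the construction a functor $[\cat{B},\Psh{\cat{A}}]^\op\to\mathrm{RAdj}$. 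In the other direction, a right adjoint $R$ with left adjoint $L$ is sent to $L\circ y$. A transformation $R\Rightarrow R'$ is sent to the mate $L'\Rightarrow L$ whiskered with $y$; this direction-reversal is exactly why the source category carries an ${}^\op$.

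Next I will check that both composites are isomorphic to the identity. Starting from $F$, we get $L_F y\cong F$ as above. Starting from $R$, the Yoneda lemma together with $L\dashv R$ gives, naturally in $X$ and $b$,
\[
R(X)(b)\cong\Psh{\cat{B}}(yb,RX)\cong\Psh{\cat{A}}(Lyb,X)=R_{Ly}(X)(b),
\]
so $R\cong R_{Ly}$. It then remains to see that these isomorphisms are natural in the morphisms of the respective categories. Equivalently, I will show full faithfulness: transformations $R_{F'}\Rightarrow R_F$ correspond bijectively to transformations $F\Rightarrow F'$. Mates turn the former into transformations $L_F\Rightarrow L_{F'}$ between cocontinuous functors, and density turns those bijectively into transformations $L_Fy\Rightarrow L_{F'}y$, that is, $F\Rightarrow F'$.

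The step I expect to be the main obstacle is bookkeeping rather than conceptual. One has to verify that the mate bijection composed with restriction along $y$ really agrees with plain precomposition by $\alpha$ on nerves, so that the functor on morphisms is the one stated and not merely some equivalence. I would do this by evaluating at $X$ and $b$ and chasing an element $g\colon F'b\to X$ through the units and counits. The Yoneda identifications should make this reduce to the observation that $g\mapsto g\circ\alpha_b$. Size issues do not arise, since $\cat{A}$ and $\cat{B}$ are small.
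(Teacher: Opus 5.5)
The paper gives no proof of this proposition. It lists it in the appendix as one of two standard facts (the nerve--realisation correspondence) and uses it without argument, so your proposal has to be judged on its own. It is correct. The realisation $L_F$ is left adjoint to $R_F$. You get $L_F y\cong F$ because $\GfibTotal{yb}\cong\cat{B}/b$ has the terminal object $(b,\mathrm{id}_b)$. The isomorphism $R\cong R_{Ly}$, obtained from Yoneda and $L\dashv R$, gives essential surjectivity. One small point: you do not separately need naturality of the object-level isomorphisms, since full faithfulness plus essential surjectivity already yields the equivalence. Also, the letter $y$ is doing double duty, as the Yoneda embedding and as an element in $(b,y)\in\GfibTotal{Y}$.

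The bookkeeping step you flag does work out. Write $\iota'_b\colon F'b\to L_{F'}yb$ for the colimit injection at $(b,\mathrm{id}_b)$. The unit of $L_{F'}\dashv R_{F'}$ at $yb$ is the element $\iota'_b\in\Psh{\cat{A}}(F'b,L_{F'}yb)$. Applying $(-)\circ\alpha$ gives $\iota'_b\circ\alpha_b$, and transposing along $L_F\dashv R_F$ shows that the mate, evaluated at $yb$, is $\iota'_b\circ\alpha_b\circ\iota_b^{-1}$. That is $\alpha_b$ up to the canonical identifications.

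You could skip mates and density for full faithfulness altogether. A transformation $\theta\colon R_{F'}\Rightarrow R_F$ is a family $\theta_{X,b}\colon\Psh{\cat{A}}(F'b,X)\to\Psh{\cat{A}}(Fb,X)$, natural in both $X$ and $b$. By the covariant Yoneda lemma in the locally small category $\Psh{\cat{A}}$, naturality in $X$ forces $\theta_{-,b}=(-)\circ\alpha_b$ for a unique $\alpha_b\colon Fb\to F'b$. Naturality in $b$ is then exactly naturality of $\alpha$: take $X=F'b_2$ and $g=\mathrm{id}$. This shows directly that the specific map $\alpha\mapsto(-)\circ\alpha$ is bijective, so no comparison of bijections is needed.

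Your longer route still earns something. It exhibits the equivalence as the composite of two standard equivalences: left adjoints $\Psh{\cat{B}}\to\Psh{\cat{A}}$ are equivalent to $[\cat{B},\Psh{\cat{A}}]$ via $\mathrm{Lan}_y$, and right adjoints are equivalent to the opposite of left adjoints via mates. This identifies the left adjoint of $R_F$ explicitly as $\mathrm{Lan}_y F$. The direct Yoneda argument is shorter but only needs the existence of $L_F$ to show that $R_F$ lands among right adjoints.
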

\begin{prop} \label{prop:corresp_natural}
  Let $f\colon \cat{A} \to \cat{A}'$, $g\colon \cat{B} \to \cat{B}'$,
  $P\colon \cat{A}^\op \times \cat{B} \to \Set$, $P'\colon \cat{A'}^\op \times \cat{B'} \to \Set$ be functors.
  There is a one-to-one correspondence among the following data:
  \begin{itemize}
    \item a natural transformation $\tau\colon P \Rightarrow P' \circ (f^\op \times g)\colon \cat{A}^\op \times \cat{B} \to \Set$,
    \item a natural transformation $\sigma\colon \ol{P} \Rightarrow f^* \circ \ol{P'} \circ g\colon \cat{B} \to \Psh{\cat{A}}$ where $\ol{P}\colon \cat{B} \to \Psh{A}$ and $\ol{P'}\colon \cat{B}' \to \Psh{\cat{A'}}$ are given by currying~$P$ and $P'$, respectively,
    \item a natural transformation $\rho\colon g^* \circ P'^\dagger \Rightarrow P^\dagger  \circ f^*\colon \Psh{\cat{A'}} \to \Psh{\cat{B}}$ where $P^\dagger\colon \Psh{\cat{A}} \to \Psh{\cat{B}}$ is given by $P^\dagger(X)(b) = \Psh{\cat{A}}(\ol{P}b, X)$ and $P'^\dagger$ is defined similarly.
  \end{itemize}
\end{prop}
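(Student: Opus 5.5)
The plan is to establish two bijections, $\tau \leftrightarrow \sigma$ and $\sigma \leftrightarrow \rho$. The first is plain currying. The second follows from \autoref{prop:equiv_ra} together with the adjunction $f_! \dashv f^*$.

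\emph{Step 1: $\tau \leftrightarrow \sigma$.} Currying gives an isomorphism $[\cat{A}^\op \times \cat{B}, \Set] \cong [\cat{B}, \Psh{\cat{A}}]$. Under it, $P$ goes to $\ol{P}$ and $P' \circ (f^\op \times g)$ goes to $b \mapsto \ol{P'}(gb)\circ f^\op = f^*\ol{P'}(gb)$. Natural transformations between corresponding functors are therefore in bijection. Explicitly, $(\sigma_b)_a = \tau_{(a,b)}\colon P(a,b) \to P'(fa, gb)$, and naturality of $\tau$ in $a$ and in $b$ matches naturality of $\sigma_b$ and of $\sigma$ respectively.

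\emph{Step 2: $\sigma \leftrightarrow \rho$.} Both $g^* \circ P'^\dagger$ and $P^\dagger \circ f^*$ are right adjoints $\Psh{\cat{A'}} \to \Psh{\cat{B}}$, being composites of right adjoints.
\begin{itemize}
\item By \autoref{prop:equiv_ra}, $P^\dagger$ is the nerve of $\ol{P}$, so its left adjoint $L_P$ satisfies $L_P \circ y \cong \ol{P}$.
\item The left adjoint of $P^\dagger \circ f^*$ is $f_! \circ L_P$. Its restriction along Yoneda is $f_! \circ \ol{P}$.
\item The left adjoint of $g^* \circ P'^\dagger$ is $L_{P'} \circ g_!$. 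Since $g_!(yb) \cong y(gb)$, its restriction along Yoneda is $\ol{P'} \circ g$.
\end{itemize}
\autoref{prop:equiv_ra} is an equivalence that reverses direction. Hence natural transformations $\rho\colon g^* P'^\dagger \Rightarrow P^\dagger f^*$ correspond bijectively to natural transformations $f_! \ol{P} \Rightarrow \ol{P'} g\colon \cat{B} \to \Psh{\cat{A'}}$.

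Transposing componentwise along $f_! \dashv f^*$ turns these into natural transformations $\ol{P} \Rightarrow f^* \ol{P'} g$, that is, into the data $\sigma$. Naturality in $b$ is preserved, because the adjunction bijection is natural.

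For concreteness, I would also record the resulting formula: $\rho_X(b)$ sends $\phi\colon \ol{P'}(gb) \to X$ to $f^*\phi \circ \sigma_b\colon \ol{P}b \to f^*X$. This lets one verify directly that the composite bijection is the expected one.

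The main obstacle is the bookkeeping in Step 2. I need to make sure the isomorphisms $L_P \circ y \cong \ol{P}$ and $g_! \circ y \cong y \circ g$ are natural in $b$, so that the correspondence is a genuine bijection of natural transformations and not only a componentwise one. My first approach would be to check this via the explicit formula above, using Yoneda in the variable $X$: a transformation $\Psh{\cat{A'}}(\ol{P'}(gb), -) \Rightarrow \Psh{\cat{A'}}(f_!\ol{P}b, -)$ is determined by the image of the identity on $\ol{P'}(gb)$.
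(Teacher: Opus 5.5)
Your proof is correct. The paper gives no proof of this proposition to compare against. It is recorded in the appendix as a standard fact and used, together with \autoref{prop:equiv_ra}, in the proof of \autoref{prop:vpoly_morph}.

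Your decomposition is the natural way to fill that gap: currying for $\tau\leftrightarrow\sigma$, then the nerve equivalence followed by transposition along $f_!\dashv f^*$ for $\sigma\leftrightarrow\rho$. The variance in Step~2 is right, since $\rho\colon g^*P'^\dagger\Rightarrow P^\dagger f^*$ corresponds to $f_!\ol{P}\Rightarrow\ol{P'}g$. The explicit formula $\rho_X(b)(\phi)=f^*\phi\circ\sigma_b$ is indeed what the composite bijection produces.

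The Yoneda argument in your last paragraph is already a complete, self-contained proof of $\sigma\leftrightarrow\rho$. It also makes the coherence worries about $L_P\circ y\cong\ol{P}$ and $g_!\circ y\cong y\circ g$ unnecessary.
\begin{itemize}
\item For fixed $b$, $\rho_{(-)}(b)$ is a natural transformation out of the representable $\Psh{\cat{A'}}(\ol{P'}(gb),-)$. Hence it equals $\phi\mapsto f^*\phi\circ\sigma_b$ with $\sigma_b\coloneqq\rho_{\ol{P'}(gb)}(b)(\id)$.
\item Each $\rho_X$ must be a map of presheaves on $\cat{B}$. Testing this at $X=\ol{P'}(gb_2)$ and $\phi=\id$, for $h\colon b_1\to b_2$, gives exactly $f^*\ol{P'}(gh)\circ\sigma_{b_1}=\sigma_{b_2}\circ\ol{P}h$, i.e.\ naturality of $\sigma$.
\item The converse follows by precomposing with $f^*\phi$.
\end{itemize}

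The abstract route through \autoref{prop:equiv_ra} is still worth keeping as the conceptual explanation. It is exactly the move the paper makes in \autoref{prop:vpoly_morph}: passing from a transformation between right adjoints to one between their left adjoints restricted along Yoneda. The elementary Yoneda route is the one that needs no further bookkeeping.
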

\begin{proof}[Proof of \autoref{prop:vpoly_morph}]
1) Because right adjoints preserve $1$,
the middle square of \eqref{eq:poly_morph_nat} induces an isomorphism at $1$.
Therefore, $\tau_1$ is an isomorphism.

2)
We briefly sketch the construction of a diagram as in \eqref{eq:poly_morph} from $\tau$.
Since $\tau_1$ is an isomorphism,
$\tau$ can be decomposed as the following left diagram by naturality of $\tau$.
\[
\xymatrix@R=1.7em@C=1.7em{
  \widehat{\mathbb{A}'} 
    \ar[r]^-{F'_1} &\widehat{\mathbb{A}'}/F'1 
    \ar[r] &\widehat{\mathbb{A}'} \\
  \widehat{\mathbb{A}} 
    \ar[u]^{u^*}
    \ar[r]_-{F_1} &\widehat{\mathbb{A}}/F1 \ar@{=>}[ul]^\tau
    \ar[u]_{u^*}
    \ar[r] &\widehat{\mathbb{A}}
    \ar[u]_{u^*}
} \quad 
\xymatrix@R=1.7em@C=1.7em{
  \widehat{\mathbb{A}'}/F'1 \ar[r]_{\simeq}
    &\Psh{\mathbb{A}'/F'1} 
     \\
  \widehat{\mathbb{A}}/F1 \ar[u]_{u^*}  \ar[r]_{\simeq}
    &\PshNonSmash{\mathbb{A}/F1} \ar[u]^{v^*}
} 
\]
Let 
$v\colon \cat{A}'/F'1 \to \cat{A}/F1$ be the functor defined by $v(a', x) \coloneqq (ua', \tau^{-1}(x))$ for each $a' \in  \cat{A}'$ and $x \in (F'1)(a')$, and $v(f) \coloneqq uf$ for each morphism~$f$.
By \autoref{prop:equiv_ra} and \autoref{prop:corresp_natural},
the natural transformation $\tau\colon u^* \circ F_1 \Rightarrow F'_1 \circ u^*$
with the right above diagram yields
a corresponding natural transformation $\alpha$ as in the following diagram.
\[
\xymatrix@R=1.7em@C=1.7em{
  \widehat{\mathbb{A}'} 
    &\mathbb{A}'/F'1 \ar[l]^{\,}="top"
    \ar[d]^{v} 
    \ar[r]  \pullbackmark{0, 1}{1, 0} &\mathbb{A}' 
    \ar[d]_{u} \\
  \widehat{\mathbb{A}} 
    \ar[u]^{u^*}
    &\mathbb{A}/F1 \ar[l]_{\,}="bot"
    \ar[r] &\mathbb{A}
    \ar@{=>}^\alpha"top"+<0pt,-4pt>; "bot"+<0pt,4pt>
}
\]
By \autoref{prop:equiv_ra},
the top and bottom horizontal functors above are 
functors appeared in the construction of polynomials corresponding to $F'$ and~$F$, respectively, as given after \autoref{prop:pra_poly}.
The natural transformation $\alpha$ 
corresponds to a morphism between profunctors, 
which further corresponds to
a morphism 
between two-sided discrete fibrations~\cite[Thm.~25]{ROSEBRUGH1988271}.
\end{proof}
\end{document}